\documentclass[a4paper,11pt]{article} 
\usepackage{mlmodern}
\usepackage{math}
\usepackage[final]{microtype}
\usepackage{amsmath}
\usepackage{amsthm}
\usepackage{graphicx}
\usepackage[all]{xy}
\usepackage{names}
\usepackage{diagram}
\usepackage{xparse}
\usepackage{expl3}
\usepackage{preamble}
\usepackage{thmtools}
\usepackage{thm-restate}
\usepackage{mathpartir}
\usepackage[raiselinks=false,colorlinks=true,citecolor=gray,urlcolor=gray,linkcolor=gray,bookmarksopen=true]{hyperref}
\usepackage[capitalise]{cleveref}
\newtheorem{thm}{Theorem}

\newtheorem{ax}{Axiom}
\newtheorem{defn}[thm]{Definition}
\newtheorem{ex}[thm]{Example}

\newtheorem{cor}[thm]{Corollary}
\newtheorem{prop}[thm]{Proposition}
\newtheorem{rem}[thm]{Remark}
\Crefname{ex}{Example}{Examples}
\Crefname{lem}{Lemma}{Lemmas}
\Crefname{thm}{Theorem}{Theorems}
\Crefname{defn}{Definition}{Definitions}
\Crefname{cor}{Corollary}{Corollaries}
\Crefname{prop}{Proposition}{Propositions}
\Crefname{rem}{Remark}{Remarks}
\crefformat{footnote}{#2\footnotemark[#1]#3}

\usepackage{url}
\begin{document}

  \title{Powerdomains and nondeterminism in synthetic domain theory} 						%
  \author{Yue Niu\\
    National Institute of Informatics, Tokyo, Japan\\
    \texttt{yue\_niu@nii.ac.jp}
    \and
    Taro Sekiyama\\
    National Institute of Informatics, SOKENDAI, Tokyo, Japan\\
    \texttt{sekiyama@nii.ac.jp}}	%
   \maketitle
\begin{abstract} 
    \emph{Synthetic domain theory} is an axiomatization of domain theory within a constructive universe of sets such that all definable maps between domains are continuous. In this paper we construct the counterparts to the well-known lower, upper, and convex powerdomains in the setting of synthetic domain theory and prove that they produce computationally adequate denotational models of nondeterminism. By developing the theory of powerdomains in synthetic domain theory, we obtain a nondeterministic metalanguage that directly embeds into dependent type theory, where the latter serves as an expressive logic for reasoning about the metalanguage. Moreover, the computational adequacy results imply that denotational reasoning through the metalanguage may be used to study operational behaviors of actual programs. 
\end{abstract}

\section{Introduction}\label{intro}

As a mathematical theory or subfield develops, there is often a proliferation of similar but technically distinct structures whose details of construction can distract from the subject matter at hand. In the case of domain theory, such growing pains began to be addressed by the flourishing development of \emph{synthetic domain theory}~\cite{hyland:1991} in the early 1990s; more recently, there has been a resurgence of work~\cite{sterling-harper:2022,niu-sterling-harper:2024,matache-moss-staton:2022} using synthetic domain theory to provide elegant semantics for computational phenomena that have been challenging to study analytically.

In this paper we continue the program of synthetic domain theory by investigating \emph{powerdomain constructions} and using them to define the denotational semantics of nondeterministic programs. This addresses an apparent gap in the literature: although powerdomain constructions are very well studied in \emph{analytic} domain theory~\cite{plotkin:1976,hennessy-plotkin:1979,smyth:1976,smyth:1983}, their counterparts have received almost no attention in synthetic domain theory. The only prior work we are aware of is the characterization by Phoa and Taylor~\cite{phoa-taylor:1990} of the order structure of the \emph{Plotkin/convex powerdomain}~\cite{plotkin:1976} in synthetic domain theory. Building on their work, we define the synthetic counterparts of the \emph{Hoare/lower powerdomain}~\cite{milne-milner:1979} and \emph{Smyth/upper powerdomain}~\cite{smyth:1976} and show that all three satisfy suitable computational adequacy properties with respect to standard operational semantics. In the following we recall the basic themes of synthetic domain theory, motivate our approach, and discuss how it relates to existing work in the area.

\subsection{Motivation and main ideas}

\subsubsection{Combining programming and reasoning}

Synthetic domain theory, like its sibling fields of synthetic differential geometry and synthetic homotopy theory, is a particular way of axiomatizing domain theory based on the internal language of an arbitrary topos $\ECat$, \ie{} a constructive set theory or dependent type theory. The basic idea is to isolate a full subcategory of predomains $\CCat \hookrightarrow \ECat$ of the ambient topos that is both (internally) complete and cocomplete, thereby furnishing a semantic domain rich enough to interpret many basic domain-theoretic constructions. Fullness means that \emph{every} function between predomains in the subuniverse is continuous (in a suitable sense), a routine but important side condition that can cause subtle errors in arguments if neglected. Thus synthetic domain theory is not only a conceptual reorganization of the content of analytic domain theory but also an approach to domain theory that is potentially more amenable to formalization. Moreover, because the subuniverse of domains embeds into an ambient mathematical universe of \emph{general} sets, we are immediately equipped with a familiar logic (either in terms of sets or types) that allows us to reason \emph{naively} about domains; for instance, one may identify within the subuniverse those domains that correspond to some class of extant analytic domains~\cite{fiore-plotkin:1996,fiore-rosolini:1997:cpos}. Thus a theory of powerdomains in synthetic domain theory forms the semantic substrate for a framework in which one may simultaneously program with nondeterministic effects and reason about the resulting programs (by means of \eg{} temporal logics).

\subsubsection{Synthetic powerdomains}

Following the ethos of synthetic domain theory, we define the synthetic analogues of the classic powerdomain constructions as free algebras for suitable semilattice theories. Indeed this was carried out by Phoa and Taylor~\cite{phoa-taylor:1990} for the convex powerdomain construction, which may be characterized as the free semilattice in the category of predomains. On the other hand, the theories of the lower and upper powerdomains extend that of the convex powerdomain with the \emph{in}equalities $S \le S \vee T$ and $S \vee T \le S$, respectively~\cite{hennessy-plotkin:1979}. In analytic domain theory such order-enriched theories are sensible because the notion of order is \emph{conceptually prior} to the notion of (pre)domains. In contrast, more work is required to reveal the intrinsic order structure of synthetic (pre)domains. As a first pass, one may axiomatize an \emph{interval} \I{} containing two points $0, 1:\I$ and define the \emph{path relation} on a type $X$, where a path $\alpha : x \leadsto y$ is just a function $f : \I \to X$ such that $f(0) = x$ and $f(1) = y$. However, the path relation need not be a partial order or even transitive in general~\cite{fiore:1997}. This may be rectified by calibrating the class of predomains so that the path relation on a predomain is always a partial order. Accordingly, the synthetic lower and upper powerdomains, should they exist, ought to be given by left adjoints from the category of predomains to the respective categories of semilattice algebras. Indeed, we prove that these left adjoints may be defined by means of \emph{quotient inductive types}~\cite{fiore-pitts-steenkamp:2021}, a generalization of inductive types whose constructors may include equations. In particular, we show that the aforementioned inequalities may be encoded in terms of paths, which is a purely equational notion.

\subsubsection{Observational semantics, operational semantics, and computational adequacy}

As an application of the theory developed here, we define three denotational models of a nondeterministic version of \pcf{} corresponding to the lower, upper, and convex synthetic powerdomains. As in the tradition of denotational semantics, for each model we prove the \emph{computational adequacy} property stating that the denotational and operational semantics agree at the base type. To streamline the proof, we introduce an intermediate semantics called the \emph{observational semantics} that allows us to factor out reasoning that is uniform across all three powerdomains. Intuitively, the idea behind the observational semantics is to ``chain together'' the small-step transitions of the operational semantics using the semantic powerdomain constructions; thus, in a sense, we may think of the observational semantics as being halfway between the denotational and operational semantics. Consequently, the adequacy proof consists of two steps. First, we prove that all three denotational semantics coincide with the corresponding observational semantics by a familiar logical-relations construction, using the \emph{semantic $\top\top$-lifting}~\cite{katsumata:2005} for the computational effect of nondeterminism. Next, we derive the specific correspondence between the observational semantics and the operational semantics for each powerdomain. Here, because the observational semantics is closely related to the operational semantics, the results follow from straightforward fixed-point induction proofs.

\subsection{Related work}

\subsubsection{Synthetic powerdomain constructions}\label{sec:synthetic-powerdomains}
As already mentioned, Phoa and Taylor's work on the synthetic Plotkin/convex powerdomain is the only prior work we are aware of in this setting. Our own work extends their approach to account for the lower and upper powerdomains. In addition, whereas we focus on the role of powerdomains in denotational semantics, the main contribution of Phoa and Taylor~\cite{phoa-taylor:1990} is the characterization of the \emph{observational order} on the synthetic convex powerdomain, where $x \le y$ in the observational order when $f(x) = \top$ implies $f(y) = \top$ for all $f : X \to \I$. They showed that, on the convex powerdomain, the \emph{Egli-Milner order} is contained in the observational order and the converse inclusion holds up to a double negation. Although the observational order need not coincide with the path order in general, we speculate that their method may be adapted to show that the Egli-Milner order is also contained in the path order, assuming that the interval \I{} satisfies enough \emph{locality principles} in the sense of Sterling and Ye~\cite{sterling-ye:2025}; similar results for the lower and upper powerdomains may also be conjectured.

\subsubsection{Computational adequacy in synthetic domain theory}

There are several approaches to proving computational adequacy for \pcf{}-like languages in synthetic domain theory~\cite{simpson:1999,simpson:2004,sterling-harper:2022,matache-moss-staton:2022,niu-sterling-harper:2024}. Our approach is closest in spirit to Simpson's early work~\cite{simpson:1999}, which proves an adequacy theorem for \pcf{} internal to a synthetic domain theory topos. To define the operational semantics, Simpson~\cite{simpson:1999} assumes that the natural numbers object \Nat{} of the ambient topos is a predomain, but his later work~\cite{simpson:2004} shows that this assumption may be eliminated by defining the operational semantics using the \emph{computational natural numbers} (the natural numbers object in the category of predomains, which need not be preserved by the inclusion into the ambient topos).

In this paper we follow the approach of Niu, Sterling, and Harper~\cite{niu-sterling-harper:2024}, who work in terms of the aforementioned observational semantics (called the computational semantics in that work). We prefer this approach because it lets us show that the denotational and observational semantics coincide without assuming \Nat{} is a predomain; moreover, whenever this assumption holds, the observational semantics coincides with the operational semantics. Thus the observational semantics provides a finer-grained perspective on the computational adequacy property.

We also find the approach of defining the operational semantics purely in terms of the (computational) natural numbers somewhat unsatisfying. Traditional programming-language syntax can be encoded using natural numbers, but more expressive languages such as dependent type theories are increasingly defined as \emph{equational theories}; at this level, it seems onerous to justify one's syntax in terms of natural numbers. On the other hand, natural numbers play a key role in interpreting adequacy theorems internal to a synthetic domain theory topos in external mathematics. Indeed, Simpson~\cite{simpson:1999} identifies a purely logical criterion involving propositions of the form $\exists n :\Nat.~p(n)$ in the topos that ensures that the internal adequacy theorem implies an \emph{external} adequacy property (\ie{} for \pcf{} defined in \SET). We do not prove any external adequacy results in this paper, and we leave it to future work to find a similar criterion in our setting that relates internal and external adequacy properties.

\subsubsection{Powerdomains in type theory}

A more distant relative is the work of Møgelberg and Vezzosi~\cite{modelberg-vezzosi:2021} on \emph{guarded} powerdomains in \emph{guarded type theory}~\cite{sgdt:2011}. Guarded type theory is an axiomatization of guarded recursion~\cite{nakano:2000} in which types may be thought of as synthetic \emph{guarded} domains. In general, the approaches of synthetic domain theory and synthetic guarded domain theory are somewhat incomparable, since the former is modeled in domains whereas the latter is modeled in certain complete metric spaces. Practically speaking, guarded domains have the advantage of being closed under all type-theoretic connectives, making them useful for modeling very sophisticated computational effects such as higher-order store~\cite{sterling-gratzer-birkedal:2023}. On the other hand, maps of guarded domains (and guarded recursive domain equations) only possess fixed points up to \emph{weak bisimulation}, which can complicate semantic developments. For instance, the guarded powerdomain of Møgelberg and Vezzosi~\cite{modelberg-vezzosi:2021} for must-convergence is only (conjectured to be) a monad up to weak bisimulation. In contrast, the powerdomain constructions in this paper all carry canonical monadic structures in the ambient equational theory, rendering synthetic powerdomains a more ergonomic metalanguage for nondeterminism.

\subsection{Contributions and outline}

Our contributions may be summarized as follows.

\begin{enumerate}
    \item We extend the method of Phoa and Taylor to construct the lower and upper powerdomains in synthetic domain theory by means of quotient inductive types.
    \item We show that these are \emph{powerdomain constructions} (in a technical sense we discuss in \cref{sec:powerdomains}); in particular this means that they preserve lift algebras, a nontrivial property in synthetic domain theory.
    \item We use the resulting constructions to define denotational and observational semantics for a nondeterministic version of \pcf{} for each of the lower, upper, and convex powerdomains.
    \item We show that these denotational and observational semantics coincide by a logical-relations argument, and that the observational and traditional operational semantics coincide when the interval object \I{} is closed under enough joins, thereby establishing a condition under which computational adequacy holds.
\end{enumerate}

This also provides a rough outline of the paper. In \cref{sec:sdt} we provide the necessary background on synthetic domain theory. In \cref{sec:powerdomains} we define what constitutes a powerdomain construction; we recall the construction of the synthetic convex powerdomain of Phoa and Taylor~\cite{phoa-taylor:1990} and, following an analogous method, define the lower and upper counterparts, showing that all three are indeed powerdomain constructions. In \cref{sec:den-sem} we use the resulting powerdomains to define the denotational semantics of a nondeterministic version of \pcf{} called \pcfnd{}. In \cref{sec:op-sem} we give the operational semantics of \pcfnd{} in terms of a traditional small-step transition semantics and use it to define may and must evaluation. In \cref{sec:ob-sem} we define the \emph{observational semantics} that act as a bridge between the denotational and operational semantics, and we prove the \emph{observational adequacy} theorem stating that denotational and observational semantics coincide. This coincidence is exploited in \cref{sec:adequacy} to show that the denotational and operational semantics coincide whenever \I{} is closed under enough joins. In \cref{sec:conclusion} we describe models of synthetic domain theory in which our results hold and suggest future work.

\section{Synthetic domain theory}\label{sec:sdt}

In the following we work in the constructive mathematics of a topos \ECat{} equipped with a distinguished object \I{}. 

\subsection{Partial maps and observations}

The logical role of \I{} is to classify subsets that occur as the \emph{domain of definition of partial maps}. 

\begin{defn}
    For a stable class $\mathcal{M}$ of monomorphisms closed under identity and composition (referred to as a \emph{dominion} in \cite{rosolini:1986}), an \emph{$\mathcal{M}$-partial map} $X \to Y$ is a span $X \hookleftarrow U \to Y$ such that the domain of definition or termination support $U \hookrightarrow X$ is in $\mathcal{M}$.
\end{defn}

\begin{defn}
    A \emph{dominance} is a collection of propositions $D$ for which there is a dominion $\mathcal{M}$ such that every monomorphism in $\mathcal{M}$ has a characteristic map factoring through $D$. 
\end{defn}

In type-theoretic language a dominance is a collection of propositions $D$ containing $\top$ and closed under dependent sums; for $\phi : D$ and $f : (\phi = \top) \to D$ we write $\phi \mathbin{\angle} f : D$ for the dependent sum $\Sigma_{\phi:D}.~f : D$. 

\begin{ax}
  The object \I{} forms a dominance. 
\end{ax}

\begin{defn}
    Let $D$ be a dominance associated to some dominion $\mathcal{M}$. For any $Y$, a \emph{$D$-partial map classifier} for $Y$ is a map $\eta : Y \hookrightarrow L Y$ such that for any $\mathcal{M}$-partial map $X \hookleftarrow U \to Y$ there is a unique characteristic map $X \to L Y$ such that $U \hookrightarrow X$ is the pullback of $\eta : Y \hookrightarrow L Y$ along $X \to L Y$.
\end{defn}

In dependent type theory we may define the $\I{}$-partial map classifier as $L X \defeq \Sigma_{i : \I}. (i = \top) \to X$; this definition is functorial and carries a monad structure $\LL = (L, \eta_\LL, \mu_\LL)$. 

\begin{ex}\label[ex]{ex:dominance}
  The subobject classifier $\Omega$ and the type $2 = \MkSet{\bot,\top}$ are dominances; the associated dominion consists of all subsets for the former and decidable subsets for the latter. 
\end{ex}

Logically we view \I{} as the \emph{classifier of observations} --- every subset $S \subseteq X$ whose membership is \emph{observable} has a characteristic map factoring through \I{}. In terms of topology this means that we may view maps $X \to \I{}$ as \emph{open subsets} $S \subseteq_\I X$. Accordingly these maps induce a preorder usually referred to as the \emph{specialization preorder} on $X$. 

\begin{defn}\label{def:observational-preorder}
    Given $x,y: X$, we have that $x \le y$ on the \emph{observational preorder} if and only if $(f x = \top) \to (f y = \top)$ for all maps $f : X \to \I{}$.
\end{defn}

By way of the specialization topology on the observational preorder, the choice of \I{} determines a \emph{global} topological structure on types for which all maps are continuous. The goal is to ensure that there are enough \I{}-partial maps for the purposes of domain theory but not so many that they lack observational interpretation. Observe that neither of the dominances from \cref{ex:dominance} fulfills these goals. If we pick $\I{} = \Omega$ to consist of all propositions then the observational preorder becomes distinct from the implication order; in particular the negation map $\Omega \to \Omega$ is not monotone with respect to the implication order and moreover cannot possess a fixed point. A similar problem happens when we set $\I{} = 2$; in addition, the class of $2$-partial maps $X \to L Y$ actually corresponds to total maps $X \to Y + 1$ and does not admit all maps needed for domain theory and denotational semantics. 

The art of synthetic domain theory lies in picking a dominance $2 \subseteq \I \subseteq \Omega$ so that we can represent enough maps without admitting too many. In some models of synthetic domain theory (such as those based on realizability~\cite{rosolini:1986}) one may give an internal description of a dominance \I{} satisfying the aforementioned goals (for instance as the class of \emph{semidecidable} propositions). In contrast, the dominances in sheaf models~\cite{fiore-rosolini:1997,fiore-plotkin:1996,sterling-harper:2022} are not known to have such internal presentations. In this paper we axiomatize the dominance \I{} so that the results obtained may be instantiated in both kinds of models.

\subsection{Interval and paths}

Dual to the logical role of \I{} as the classifier of observations is its geometric role as an \emph{interval object}. 

\begin{ax}
    The dominance \I{} is closed under $\bot$.  
\end{ax}

Under this axiom we may view $\I$ as an interval whose boundary is $0 \defeq \bot$ and $1 \defeq \top$.

\begin{defn}
  A \emph{path} $\alpha : x \leadsto y$ in a type $X$ is a map $\alpha : \I \to X$ such that $\alpha(0) = x$ and $\alpha(1) = y$; we write $\partial \alpha = (\alpha~ 0, \alpha~ 1)$ for the \emph{boundary} of a path and $(x \leadsto y) \mathbin{@} i$ for the function application $\alpha~i$.
\end{defn}

\begin{defn}
    A type is \emph{boundary separated} when there is at most one path with a given boundary. 
\end{defn}

In contrast to the observational preorder, paths $x \leadsto y \leadsto z$ need not compose or even form a relation in general, \ie{} be boundary separated. Nonetheless in this paper it is actually paths that play the critical role of evincing the order-theoretic structure of \emph{synthetic domains} (to be defined in \cref{sec:predomains}), which are distinguished types on which the path relation is well-behaved and share properties familiar from analytic domain theory. Although the observational preorder and path relation need not coincide for general types, they both coincide with the implication order on \I{} when we assume \emph{Phoa's principle}~\cite{phoa:1991}. 

\begin{defn}
    The interval \I{} satisfies \emph{Phoa's principle} when every pair $(\phi,\psi) : \I^2$ with $(\phi = \top) \to (\psi = \top)$ corresponds to a unique path $\I \to \I$ whose boundary is $(\phi,\psi)$.
\end{defn}

\begin{ax}
    The interval \I{} satisfies Phoa's principle.
\end{ax}

An important consequence of Phoa's principle is that all three relations on the interval coincide. 

\begin{prop}
    The observational preorder, implication order, and path relation on \I{} all coincide. 
\end{prop}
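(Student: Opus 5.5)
We prove that the three relations on $\I$ coincide by a cycle of inclusions: implication order $\Rightarrow$ path relation $\Rightarrow$ observational preorder $\Rightarrow$ implication order. Throughout, for $\phi,\psi:\I$ write $\phi \sqsubseteq \psi$ for $(\phi=\top)\to(\psi=\top)$.

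\emph{Implication order to path relation.} Suppose $\phi \sqsubseteq \psi$. Phoa's principle directly supplies a (unique) path $\alpha:\I\to\I$ with $\partial\alpha = (\phi,\psi)$, that is, $\alpha:\phi\leadsto\psi$. The uniqueness clause also shows that $\I$ is boundary separated along such pairs, so the path relation is a genuine relation here.

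\emph{Path relation to observational preorder.} Suppose $\alpha : \phi \leadsto \psi$, and let $f:\I\to\I$ be arbitrary. Then $f\circ\alpha : \I\to\I$ is a path from $f\phi$ to $f\psi$. It therefore suffices to show that every path $\beta:\I\to\I$ satisfies $\beta(0)\sqsubseteq\beta(1)$, since then $f\phi=\top$ implies $f\psi=\top$.

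To prove this, apply Phoa's principle in the opposite direction, characterizing all maps $\I\to\I$. Given $\beta$, set $a=\beta(0)$ and $b=\beta(1)$.
\begin{itemize}
  \item If $a=\top$, consider the map $i\mapsto \beta(i)$ together with the pair $(a,a)$ and the pair $(a,b)$.
  \item A cleaner route: for each $i:\I$ we have $0\sqsubseteq i$. Assume $\beta(0)=\top$. We must show $\beta(1)=\top$.
\end{itemize}
The intended reading of Phoa's principle is that \emph{every} map $\I\to\I$ arises as the unique path attached to some pair $(\phi,\psi)$ with $\phi\sqsubseteq\psi$, namely its boundary. Every path then has boundary in the implication order, which is exactly what we need. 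If the principle is read only as the stated bijection between ordered pairs and their paths, then surjectivity onto all maps $\I\to\I$ (the standard formulation $\I^\I\cong\{(\phi,\psi)\mid \phi\sqsubseteq\psi\}$ via $\partial$) is precisely what must be invoked. I expect this step to be the main obstacle: one must make sure the axiom is used in its full form, namely that $\partial : \I^\I \to \I^2$ is injective with image exactly the implication order, rather than only the existence half.

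\emph{Observational preorder to implication order.} Suppose $\phi\le\psi$ observationally. Take $f=\mathrm{id}_\I$, which is a map $\I\to\I$. The definition of the observational preorder then yields $(\phi=\top)\to(\psi=\top)$, that is, $\phi\sqsubseteq\psi$. This closes the cycle, so all three relations on $\I$ coincide.
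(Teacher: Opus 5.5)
Your argument is correct and is the natural one behind the paper's claim; the paper gives no separate proof, only calling the result a consequence of Phoa's principle. The existence half of the principle gives implication $\Rightarrow$ path. The full form $\I^\I \cong \{(\phi,\psi) \mid (\phi=\top)\to(\psi=\top)\}$ via $\partial$ makes every map $\I\to\I$ monotone, which gives path $\Rightarrow$ observational. Taking $f=\mathrm{id}$ closes the cycle.

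You are right to insist on the full form. The bare clause ``each ordered pair has a unique path'' already holds for $\I=2$, where negation is a path $\top\leadsto\bot$. So your two abandoned bullet attempts cannot be completed from that clause alone and should simply be deleted.
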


\subsection{Order structure, localizations, and reflective subcategories}

As a first approximation, a synthetic predomain is a type that is cocomplete with respect to joins of suitable shapes. This is analogous to the closure under $\Nat$-joins in the case of \ocpo{}s. Moreover, since the universal characterizations of the upper and lower powerdomains are order-theoretic~\cite{hennessy-plotkin:1979}, we want to ensure that the path relation on a synthetic predomain is actually a partial order. We may isolate such a class by means of a \emph{localization} --- a way to identify types that ``believe'' certain maps are isomorphisms. 

\begin{defn}\label[defn]{def:local}
    A type $X$ is \emph{(internally) local} with respect to a map $f : A \to B$ when the induced precomposition map $X^f : X^B \to X^A$ is an isomorphism; dually such an $f$ is called an \emph{$X$-isomorphism}. 
\end{defn}

Observe that $X$ is local with respect to $f : A \to B$ when every map $g : A \to X$ has a \emph{unique extension} $\overline{g} : B \to X$, where an extension is a map $h : B \to X$ such that $hf = g$. For a given collection $\mathcal{F}$ of maps we say $X$ is $\mathcal{F}$-local when it is local with respect to every map in $\mathcal{F}$. In a Cartesian closed category $\CCat$ we write $\CCat_\mathcal{F}$ for the full subcategory of (internally) $\mathcal{F}$-local objects. The central axiom of synthetic domain theory involves locality with respect to the canonical inclusion map $\omega \hookrightarrow \overline{\omega}$, where $\omega$ is the initial lift algebra and $\overline{\omega}$ is the final lift coalgebra.

\begin{defn}
    A type is \emph{complete} when it is local with respect to $\omega \hookrightarrow \overline{\omega}$. 
\end{defn}

Here it may be helpful to note the formal analogy between the completeness property and the fact that in the category of \ocpo{}s every object is local with respect to the inclusion $\MkSet{0 \le 1 \dots} \hookrightarrow \MkSet{0 \le 1 \dots \le \infty}$. 

\begin{prop}\label[prop]{prop:fixed-points}
    Let $X$ be a complete type that is equipped with an algebra for the lift monad \LL{}; then every map $X \to X$ possesses a fixed point.
\end{prop}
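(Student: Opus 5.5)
The plan is the standard synthetic argument: build the fixed point from the initial lift algebra $\omega$ and the final lift coalgebra $\overline{\omega}$. Write $\alpha : L X \to X$ for the given algebra structure and $f : X \to X$ for the map. Then $f \circ \alpha : L X \to X$ is again an $\LL$-algebra structure on $X$ (as a functor-algebra, not necessarily a monad-algebra, which is all initiality of $\omega$ needs). Initiality of $\omega$ among lift algebras gives a unique algebra map $g : \omega \to X$ satisfying $g \circ \sigma = f \circ \alpha \circ L g$, where $\sigma : L\omega \to \omega$ is the structure map. Concretely $g$ sends the $n$-th element of $\omega$ to $(f\alpha)^n(\alpha\bot)$, so $g$ is the chain of approximants.

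By completeness of $X$, $g$ extends uniquely along $\iota : \omega \hookrightarrow \overline{\omega}$ to $\overline{g} : \overline{\omega} \to X$. Let $\infty : \overline{\omega}$ be the point fixed by the coalgebra structure, i.e.\ the unique map $1 \to \overline{\omega}$ obtained by finality from the coalgebra $\eta : 1 \to L 1$; write $\tau : \overline{\omega} \to L\overline{\omega}$ for the structure map, so that $\tau(\infty) = \eta(\infty)$. The candidate fixed point is $x \defeq \overline{g}(\infty)$.

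To show $f x = x$, use the successor map $s : \overline{\omega} \to \overline{\omega}$, given by $\tau^{-1} \circ \eta$ (Lambek's lemma makes $\tau$ invertible). It fixes $\infty$ and restricts along $\iota$ to the successor on $\omega$, namely $\sigma \circ \eta$. Consider the two maps $\overline{g} \circ s$ and $f \circ \overline{g}$ from $\overline{\omega}$ to $X$. Precomposed with $\iota$, both become $n \mapsto g(n+1) = f(g(n))$; this is exactly the algebra-map equation for $g$ evaluated on elements of the form $\eta(n)$, using $\alpha \circ \eta = \mathrm{id}$. By uniqueness of extensions in the locality of $X$, the two maps agree, so $\overline{g}\circ s = f\circ\overline{g}$. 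Evaluating at $\infty$ gives $x = \overline{g}(s\infty) = f(\overline{g}\infty) = f x$.

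The main obstacle is the bookkeeping around $\omega$ and $\overline{\omega}$ with the paper's definition of $L$. We must check that $\iota$ commutes with the successor maps, which holds because $\iota$ is the comparison map from the initial algebra to the final coalgebra (viewed as an algebra via $\tau^{-1}$), and hence is an algebra map. We must also check the identity $\alpha\circ\eta=\mathrm{id}$. If the given algebra is only a functor-algebra, we would instead take $g$ to be the unique algebra map for $f\circ\alpha$ and compare $\overline{g}\circ\tau^{-1}$ with $f\circ\alpha\circ L\overline{g}$ as maps $L\overline{\omega} \to X$. This comparison requires locality of $X$ with respect to $L\iota$, which I would derive from completeness, since $L\iota$ is again the inclusion $\omega\hookrightarrow\overline{\omega}$ up to the isomorphisms $\sigma$ and $\tau$.
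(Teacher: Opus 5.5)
Your argument is correct and is the standard one; the paper states this proposition without proof as a known fact of synthetic domain theory. Checking $\overline{g}\circ s = f\circ\overline{g}$ by uniqueness of extension along $\iota$ (using $\iota\circ\sigma = \tau^{-1}\circ L\iota$, naturality of $\eta$, and the unit law $\alpha\circ\eta = \mathrm{id}$), and then evaluating at $s(\infty)=\infty$, goes through exactly as you wrote it.

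There are two peripheral slips. First, since $0=\sigma(\bot)$, your $g$ enumerates $f^{n+1}(\alpha\bot)$ rather than the ill-typed $(f\alpha)^n(\alpha\bot)$. Second, drop the closing functor-algebra fallback: without the unit law it only yields a fixed point of $f\circ\alpha\circ\eta$, and the claim genuinely fails in that case (for instance, a constant map $L 2\to 2$ together with $f$ the swap on $2$).
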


In general the path relation need not be well-behaved on complete types; we refine the notion of a synthetic (pre)domain by considering additional localizations. 

\begin{prop}[\cite{niu-sterling-harper:2024}]\label[prop]{prop:partial}
    There is a finite collection of maps $\mathcal{F}$ such that the path relation on $\mathcal{F}$-local types is a partial order.  
\end{prop}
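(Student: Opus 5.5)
We exhibit $\mathcal{F}$ explicitly as three maps, one per property of a partial order. For each property we pick a ``shape'' built from copies of \I{} and show that locality with respect to a comparison map into it forces the property. Throughout we use Phoa's principle, which identifies paths in \I{} with the implication order, to build these shapes as subobjects of $\I^n$.

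\emph{Well-definedness and reflexivity.} First we want the path ``relation'' to actually be a relation, i.e.\ $\mathcal{F}$-local types should be boundary separated. The natural candidate map is the codiagonal-type map $\I \sqcup_{\partial \I} \I \to \I$, the pushout of two intervals glued along their boundary $\MkSet{0,1}$. Maps out of $\I \sqcup_{\partial \I} \I$ are exactly pairs of paths with a common boundary. Locality then says each such pair factors uniquely through $\I$, so the two paths are equal.

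Reflexivity needs no localization, since the constant map $\lambda i.\,x$ is a path $x \leadsto x$.

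\emph{Transitivity.} Let $\Lambda \subseteq \I^2$ be the ``horn'', the union of the two edges $\MkSet{(i,0)} \cup \MkSet{(1,j)}$, taken as a pushout of two copies of \I{} along a point. Let $\Delta^2 = \MkSet{(i,j) \mid j \le i}$ be the 2-simplex, which is a subobject of $\I^2$ by Phoa's principle. Locality with respect to $\Lambda \hookrightarrow \Delta^2$ extends any composable pair $x \leadsto y \leadsto z$ to a triangle $\Delta^2 \to X$. Restricting along the diagonal $i \mapsto (i,i)$ yields a path $x \leadsto z$. We must check that the diagonal endpoints are $(0,0)$ and $(1,1)$, which are the correct vertices.

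\emph{Antisymmetry.} This is the step I expect to be the main obstacle. Given $\alpha : x \leadsto y$ and $\beta : y \leadsto x$, we want $x = y$. I would try locality with respect to the map from the ``loop'' $\I \sqcup_{\partial\I} \I^{\mathrm{op}}$, i.e.\ two intervals glued head-to-tail, into the terminal object $1$. Locality means every such loop is constant, so the composite $\I \to X$ given by $\alpha$ is constant and $x = \alpha(0) = \alpha(1) = y$. The subtle point is whether this map is an isomorphism on genuine examples. Such a check is needed so that the axioms remain consistent and the resulting class still contains \I{} and the lift algebras used later.

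If the map to $1$ is too strong, I would fall back on a weaker formulation. Here one uses transitivity to form a loop $x \leadsto x$ and requires only that loops based at a point are equal to the constant path, via locality with respect to a suitable quotient of $\I^2$ collapsing a square.

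Finally, we check that each chosen map has domain and codomain built from finite colimits of \I{}. This makes $\mathcal{F}$ finite and makes the localizations internally meaningful.
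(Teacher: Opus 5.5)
Your proof is correct. The codiagonal $\I\sqcup_{\partial\I}\I\to\I$ makes $\partial : X^{\I}\to X^{2}$ a monomorphism, so $\pathle$ is proposition-valued. The inner horn $\Lambda\hookrightarrow\Delta^2$ yields composites along the diagonal; $\Lambda$ is the union of two edges meeting only at $(1,0)$, hence their pushout in a topos. Collapsing the loop $\I\sqcup_{\partial\I}\I^{\mathrm{op}}\to 1$ forces any $x\leadsto y\leadsto x$ to be constant, which gives antisymmetry. The paper gives no argument of its own here---it only cites Niu--Sterling--Harper---so there is nothing to compare step by step. Your explicit three-map $\mathcal{F}$ is a self-contained proof.

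Your hedging about antisymmetry is unnecessary, and the fallback can be dropped. Strictly, the statement as phrased does not care whether the local class is reasonable: localizing at $\varnothing\to 1$ leaves only terminal types, where the claim is trivial. What the paper actually needs, via Axiom~\ref{ax:sdt}, is that the localization keeps $\I$. That holds for all three of your maps using only Phoa's principle.
\begin{itemize}
\item Boundary separation of $\I$ is the uniqueness half of Phoa's principle.
\item A loop $i\leadsto j\leadsto i$ in $\I$ gives $i\le j\le i$, so $i=j$, and both paths are then constant by uniqueness.
\item For the horn, take $\alpha : i\leadsto j$ and $\beta : j\leadsto k$, and let $\gamma(a,b)$ be the unique path $i\leadsto\beta(b)$ evaluated at $a$; this exists since $i\le j=\beta(0)\le\beta(b)$, and its restriction to $\Lambda$ is $(\alpha,\beta)$.
\item Uniqueness of the horn extension follows by precomposing with the retraction $(a,b)\mapsto(a,a\wedge b)$ of $\I^2$ onto $\Delta^2$. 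For each $b$, the map $a\mapsto\gamma(a,a\wedge b)$ is a path whose endpoints $\gamma(0,0)$ and $\gamma(1,b)$ lie on $\Lambda$, so boundary separation of $\I$ determines it.
\end{itemize}
So your maps are $\I$-isomorphisms, and in particular every replete type is $\mathcal{F}$-local. Finally, Phoa's principle is not needed to form $\Delta^2$, which is a subobject of $\I^2$ in any topos; it enters only in this check that $\I$ is local.
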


The significance of isolating the desirable properties of our candidate notion of synthetic predomain in terms of locality is that localization at a small collection of maps $\mathcal{F}$ obtains a category closed under many constructions used to interpret basic constructions in domain theory. 

\begin{prop}[\cite{rijke-etal:2020}]
    The full subcategory $\ECat_\mathcal{F}$ of $\mathcal{F}$-local types for a small collection of maps $\mathcal{F}$ is an internally complete and cocomplete reflective subuniverse in the ambient topos $\ECat$.
\end{prop}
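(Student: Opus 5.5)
The plan is to take the localization $\ECat_\mathcal{F}$ as the image of an idempotent monad, build that reflector explicitly, and then read off completeness and cocompleteness from standard facts about reflective subcategories. We work internally in the type theory of $\ECat$. For each $f : A \to B$ in $\mathcal{F}$ and each type $X$, ``being $f$-local'' is the proposition that $X^f : X^B \to X^A$ is an isomorphism. So $\ECat_\mathcal{F}$ is a full subuniverse carved out by a family of propositions indexed by the small set $\mathcal{F}$. Because $\mathcal{F}$ is small, we may replace it by a single map, namely the coproduct $\coprod_{f\in\mathcal{F}} f$, without changing the local objects.

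\textbf{Step 1: construct the reflector.} For each $X$ we define $L_\mathcal{F} X$ as a quotient inductive type with the following constructors:
\begin{itemize}
  \item a point constructor $\iota : X \to L_\mathcal{F} X$;
  \item for each $f : A \to B$ in $\mathcal{F}$ and $g : A \to L_\mathcal{F} X$, an extension constructor $\mathsf{ext}_{f,g} : B \to L_\mathcal{F} X$ with the equation $\mathsf{ext}_{f,g}\circ f = g$;
  \item a uniqueness constructor equating $h$ with $\mathsf{ext}_{f,h\circ f}$ for every $h : B \to L_\mathcal{F} X$.
\end{itemize}
This is the localization construction of Rijke et al.\ with the higher structure truncated away: in a topos all types are sets, so quotient inductive types suffice. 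By construction, $L_\mathcal{F}X$ is $\mathcal{F}$-local. Given a map $X \to Y$ into a local $Y$, the recursion principle produces the unique factorization through $\iota$, and uniqueness follows from the induction principle.

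\textbf{Main obstacle: existence of the quotient inductive type.} The constructors have infinitary arities $A$ and $B$, so we cannot assume the type exists for free. The first approach to try is the Fiore--Pitts--Steenkamp construction of quotient inductive–inductive types in a topos, which handles arities indexed by small types. An alternative is a transfinite small-object-style colimit iteration, which needs some form of choice or the \WISC{} principle that the cited works assume. This is the step where care is required; everything after it is formal.

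\textbf{Step 2: limits and colimits.} Completeness follows because local objects are closed under all internal limits in $\ECat$: exponentials $(-)^B$ and $(-)^A$ commute with limits, so $X^f$ remains an isomorphism. The same argument shows that $\ECat_\mathcal{F}$ is an exponential ideal: if $Y$ is local, then $Y^Z$ is local for any $Z$, since $(Y^Z)^f \cong (Y^f)^Z$. Hence the reflector $L_\mathcal{F}$ is internal. Cocompleteness then follows by computing a colimit in $\ECat$ and reflecting it, since the left adjoint $L_\mathcal{F}$ preserves colimits. Finally, the subuniverse is defined by a predicate on the universe, so it is closed under isomorphism and is a reflective subuniverse in the sense required.
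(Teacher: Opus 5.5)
Your argument is essentially the one behind the cited result of Rijke--Shulman--Spitters (the paper itself gives no proof): you build the localization as a set-level quotient inductive type, show that local types are closed under limits and form an exponential ideal, and obtain colimits by reflecting. You import the existence of that inductive type from the literature, exactly as the paper does; note that the Fiore--Pitts--Steenkamp construction of such infinitary QITs is itself a size-indexed iteration relying on WISC, so your two ``alternatives'' are really one. One correction: your opening remark that $\mathcal{F}$ may be replaced by $\coprod_{f\in\mathcal{F}} f$ without changing the local objects is false. For example, the empty type is local for $\mathrm{id}_1 + (\emptyset \to 1)$, since $\emptyset^{1+1}$ and $\emptyset^{1+\emptyset}$ are both empty, but it is not local for $\emptyset \to 1$, since $\emptyset^1 \to \emptyset^\emptyset \cong 1$ is not an isomorphism. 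Fortunately Step 1 uses the whole family, so nothing depends on this remark.
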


The reflector $\mathcal{S} : \ECat \to \ECat_\mathcal{F}$ induces an idempotent monad on $\ECat$ with unit $\eta_\synRef$.  Reflectivity implies that $\ECat_\mathcal{F}$ is \emph{closed} under the ambient limits of the topos $\ECat$; consequently we may use the ordinary internal language to define and manipulate limits in $\ECat_\mathcal{F}$. In contrast the inclusion $\ECat_\mathcal{F} \hookrightarrow \ECat$ need not preserve colimits of $\mathcal{F}$-local objects, which means we need to be careful in applying the universal property of colimits in $\ECat_\mathcal{F}$.

\begin{prop}[$\mathcal{S}$-induction]\label[prop]{prop:S-induction}
    To prove a property $\Phi$ holds for all $u : \mathcal{S}X$, it suffices to show that $\Phi$ is $\mathcal{F}$-local and that $\Phi(\eta_\synRef(x))$ for every $x : X$. 
\end{prop}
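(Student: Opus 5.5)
We obtain \cref{prop:S-induction} from the universal property of the reflection $\eta_\synRef : X \to \mathcal{S}X$, with $\Phi$ read as a family over $\mathcal{S}X$. We package it as the subtype $P \defeq \Sigma_{u:\mathcal{S}X}.~\Phi(u)$ with first projection $\pi : P \to \mathcal{S}X$, a monomorphism because $\Phi$ is proposition-valued. The hypothesis that $\Phi$ is $\mathcal{F}$-local we read in the standard way for reflective subuniverses~\cite{rijke-etal:2020}: each fiber $\Phi(u)$ is $\mathcal{F}$-local. The plan is to show that $P$ is itself $\mathcal{F}$-local and then use the universal property of $\eta_\synRef$ to build a section of $\pi$.

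\textbf{Step 1: $P$ is $\mathcal{F}$-local.} A $\Sigma$-type whose base is local and whose fibers are local is again local. For $f : A \to B$ in $\mathcal{F}$, a map $B \to P$ is a pair $(g,s)$ with $g : B \to \mathcal{S}X$ and $s : \Pi_{b:B}\Phi(g\,b)$. Precomposition with $f$ acts on $g$ by the isomorphism $(\mathcal{S}X)^f$. On the section $s$ it acts as the dependent precomposition map, and this is an isomorphism because the fibers are local. The only subtle point is that locality of the fibers must be the dependent (internal) form, i.e.\ it must hold for dependent maps over $B$ and not just for constant families. Internal locality in the sense of \cref{def:local}, stated in the internal language and hence stable under slicing, gives this. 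Alternatively, since $\mathcal{S}X$ is local and $\ECat_\mathcal{F}$ is closed under limits, we can express $P$ as a limit, namely a pullback of local objects, if $\Phi$ is classified by a map into a local object of propositions. I expect this step to be the main obstacle: one has to pin down exactly what ``$\Phi$ is $\mathcal{F}$-local'' means and why subtypes with local fibers are closed in $\ECat_\mathcal{F}$.

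\textbf{Step 2: build the section.} By hypothesis, $x \mapsto (\eta_\synRef(x), p_x)$ defines a map $h : X \to P$, where $p_x$ is the given proof of $\Phi(\eta_\synRef(x))$. Since $P$ is local, the universal property of the reflection yields a unique $\overline{h} : \mathcal{S}X \to P$ with $\overline{h}\circ\eta_\synRef = h$. Then $\pi\circ\overline{h}$ and $\mathrm{id}_{\mathcal{S}X}$ are two maps $\mathcal{S}X \to \mathcal{S}X$ that agree after precomposition with $\eta_\synRef$. Uniqueness of extensions along $\eta_\synRef$ into the local object $\mathcal{S}X$ therefore gives $\pi\circ\overline{h} = \mathrm{id}$. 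Consequently, for every $u : \mathcal{S}X$, the second component of $\overline{h}(u)$ is a proof of $\Phi(\pi(\overline{h}(u))) = \Phi(u)$, which completes the proof.
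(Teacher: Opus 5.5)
Your Step 2 is exactly the paper's argument: extend $X \to P$ along $\eta_{\mathcal{S}}$ and use uniqueness of extensions into $\mathcal{S}X$ to get $\pi\circ\overline{h} = \mathrm{id}$. The paper also records that $\pi$ is then an isomorphism, which is automatic since $\pi$ is monic.

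The gap is Step 1, which you need only because you read ``$\Phi$ is $\mathcal{F}$-local'' fiberwise. The lemma you invoke, that a $\Sigma$-type over a local base with local fibers is local, is $\Sigma$-closedness. Among reflective subuniverses this holds precisely for the modalities~\cite{rijke-etal:2020}, and a localization at an arbitrary class $\mathcal{F}$ need not have it. Stability under slicing does not supply the dependent form. Working in context $b : B$ only tells you that $\Phi(g\,b)^B \to \Phi(g\,b)^A$ is invertible, i.e.\ locality against $f$ of a family that is constant over $B$. What you need is that $\Pi_{b:B}\Phi(g\,b) \to \Pi_{a:A}\Phi(g(f\,a))$ is invertible, for a family indexed by the codomain of $f$ itself.

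This already fails for propositional fibers and the single map $f : \omega \hookrightarrow \overline\omega$. Every proposition $Q$ is $f$-local: $\omega$ and $\overline\omega$ are inhabited, so $Q^{\overline\omega}$ and $Q^{\omega}$ are both equivalent to $Q$. Hence the family $u \mapsto (u \in \omega)$ over the complete type $\overline\omega$ has $f$-local fibers. But its total space $\omega$ is not complete: if it were, \cref{prop:fixed-points} would give the successor map on the initial lift algebra $\omega$ a fixed point, which it does not have. So your Step 1 argument, applied to this $f$, would prove something false.

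The repair is to read the hypothesis as the paper intends: the subtype $P = \Sigma_{u:\mathcal{S}X}\Phi(u)$ itself is $\mathcal{F}$-local, i.e.\ a predomain. This is how the hypothesis is discharged in every application in the paper; for example, in \cref{prop:nat} and in the proof that $\mathcal{S}X$ is pointed, the subset is exhibited as a pullback of predomains. It is essentially your ``alternative'' of pulling back along a classifying map into a local object such as $\I$. That is a valid sufficient condition, but not a proof of the fiberwise lemma. With this reading Step 1 is just the hypothesis, and your Step 2 is the entire proof.
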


\begin{proof}
    From the assumptions that $\Phi$ is $\mathcal{F}$-local and $\Phi(\eta_\synRef(x))$ holds for every $x : X$ we obtain the \synRef{}-reflection $X \to \Phi$, where the unique extension of a map $f : X \to Y$ into an $\mathcal{F}$-local type $Y$ is given by $\Phi \hookrightarrow \synRef{}X \xrightarrow{\overline{f}} Y$, where $\overline{f}$ is the unique extension obtained by the universal property of $\synRef{}X$. Observe that $\Phi \hookrightarrow \synRef{}X$ is an isomorphism whose inverse $\synRef{}X \to \Phi$ is the unique extension of $X \to \Phi$ along $\eta : X \to \synRef{}X$; that these maps are mutually inverse follow from the universal property of \synRef{}-reflections. 
\end{proof}

\subsection{Lifting structure \vs{} pointedness}

A domain is a predomain equipped with an algebra for the lift monad \LL{}. In analytic domain theory a domain is equivalently a predomain equipped with a least element. In synthetic domain theory, it is unclear whether the existence of a least element is enough to reconstruct the lifting structure since the well-known construction \cite{kock:1991} of the lifting structure uses the fact that predomains are \emph{weakly cocomplete} (closed under joins of subterminal subsets). Since it is unclear how to define the corresponding figure shape in synthetic domain theory, it is unknown whether lift algebras and pointed types are equivalent in general. 

This distinction becomes relevant when we consider power\emph{domain} constructions. The three well-known constructions of upper, lower, and convex powerdomains all preserve the least element, which means they send domains to domains in analytic domain theory. On the other hand, as we discussed above, this is not enough to ensure that the analogous constructions preserve lifting structure in synthetic domain theory, a problem foresaw by Phoa~\cite{phoa:1994}. 

\subsection{The \sierp{} cone}

To show that the various synthetic powerdomain constructions preserve lifting structure we need to define maps $L(\mathcal{P}X) \to \mathcal{P}X$ satisfying the monad algebra laws. Because the partial map classifier $L$ has a \emph{right-handed} universal property in general it is unclear how one may define the required algebra maps \emph{out of} the lift of a type.   

We may resolve this tension by restricting to types $X$ for which $L X$ has an additional \emph{left-handed} universal property. In analytic domain theory this is true for all types: to define a continuous map $L X \to Y$ it suffices to give $f : X \to Y$ and an element $y : Y$ such that $y$ is below the image of $f$. In synthetic domain theory $L X$ has this universal property when it is the \emph{\sierp{} cone}~\cite{pugh-sterling:2025} of a type $X$, which may be characterized as the cospan $\MkSet{\bot} \hookrightarrow X_\bot \hookleftarrow X$ fitting into the pushout diagram depicted below.

\[
\begin{tikzpicture}
    \SpliceDiagramSquare<sq/>{
        width = 2cm,
        nw = X,
        sw = \MkSet{\bot},
        ne = \I{} \times X,
        se = X_\bot,
        south/style = embedding,
        north = 0 \times X,
        se/style = pushout,
    }
    \node[right = of sq/ne, xshift=1cm] (X) {$X$}; 
    \draw[->](X) to node[above] {$1 \times X$} (sq/ne); 
    \draw[embedding](X) to node {} (sq/se); 
\end{tikzpicture}
    \]

Whenever the path order on $X$ is a partial order we may think of the \sierp{} cone of a type as freely adjoining to it a least element. To define a map $X_\bot \to Y$ out of $X_\bot$ it suffices to provide a distinguished element $y : Y$ and a map $f : \I{} \times X \to Y$ such that the restriction $f(0, -) : X \to Y$ is constantly $y$, mirroring the aforementioned analytic monotonicity requirement. Pugh and Sterling~\cite{pugh-sterling:2025} have identified when the \sierp{} cone $X_\bot$ on a type $X$ is computed as the partial map classifier $L X$. Notably \opcit{} shows that there is a small localization class for which the two constructions coincide.

\begin{prop}[\cite{pugh-sterling:2025}]\label[prop]{prop:sierp}
    Assuming Phoa's principle, there exists a small collection of maps $\mathcal{F}$ such that in the reflective subcategory of $\mathcal{F}$-local types the cospan $\MkSet{(\bot, \kw{abort})} \xhookrightarrow{} L X \xhookleftarrow{\eta_X} X$ is the \sierp{} cone on $X$.
\end{prop}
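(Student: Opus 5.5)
The aim is to exhibit a small class $\mathcal{F}$ such that, for every $\mathcal{F}$-local $X$, the square with corners $X$, $\I\times X$, $\MkSet{\bot}$, $L X$ is a pushout inside $\ECat_\mathcal{F}$. Here the top map is $0\times X$. The map $\I\times X \to LX$ is $(i,x)\mapsto (i, \lambda\_.\,x)$, and $\MkSet{\bot}\to LX$ picks out $(\bot,\kw{abort})$. The square commutes because $(0,\lambda\_.x) = (\bot,\kw{abort})$: there is a unique map out of the empty type. The map $\eta_X$ is recovered as the composite of this map with $1\times X$.

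Rather than verify the universal property directly, I will compute the actual pushout and compare. Let $P$ be the pushout of the span in $\ECat$, and let $X_\bot \defeq \mathcal{S}P$ be its reflection, which is the pushout in $\ECat_\mathcal{F}$. The canonical comparison map $c : X_\bot \to LX$ then exists by the universal property of the reflected pushout, since $LX$ is $\mathcal{F}$-local when $X$ is. This requires $\mathcal{F}$ to be chosen so that local types are closed under $L$; that holds, for instance, if $\I$ is local and local types are closed under $\Sigma$ over $\I$ and under exponentials by propositions.

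The main step is to show that $c$ is invertible after localization, that is, $c$ is an $\mathcal{F}$-isomorphism. I would reduce this to a single generic instance. In the ambient topos, $LX$ is the colimit of the fibers $X^{[i=\top]}$ over $i:\I$, and $P$ is similarly a quotient of $\I\times X + 1$. The gap between the two is the question of whether every partial element $(i, f:[i=\top]\to X)$ arises as the image of some $(i,x)$; it does not, because $X$ need not have an element when $i\ne\top$. So I add to $\mathcal{F}$ the generic comparison map $c_0 : \I\times_\I \mathrm{cone} \to L(\ldots)$, or more concretely the family of maps
\[ \textstyle\sum_{i:\I} P_i \to \sum_{i:\I}\I^{[i=\top]}, \]
expressing that for each $\phi:\I$ the pushout of $\I\times[\phi] \leftarrow [\phi] \to 1$ agrees with the corresponding partial object. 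This map is small because it is indexed over $\I$. By a descent/pullback-stability argument, pushouts in a topos are stable, so $c$ is a base change of this generic map along the classifying data of $X$. Locality with respect to that family, together with closure of $\mathcal{F}$-isomorphisms under base change for the localization generated (taking the left-exact-modality closure, or including all pullbacks along maps into $\I$), then gives that $\mathcal{S}c$ is an isomorphism.

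Phoa's principle enters in checking that the generic map is well defined, i.e., that paths $\I\to\I$ starting at $0$ correspond exactly to the elements of $\I$. Phoa's principle gives precisely this, since such a path $0\leadsto\psi$ is the same as an element $\psi$, and so it identifies the glued cone on a proposition with the lift. I expect the main obstacle to be the stability argument: ensuring that the chosen small $\mathcal{F}$ makes $\mathcal{F}$-isomorphisms stable under the relevant base changes, so that one generic map suffices for all $X$. If this fails, the fallback is to work in the internal language with $\mathcal{S}$-induction (\cref{prop:S-induction}) to build the inverse $LX\to X_\bot$ and check both composites fiberwise over $i:\I$.
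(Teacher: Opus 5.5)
The paper does not prove this statement; it imports it from Pugh and Sterling. Your argument therefore has to stand on its own. Its central step, that $c$ is a base change of a small $\I$-indexed generic map, is false, and the proposed repair destroys the result.

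\textbf{The base-change step fails.} Write $[\phi]$ for the proposition $\phi = \top$, $P_Z \defeq 1 \sqcup_Z (\I \times Z)$ for the ambient cone, and $c_Z : P_Z \to LZ$ for the comparison map.

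\begin{itemize}
\item For an $\I$-proposition, $c_{[\phi]}$ is the inclusion $\MkSet{i : \I \mid (i = 0) \vee [\phi]} \hookrightarrow \MkSet{i : \I \mid [i] \to [\phi]} \cong L[\phi]$. This is a monomorphism, and base changes of monomorphisms are monomorphisms.
\item $c_\I$ is not monic. The elements $(i,i)$ and $(i,\top)$ both map to $(i, \lambda -.\,\top)$. In the pushout along the mono $0 \times \I$, however, they are identified only when $i = 0 \vee i = \top$.
\item If that held for all $i$, then $\I$ would be decidable. Negation would then be a path $1 \leadsto 0$ in $\I$, which Phoa's principle forbids.
\item So $c_X$ is not a base change of your generic family. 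It is a base change of the comparison map for the universal family over a universe, but that family is not small.
\end{itemize}

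\textbf{The repair fails too.} Local equivalences are not pullback-stable for a non-lex localization. Any closure that makes them stable, the lex closure in particular, must invert the pullback of $c_{[\phi]}$ along the point $\phi$, which is $[\neg\phi \vee \phi] \hookrightarrow 1$. If $\I$ is local for these maps, then $\I$ is closed under negation, which again yields the forbidden path. Then $L1 = \I$ is not local, and the statement fails already at $X = 1$.

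\textbf{Phoa's principle is misplaced.} The same computation shows that Phoa's principle cannot identify the cone on a proposition with its lift: that identification, for all $\phi$, is equivalent to decidability of $\I$. The maps $c_{[\phi]}$ are exactly what the localization has to invert. Phoa's principle is needed instead to keep $\I$ local for them. With the finite joins of the paper's axiom, it gives $g(i) = g(0) \vee (i \wedge g(\phi))$ for every $g : L[\phi] \to \I$, via the retraction $i \mapsto i \wedge \phi$, and unique extension along $c_{[\phi]}$ follows.

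\textbf{The missing idea.} Pass from propositions to arbitrary $X$ through the generic partial element, not through base change.

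\begin{itemize}
\item Every $(\phi,u) : LX$ equals $(Lu)(\phi, \mathrm{id}_{[\phi]})$, and naturality gives $Lu \circ c_{[\phi]} = c_X \circ P_u$.
\item Let $Y$ be local for $(c_{[\phi]})_{\phi : \I}$ and let $k : P_X \to Y$ be cone data. Define $g(\phi,u)$ as the value at $(\phi, \mathrm{id})$ of the unique extension of $k \circ P_u$ along $c_{[\phi]}$.
\item For $u = \lambda -.\,x$, the map $(j, -) \mapsto k(j,x)$ is already such an extension, so $g \circ c_X = k$. The point $\bot$ is trivial because $c_{[0]}$ is invertible.
\item If $g' \circ c_X = k$, then $g' \circ Lu$ is such an extension, so $g' = g$.
\end{itemize}

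Thus every local $Y$ is local for $c_X$, for arbitrary $X$, using only a small $\I$-indexed family and no stability hypothesis. What remains is to check, using Phoa's principle, that $\I$ is local for that family, and to arrange closure of local types under $L$ separately.
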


\subsection{Synthetic predomains}\label{sec:predomains}

We are now in a position to identify classes of predomains to which our results apply. 

\begin{defn}
    For any type $S$, a type $X$ is $S$-\emph{replete} when it is $f$-local whenever $f$ is an $S$-isomorphism. Unfolding definitions (\cref{def:local}) this means that $X$ is $S$-replete when $X^f : X^B \to X^A$ is an isomorphism whenever $S^f : S^B \to S^A$ is an isomorphism for all maps $f : A \to B$. We say a type is simply \emph{replete} when it is \I{}-replete.
\end{defn}

Although the localization class for replete types is not indexed by a small type, it is well known that the category of replete types is a reflective exponential ideal and moreover the smallest such containing \I{}~\cite{hyland:1991}. However nothing about our development depends on repleteness \emph{per se}; for the sake of generality we simply assume a category of predomains $\CCat$ with suitable properties.

\begin{ax}\label[ax]{ax:sdt}
    There is a reflective subuniverse $\CCat \hookrightarrow \ECat$ containing \I{} satisfying the following.
    \begin{enumerate}
        \item Every type $X$ in \CCat{} is complete and the path space $X^\I$ determines a partial order $\pathle$ on $X$.
        \item The subuniverse \CCat{} is closed under lifting.
        \item In \CCat{} the \sierp{} cone is computed as the partial map classifier.
    \end{enumerate}
\end{ax}

For instance we may define $\CCat$ by localizing at the maps given by \cref{prop:partial,prop:sierp} and $\MkSet{\omega \cap S \hookrightarrow S \mid S \subseteq_\I{} \overline{\omega}}$, where the latter collection of maps ensures closure under lifting~\cite{fiore-rosolini:1997:cpos,simpson:1999}.

\begin{defn}
    A \emph{predomain} is a type in $\CCat$, and a \emph{domain} is a predomain $\mathbb{L}$-algebra. 
\end{defn}

On predomains we recover the classical equivalence of lift algebras and pointed types on predomains.

\begin{prop}
    Every \LL{}-algebra $\alpha : L X \to X$ on a predomain $X$ evinces a least element in $X$ given by $\alpha(\bot, \kw{abort})$. 
\end{prop}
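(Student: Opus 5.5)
We have to show that for every $x : X$ we have $\alpha(\bot,\kw{abort}) \pathle x$. By \cref{ax:sdt} the path order on a predomain is $\pathle$, so it is enough to produce a path $\I \to X$ from $\alpha(\bot,\kw{abort})$ to $x$.

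The path will come from the lift. Fix $x : X$ and define $\gamma : \I \to L X$ by $\gamma(i) \defeq (i, \lambda \_.\, x)$. This is a well-formed element of $L X = \Sigma_{i:\I}.\,(i=\top)\to X$, since the second component ignores its proof argument. Its endpoints are:
\begin{itemize}
\item at $0 = \bot$, we have $\gamma(0) = (\bot, \lambda\_.\,x)$, which equals $(\bot,\kw{abort})$, because $(\bot = \top) \to X$ has exactly one element;
\item at $1 = \top$, we have $\gamma(1) = (\top, \lambda\_.\,x) = \eta_\LL(x)$.
\end{itemize}
So $\gamma$ is a path $(\bot,\kw{abort}) \leadsto \eta_\LL(x)$ in $L X$.

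Now compose with $\alpha$. The map $\alpha\circ\gamma : \I \to X$ is a path from $\alpha(\bot,\kw{abort})$ to $\alpha(\eta_\LL(x))$. The unit law for an $\LL$-algebra gives $\alpha \circ \eta_\LL = \mathrm{id}$, so the second endpoint is $x$. Hence $\alpha(\bot,\kw{abort}) \pathle x$, and $\alpha(\bot,\kw{abort})$ is below every element.

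It is the least element because $\pathle$ is a partial order on $X$ (\cref{ax:sdt}), so antisymmetry gives uniqueness. Note that this argument does not need transitivity of paths, and it uses only the unit law of the algebra, not the multiplication law.

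The only step needing care is the identification $(\bot,\lambda\_.\,x) = (\bot,\kw{abort})$. It holds by function extensionality, since $\bot = \top$ is empty. Everything else is a direct unfolding of the definitions.
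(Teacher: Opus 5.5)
Your proof is correct and is the paper's argument: take a path $(\bot,\kw{abort}) \leadsto \eta_\LL(x)$ in $L X$, postcompose it with $\alpha$, and use the unit law $\alpha(\eta_\LL(x)) = x$. The only difference is that the paper cites an earlier proposition for that path, while you write it down explicitly as $i \mapsto (i, \lambda -.\,x)$, which shows the path exists in $L X$ for any type without needing the Sierpi\'nski cone result.
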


\begin{proof}
    By the above proposition we have a path $(\bot, \kw{abort}) \leadsto \eta x$ for all $x : X$, so the result follows by postcomposing this path with $\alpha$ since we have $\alpha(\eta(x)) = x$. 
\end{proof}

Because lifting computes the \sierp{} cone in the category of predomains, equality of maps defined on partial elements may be checked on just the totally defined and undefined elements~\cite[Corollary 4]{sterling:2024}.

\begin{prop}\label[prop]{prop:epi}
    The induced map $1 + X \to L X$ is an epimorphism in the category of predomains. 
\end{prop}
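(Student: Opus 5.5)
To show that $[\bot, \eta] : 1 + X \to L X$ is an epimorphism in $\CCat$, I fix a predomain $Y$ and maps $g, h : L X \to Y$ with $g \circ [\bot,\eta] = h \circ [\bot,\eta]$. Concretely, this means $g(\bot,\kw{abort}) = h(\bot,\kw{abort})$ and $g \circ \eta_X = h \circ \eta_X$. The goal is $g = h$. Throughout I use that $\CCat$ is a reflective subuniverse, so the pushout below is computed in $\CCat$. This matters because colimits in $\CCat$ need not agree with those in $\ECat$.

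By \cref{ax:sdt}(3), equivalently \cref{prop:sierp}, the cospan $\MkSet{(\bot,\kw{abort})} \hookrightarrow L X \hookleftarrow X$ is the \sierp{} cone. In other words, $L X$ is the pushout in $\CCat$ of $\MkSet{\bot} \leftarrow X \xrightarrow{0 \times X} \I \times X$, where the map $X \to L X$ is the composite $\I \times X \to L X$ restricted along $1 \times X$. Maps out of a pushout are determined by their restrictions to the two legs. It therefore suffices to check that $g$ and $h$ agree after precomposing with $\{\bot\} \to L X$, which is one of our hypotheses, and after precomposing with the cocone map $c : \I \times X \to L X$.

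I would make $c$ explicit as $c(i,x) = (i, \lambda \_.\, x)$. Then $c(1,x) = \eta x$, and $c(0,x) = (\bot,\kw{abort})$ because $(0 = \top)$ is empty. The main step is to show $g \circ c = h \circ c$ as maps $\I \times X \to Y$. Agreement on $\I \times X$ does not follow merely from agreement at the endpoints $0$ and $1$ in general. So instead of attempting a direct argument, I would invoke the universal property of the pushout once more, now in its "path" form. By \cref{ax:sdt}(1), the path space $Y^\I$ is boundary separated, since the path relation is a partial order and in particular a relation. For fixed $x$, the maps $i \mapsto g(c(i,x))$ and $i \mapsto h(c(i,x))$ are paths in $Y$. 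Their boundaries are $(g(\bot,\kw{abort}), g(\eta x))$ and $(h(\bot,\kw{abort}), h(\eta x))$, and these coincide by hypothesis. Boundary separation then gives $g(c(-,x)) = h(c(-,x))$ for every $x$. By function extensionality, $g\circ c = h \circ c$.

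With agreement on both legs, the uniqueness half of the pushout's universal property in $\CCat$ gives $g = h$. The only delicate point I expect is checking that boundary separation is exactly the content of "the path space determines a partial order" in \cref{ax:sdt}(1). I read that clause as asserting that $X^\I \to X \times X$ is a monomorphism whose image is a partial order. If the axiom is instead read more weakly, I would fall back on $\mathcal{S}$-induction (\cref{prop:S-induction}) over the presentation of $L X$ as a reflection of the $\ECat$-pushout. The equalizer of $g$ and $h$ is a limit of predomains and hence $\mathcal{F}$-local, so it would suffice to check the property on generators, which reduces to the same pathwise argument.
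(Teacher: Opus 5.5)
Your proof is correct and is essentially the paper's argument. The paper justifies the statement by citing Sterling's Corollary 4 on the grounds that $L X$ is the Sierpi\'nski cone in $\CCat$, and you unpack exactly this: the pushout legs are jointly epimorphic, and the codomain is boundary separated, which is how the paper itself reads clause (1) of the axiom (it glosses paths that ``form a relation'' as being ``boundary separated''). The $\mathcal{S}$-induction fallback is unnecessary and would not let you drop boundary separation anyway, since the statement forces it: for $X = 1$, the map $1 + 1 \to L 1 \cong \I$ is the boundary inclusion.
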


\begin{restatable}{prop}{IsDomain}\label[prop]{prop:is-domain}
    A predomain is a domain if and only if it possesses a least element on the path order. 
\end{restatable}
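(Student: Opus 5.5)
The forward direction is the preceding proposition: if $\alpha : L X \to X$ is an \LL{}-algebra on the predomain $X$, then $\alpha(\bot,\kw{abort})$ is a least element for $\pathle$. So the content lies in the converse. Given a predomain $X$ with a least element $b$ on the path order, I will build an algebra map $\alpha : L X \to X$ and then check the two algebra laws.

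To construct $\alpha$, I will use \cref{ax:sdt}(3): in \CCat{} the lift $L X$ is the \sierp{} cone of $X$. Since \CCat{} is closed under lifting, $L X$ lies in \CCat{}, so we may use the pushout's universal property against the predomain $X$. It therefore suffices to give a point $y : X$ and a map $f : \I \times X \to X$ with $f(0,-)$ constantly $y$. Take $y \defeq b$. For each $x : X$, leastness gives $b \pathle x$, so there is a path $\I \to X$ with boundary $(b,x)$. Because $\pathle$ is defined from the path space and $X$ is a predomain, paths are unique given their boundary. This boundary separation is the point I expect needs the most care. It turns the mere proposition $b \pathle x$ into an actual function, so unique choice yields $f : \I\times X\to X$ with $f(0,x)=b$ and $f(1,x)=x$. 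Let $\alpha$ be the induced map. Concretely, $\alpha(\bot,\kw{abort}) = b$ and $\alpha\circ\eta = f(1,-) = \mathrm{id}$.

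The unit law $\alpha\circ\eta_\LL = \mathrm{id}_X$ holds by construction. For the multiplication law $\alpha\circ L\alpha = \alpha\circ\mu_\LL$ as maps $LLX \to X$, I will use \cref{prop:epi}. The map $1 + LX \to LLX$ is an epimorphism in \CCat{}, and both sides land in the predomain $X$. So it suffices to compare them on $\bot$ and on elements $\eta(u)$ for $u : LX$.

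On $\eta(u)$, naturality of $\eta$ gives $L\alpha(\eta u) = \eta(\alpha u)$, so the left side is $\alpha(\eta(\alpha u)) = \alpha u$. The monad law gives $\mu(\eta u) = u$, so the right side is also $\alpha u$. On $\bot$, we have $L\alpha(\bot) = \bot$ and $\mu(\bot) = \bot$, so both sides equal $\alpha(\bot) = b$.

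Hence $\alpha$ is an \LL{}-algebra, and $X$ is a domain. The only real obstacle is the step extracting $f$ from leastness via uniqueness of paths; the rest is formal once \cref{prop:epi} and the \sierp{} cone property are available.
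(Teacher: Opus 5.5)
Your proof is correct and follows the paper's approach. You build $\alpha$ from the Sierpi\'nski-cone universal property with point $b$ and $f(i,x) = (b \leadsto x)\mathbin{@} i$, and you check the multiplication law on $\bot$ and on $\eta(u)$ via \cref{prop:epi}, exactly as the paper does. The one difference is that you compare $\alpha\circ L\alpha$ with $\alpha\circ\mu$ directly, whereas the paper first rebuilds $\mu$ as a Sierpi\'nski-cone map $\mu'$ (using \cref{prop:bottom}) and proves $\mu'=\mu$; your argument shows that detour is unnecessary.
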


\begin{proof}
    Let $X$ be a predomain with least element $\bot$, \ie{} there exists a necessarily unique path $\bot \leadsto x$ for all $x : X$. We construct the candidate algebra map $\alpha : L X \to X$ by the universal property of the \sierp{} cone. 
    
    \begin{enumerate}
        \item The distinguished point is $\bot$.
        \item The monotonicity map $\alpha^\zeta : \I \times X \to X$ sends $(i, x)$ to $(\bot \leadsto x) \mathbin{@} i$. 
    \end{enumerate}

    First we check the unit law. Observe that by definition of the \sierp{} cone the composition $X \xhookrightarrow{\eta} L X \xrightarrow{\alpha} X$ is the restriction $\alpha^\zeta(1, -) : X \to X$, which is $x$ by construction. For the multiplication law, we first give a left-handed description of the lift monad multiplication. Observe that by \cref{prop:bottom} we have that $(\bot, \kw{abort}) : L X$ is the least element of $L X$, so instantiating the construction above we have a map $\mu' : L^2 X \to L X$. We claim that $\mu' = \mu$ is the multiplication of the lift monad. By \cref{prop:epi} it suffices to show that $\MkSet{(\bot, \kw{abort})} \xhookrightarrow{} L^2 X \xrightarrow{\mu, \mu'} L X$ and $L X \xhookrightarrow{\eta} L^2 X \xrightarrow{\mu, \mu'} L X$ commute; the former follows immediately by computation, and the latter follows because $\mu\eta = 1$ by the unit law and $\mu'\eta = 1$ by the same argument above. 

    Thus it suffices to check the multiplication law with respect to $\mu'$; again by \cref{prop:epi} it suffices to check that the following diagrams commute. 

    \[
    \begin{array}{c@{\qquad}c}
        \begin{tikzpicture}[diagram, scale=0.9, transform shape]
            \SpliceDiagramSquare<sq/>{
                nw = L^2 X,
                sw = L X,
                ne = L X,
                se = X,
                west = \mu',
                south = \alpha,
                east = \alpha,
                north = L \alpha,
            }
            \node[left = of sq/nw, yshift=1cm] (X) {$\MkSet{(\bot, \kw{abort})}$};
            \draw[embedding](X) to (sq/nw);
            \draw[embedding](X) to[bend right] (sq/sw);
            \draw[embedding*](X) to[bend left] (sq/ne);
        \end{tikzpicture}
        &
        \begin{tikzpicture}[diagram, scale=0.9, transform shape]
            \SpliceDiagramSquare<sq/>{
                nw = L^2 X,
                sw = L X,
                ne = L X,
                se = X,
                west = \mu',
                south = \alpha,
                east = \alpha,
                north = L \alpha,
            }
            \node[left = of sq/nw, yshift=1cm] (X) {$L X$};
            \draw[embedding](X) to node[above] {$\eta$} (sq/nw);
            \draw[=,double distance=2pt](X) to[bend right] (sq/sw);
            \draw[=,double distance=2pt](X) to[bend left] (sq/ne);
        \end{tikzpicture}
    \end{array}
    \]

    For the left diagram, we have that the outer triangles commute by direct computation; for the right diagram, we have that both triangles are the identity by the unit law of monads and the unit algebra law, respectively. 
\end{proof}

\subsection{Domains and recursion}

\begin{restatable}{prop}{BottomLeast}\label[prop]{prop:bottom}
    For a predomain $X$ the least element of $L X$ is given by the unique divergent element.
\end{restatable}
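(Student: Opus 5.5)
To show that $(\bot,\kw{abort})$ is least in $L X$, the plan is to exhibit, for every $u : L X$, a path $(\bot,\kw{abort}) \leadsto u$. Such a path is automatically unique, since $L X$ is a predomain by closure under lifting (\cref{ax:sdt}) and the path relation there is a partial order. Writing $u = (\phi, f)$ with $\phi : \I$ and $f : (\phi = \top) \to X$, the candidate path is
\[ \alpha(i) \defeq (i \wedge \phi,\; \lambda p.~f(\pi_2 p)). \]
This uses that $\I$ is a dominance: $i \wedge \phi$ is the dependent sum $i \mathbin{\angle} (\lambda \_.~\phi)$, hence lies in $\I$, and a proof of $i\wedge\phi = \top$ yields a proof of $\phi = \top$.

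Next I would check the boundary. At $i = 1 = \top$ we get $(\top\wedge\phi, f) = (\phi, f) = u$ by propositional extensionality for the first component. The second component then agrees because it lives over the same proposition, and every proof of $\phi = \top$ is equal by proof irrelevance. At $i = 0 = \bot$ we get $(\bot \wedge \phi, -) = (\bot, -)$. Since $\bot = \top$ has no inhabitants, the second component is necessarily $\kw{abort}$, so $\alpha(0) = (\bot,\kw{abort})$.

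An alternative is the \sierp{} cone description of $L X$ from \cref{ax:sdt}. The pushout map $\I \times X \to L X$ sending $(i,x) \mapsto (i, \lambda\_.~x)$ gives paths $(\bot,\kw{abort}) \leadsto \eta x$ for total elements. However, extending this to all of $L X$ needs an argument that "being above $\bot$" is a property stable enough to propagate along the epimorphism $1+X \to L X$ of \cref{prop:epi}. That is not obviously available, since being an upper bound is not an equation between maps. For this reason I would use the direct construction above, which avoids the issue entirely.

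The main obstacle is therefore merely bookkeeping: confirming that $i \wedge \phi$ genuinely lands in $\I$ via the dominance axiom, and handling the dependent second component correctly at the endpoints. Once the path exists, leastness follows immediately. Uniqueness of the divergent element, namely that any $(\phi,f)$ with $\phi = \bot$ equals $(\bot,\kw{abort})$, is the same function-extensionality observation used at $i=0$.
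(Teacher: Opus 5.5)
Your proof is correct and is essentially the paper's argument. For each $u = (\phi, f) : L X$ you give the explicit path $i \mapsto (i \wedge \phi, \lambda p.~f(\pi_2 p))$ from $(\bot, \kw{abort})$ to $u$; the endpoint checks and the uniqueness of the divergent element reduce to propositional and function extensionality, and the path is unique because $L X$ is a predomain.

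One small correction to your aside: the Sierpi\'nski-cone route does go through. The cone's universal property, with the constant path at $(\bot,\kw{abort})$ and $(i, x) \mapsto \lambda j.~(i \wedge j, \lambda -.~x)$, gives a map $h : L X \to (L X)^{\I}$. Its boundary condition $\partial \circ h = \langle (\bot,\kw{abort}), 1 \rangle$ is an equation of maps, which you can check on $1 + X$ by \cref{prop:epi}. That route only recovers your path, though, so the direct construction is the better choice.
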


Synthetic domain theory also supports a version of fixed-point induction for suitable propositions. 

\begin{defn}
    A subset of a domain is \emph{admissible} when it is complete and closed under the least element. 
\end{defn}

\begin{prop}\label[prop]{prop:complete}
    The category of complete types is closed under internal limits. 
\end{prop}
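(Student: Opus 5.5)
Since the predomains form a reflective subuniverse $\CCat\hookrightarrow\ECat$, every internal limit of predomains is computed as in $\ECat$. So the statement reduces to a purely formal fact about locality: the class of types local with respect to the single map $\iota:\omega\hookrightarrow\overline{\omega}$ is closed under internal limits in the ambient topos. I will prove this directly from \cref{def:local}, using that limits commute with exponentials.

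Let $D : \mathcal{J}\to\ECat$ be an internal diagram of complete types, with limit $\lim_j D_j$. For any type $A$, the universal property of the limit and Cartesian closure of the topos give a canonical isomorphism
\[
(\lim_j D_j)^A \cong \lim_j D_j^A ,
\]
natural in $A$: maps out of $A$ into a limit are compatible families of maps out of $A$. Naturality in $A$ means that, along $\iota$, the precomposition map $(\lim_j D_j)^{\iota}$ corresponds under these isomorphisms to the induced map $\lim_j D_j^{\iota} : \lim_j D_j^{\overline{\omega}} \to \lim_j D_j^{\omega}$. Each $D_j^{\iota}$ is an isomorphism because each $D_j$ is complete. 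These isomorphisms are natural in $j$, since they are precomposition with the fixed map $\iota$ and transition maps act by postcomposition. A limit of a natural isomorphism is an isomorphism, so $(\lim_j D_j)^\iota$ is an isomorphism, which says exactly that $\lim_j D_j$ is complete.

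Concretely, in the internal language, an element of the limit is a compatible family $(x_j)_j$. Given $g : \omega \to \lim_j D_j$, the components $g_j : \omega\to D_j$ have unique extensions $\overline{g_j}$. The family $(\overline{g_j})_j$ is compatible because, for a transition map $u : D_j \to D_k$, both $u\circ\overline{g_j}$ and $\overline{g_k}$ extend $g_k = u\circ g_j$, so they agree by uniqueness of extensions into $D_k$. Uniqueness of the assembled extension holds componentwise.

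The only step needing care is internality: $\mathcal{J}$ is an internal category, so the argument must be phrased in the internal language, quantifying over internal objects and morphisms of $\mathcal{J}$, rather than externally. The elementwise argument above is already constructive and internal, and "local" is itself defined internally via exponentials, so I expect no real obstacle here. Finally, I will note that closure of predomains under limits in $\ECat$ follows from reflectivity, so the result also yields that limits computed in $\CCat$ are complete.
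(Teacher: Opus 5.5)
Your argument is correct. The paper states this proposition without proof. From the surrounding text, the intended justification appears to be the following: complete types are exactly the $\{\omega\hookrightarrow\overline{\omega}\}$-local types, so by the cited result of Rijke, Shulman and Spitters they form a reflective subuniverse, and reflectivity gives closure under ambient limits.

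You instead prove closure directly. Since $(-)^{A}$ preserves limits naturally in $A$, precomposition with $\omega\hookrightarrow\overline{\omega}$ on $\lim_j D_j$ is a limit of isomorphisms. Your elementwise unique-extension version, assembling the $\overline{g_j}$ by unique choice, is a faithful internal rendering of the same argument.

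Your route is more elementary. It does not need the existence of the localization or reflector at all. It also works verbatim for locality with respect to any class of maps, including a large one such as the class defining repleteness. The reflectivity route is a one-liner, but only once the heavier construction of the reflection is available.

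One small point: your opening paragraph's detour through $\CCat$ is unnecessary and slightly misdirected. The statement concerns complete types in $\ECat$, which need not be predomains. The rest of your argument never uses $\CCat$, so you can simply drop it.
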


\begin{cor}\label[cor]{cor:intersection}
    Admissible types are closed under arbitrary intersections. 
\end{cor}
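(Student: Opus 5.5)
We must show that an arbitrary intersection of admissible subsets of a domain $X$ is again admissible, i.e.\ complete and containing the least element. The plan is to take a family $(A_k)_{k : K}$ of admissible subsets of $X$, indexed by an arbitrary type $K$, and to treat the two conditions in \cref{defn} separately for $A \defeq \bigcap_{k:K} A_k$.

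\emph{Closure under the least element.} This is immediate: for each $k$ we have $\bot \in A_k$, hence $\bot \in A$. Here $\bot$ is the least element of the domain $X$, obtained from its algebra structure as above.

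\emph{Completeness.} The idea is to present $A$ as an internal limit of complete types and then invoke \cref{prop:complete}. Concretely, $A$ is the wide pullback of the monomorphisms $A_k \hookrightarrow X$ over $X$. Internally this is the equalizer or limit of the diagram consisting of the object $X$ together with the inclusions $A_k \hookrightarrow X$, one for each $k : K$. Every object in this diagram is complete:
\begin{enumerate}
  \item each $A_k$ is complete by admissibility;
  \item $X$ is complete because it is a predomain, by \cref{ax:sdt}.
\end{enumerate}
Hence \cref{prop:complete} yields that $A$ is complete.

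Alternatively, one can argue directly with the locality condition. Given $g : \omega \to A$, for each $k$ the composite $\omega \to A_k$ extends uniquely to some $\overline{g}_k : \overline{\omega} \to A_k$. These extensions agree as maps into $X$, since $X$ is local and each is an extension of the same map $\omega \to X$. The common extension therefore factors through every $A_k$, hence through $A$. Uniqueness follows because $A \hookrightarrow X$ is monic.

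The main subtlety is that the index type $K$ is arbitrary, possibly large or non-decidable. The intersection must therefore be read as an internal limit, namely the subset $\{x : X \mid \forall k.\ x \in A_k\}$. The direct argument above handles exactly this case and only uses pointwise reasoning in the internal language, so I expect no real obstacle beyond stating the limit correctly.
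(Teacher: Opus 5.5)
Your proposal is correct and matches the paper: the corollary is read off directly from \cref{prop:complete} by viewing the intersection $\{x : X \mid \forall k.\ x \in A_k\}$ as an internal limit (the wide pullback of the monos $A_k \hookrightarrow X$) of complete types, and closure under the least element is immediate. Your alternative direct argument, using unique extensions along $\omega \hookrightarrow \overline{\omega}$ together with the completeness of $X$, is also sound; it just unfolds \cref{prop:complete} in this special case.
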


\begin{prop}\label[prop]{prop:admissible}
    Let $\phi : X \to \I$ be an \I{}-subset of a domain $X$ such that $\phi(\bot_X) = \bot$ and $\psi$ an \I{}-proposition. Then the subset $\MkSet{x : X \mid \phi~x \to \psi}$ is admissible. 
\end{prop}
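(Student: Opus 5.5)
We have to show that $A \defeq \MkSet{x : X \mid \phi~x \to \psi}$ is admissible, i.e.\ complete and containing $\bot_X$. The least element is immediate: $\phi(\bot_X) = \bot$, so $\phi(\bot_X) = \top$ never holds, and the implication $\phi(\bot_X) \to \psi$ is vacuously true. All the work is therefore in completeness: $A$ must be local with respect to $\iota : \omega \hookrightarrow \overline{\omega}$.

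My plan is to present $A$ as a limit of complete types and invoke \cref{prop:complete}. Write $\phi \to \psi$ for the internal implication of the two \I{}-propositions, regarded as a map $X \to \Omega$. The trouble is that $\Omega$ itself is not obviously complete, so I would not use it as the classifier. Instead I would use that \I{} is a dominance and $\psi : \I$ to rewrite the predicate in terms of \I{} alone. Concretely, $\phi~x \to \psi$ holds iff $\phi~x = \phi~x \wedge \psi$, where $\phi \wedge \psi \defeq \phi \mathbin{\angle} (\lambda\_.\,\psi) : \I$ by closure of the dominance under dependent sums. Hence $A$ is the equalizer of the two maps
\[
\phi,\ \lambda x.\,\phi~x \wedge \psi \;:\; X \to \I .
\]
Both $X$ and $\I$ are predomains, because $\CCat$ contains \I{} by \cref{ax:sdt}, and predomains are complete. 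By \cref{prop:complete} this equalizer is complete.

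If the equalizer formulation gives trouble, I would instead check locality directly. Take $g : \omega \to A$ and let $\overline{g} : \overline{\omega} \to X$ be the unique extension in $X$. We need $\phi(\overline{g}(\infty)) \to \psi$, or more precisely $\overline{g}(n) \in A$ for every $n : \overline{\omega}$. The map $\phi \circ \overline{g}$ and the map $\phi\circ\overline{g} \wedge \psi$ agree on $\omega$. Since \I{} is complete, and therefore local for $\iota$, they agree on $\overline{\omega}$. Uniqueness of the extension in $A$ then follows from uniqueness in $X$, because $A \hookrightarrow X$ is mono.

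I expect the main obstacle to be making this reduction airtight, namely that the implication predicate is genuinely the equalizer of maps into \I{}. This requires the meet of \I{}-propositions to lie in \I{} and to satisfy $(\phi \wedge \psi = \phi) \iff (\phi \to \psi)$. Both follow from \I{} being a dominance, using propositional extensionality for equality in $\I \subseteq \Omega$. Note that the hypothesis $\phi(\bot_X) = \bot$ is used only for the least-element clause, not for completeness.
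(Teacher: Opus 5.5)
Your proof is correct, but it reaches completeness through a different limit than the paper does.

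The paper uses Phoa's principle. Under it, $\phi~x \to \psi$ holds exactly when there is a (necessarily unique) path $\phi~x \leadsto \psi$ in \I{}. So the subset is the pullback of the boundary map $\partial : \I^\I \to \I^2$ along $\langle \phi, \psi \circ ! \rangle : X \to \I^2$. This is a limit of the complete types $X$, $\I^\I$ and $\I^2$.

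You avoid Phoa's principle entirely. You use only that \I{} is a dominance, hence closed under binary meets, together with propositional extensionality. This exhibits the subset as the equalizer of $\phi$ and $\lambda x.\,\phi~x \wedge \psi$. Your fallback locality check is the same equalizer argument unfolded by hand.

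What each route buys:
\begin{itemize}
\item Yours is more elementary: it needs only that \I{} is a complete dominance.
\item The paper's is an instance of the pullback-along-$\partial$ pattern it uses repeatedly to show that order conditions $a \pathle b$ cut out complete subsets. That pattern works for the path order on any predomain, whereas your meet trick is specific to \I{}, or more generally to meet-semilattices where $a \le b$ iff $a = a \wedge b$.
\end{itemize}

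The least-element clause is handled identically in both.
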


\begin{proof}
    The subset $\MkSet{x : X \mid \phi~x \to \psi}$ contains the least element of $X$ as $\phi(\bot_X) = \bot$, and it is complete because it may be expressed as the following pullback diagram of complete types. 

    \[
        \DiagramSquare{
            width = 3cm,
            nw = P, 
            sw = X,
            ne = \I^\I,
            se = \I^2,
            east = \partial,
            south = {\langle \phi, \psi \circ ! \rangle},
            nw/style = pullback,
        }
    \]
\end{proof}

\begin{prop}
    Given an admissible subset $\Phi \subseteq X$ and $f : X \to X$, the fixed point $\mu f : X$ of $f$ is contained in $\Phi$ whenever $f~\Phi \subseteq \Phi$ (\ie{} $\Phi$ is a pre-fixed point of $f$).  
\end{prop}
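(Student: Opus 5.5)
We prove the fixed-point induction principle by realising the fixed point $\mu f$ as the extension of a map $\overline{\omega} \to X$ along the completeness of $X$. The key observation is that this map restricts along $\omega \hookrightarrow \overline{\omega}$ to a map landing in $\Phi$. Completeness of $\Phi$ then forces the extended map, and hence $\mu f$, to land in $\Phi$ as well.

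First recall how \cref{prop:fixed-points} constructs $\mu f$. Since $\omega$ is the initial lift algebra and $X$ carries an \LL{}-algebra structure $\alpha$, the map $f\circ\alpha : L X \to X$ makes $X$ an $L$-algebra. Initiality then gives a unique algebra map $c : \omega \to X$, whose values are the iterates $\bot_X, f\bot_X, f^2\bot_X,\dots$. By completeness of $X$ there is a unique extension $\overline{c} : \overline{\omega} \to X$, and $\mu f \defeq \overline{c}(\infty)$, where $\infty : \overline{\omega}$ is the point fixed by the final coalgebra structure. The fact that $\mu f$ is a fixed point follows from uniqueness of extensions. I would take this description as the definition underlying \cref{prop:fixed-points}.

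Next, show that $c$ factors through $\Phi$. Here $\Phi$ contains $\bot_X = \alpha(\bot,\kw{abort})$ by admissibility and is closed under $f$ by hypothesis. The cleanest approach is to prove $c(n) \in \Phi$ by induction on $\omega$, using the initial-algebra induction principle. The relevant structure is as follows: the divergent element $(\bot,\kw{abort})$ of $L\omega$ is sent to $f(\bot_X)$, and a defined element $\eta(n)$ is sent to $f(c(n))$. More carefully, the algebra map sends $(\phi, u) : L\omega$ to $f(\alpha(\phi, c\circ u))$. So I need $\alpha(\phi, c \circ u) \in \Phi$ whenever every $c(u(\ast))$ lies in $\Phi$ under the assumption $\phi$. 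This is the main obstacle: it asks that $\Phi$ be closed under the algebra structure on partial elements, and not merely under $\bot$ and $f$.

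To resolve this, I would use that $\alpha(\phi, v)$ is the point of a path $\bot_X \leadsto v$ at $\phi$ (as in \cref{prop:is-domain}), combined with completeness and locality of $\Phi$. Alternatively, following \cref{prop:epi}, one can check membership only on totally defined and undefined elements, since $1 + \omega \to L\omega$ is epi in predomains. If that proves delicate, I would instead define $\Phi$-valued iterates directly: restricting $f$ gives $f|_\Phi : \Phi \to \Phi$ with the point $\bot_X$. I would then argue that $\Phi$ is a subalgebra, so initiality yields $c' : \omega \to \Phi$, and uniqueness gives $\iota\circ c' = c$.

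Finally, because $\Phi$ is complete, $c'$ extends uniquely to $\overline{c'} : \overline{\omega} \to \Phi$. Since $X$ is complete, extensions into $X$ are unique, so $\iota\circ\overline{c'} = \overline{c}$. Therefore $\mu f = \overline{c}(\infty) = \iota(\overline{c'}(\infty)) \in \Phi$.
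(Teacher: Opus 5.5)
The step everything rests on, that the iteration map $c:\omega\to X$ factors through $\Phi$, is never established, and none of your three routes establishes it. Initiality of $\omega$ applies only if $f\circ\alpha$ restricts to a map $L\Phi\to\Phi$. That requires $\alpha(\chi,u)\in\Phi$ for every partial element $u$ of $\Phi$ with extent $\chi$, and completeness together with $\bot\in\Phi$ does not give this.

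Your three routes fail as follows.
\begin{itemize}
\item The path idea fails because a complete subset need not contain the interior points of paths. The set $\MkSet{0,1}\subseteq\I{}$ is complete (it is $2$, a predomain by \cref{ax:finite-join}) and contains $\bot$. Closure under $\alpha$ would force $\chi=\alpha(\chi,\lambda{-}.\,1)\in\MkSet{0,1}$ for all $\chi:\I{}$, i.e.\ $\I{}=2$, which Phoa's principle excludes.
\item The epimorphism idea fails twice over. First, \cref{prop:epi} concerns $LY$ for a predomain $Y$, and $\omega$ is not even complete: the identity would extend to a retraction $\overline{\omega}\to\omega$ carrying $\infty$ to a fixed point of the successor. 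Second, an epimorphism only identifies parallel maps; it cannot force a map to land in an arbitrary subobject.
\item The third option ("argue that $\Phi$ is a subalgebra") just restates the obstacle.
\end{itemize}

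In fact the gap cannot be closed. With admissibility read literally as ``complete and containing $\bot$'', the statement is false. Take $X=\overline{\omega}$ (a domain with least element $0$), $f=\mathrm{succ}$, and $\Phi=\MkSet{\mathrm{succ}^n(0)\mid n:\Nat}\cong\Nat$.
\begin{itemize}
\item By Phoa's principle every map $\I{}\to\Nat$ is constant: the decidable subset where it agrees with its value at $0$ gives a map $\I{}\to\I{}$ with boundary $(\top,-)$, which must be constantly $\top$.
\item Write $\sigma$ for the structure isomorphisms $L\omega\cong\omega$ and $L\overline{\omega}\cong\overline{\omega}$. Every element $\sigma(\chi,u)$ is joined to $0$ by the path $i\mapsto\sigma(i\wedge\chi,u)$.
\item Hence every map from $\omega$ or $\overline{\omega}$ into $\Nat$ is constant, so $\Phi$ is complete.
\end{itemize}
Now $\Phi$ contains $0$ and is closed under $f$. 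Yet $\mu f=\infty$ (the only fixed point of the successor, by finality), and $\infty$ differs from every $\mathrm{succ}^n(0)$, since the observation ``$\geq n+1$'' separates them.

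What the proof needs is that $\Phi$ be a complete sub-$\LL$-algebra of $X$, i.e.\ closed under $\alpha$; this subsumes $\bot=\alpha(\bot,\kw{abort})\in\Phi$. With that hypothesis your last two paragraphs are a complete proof. The map $f\circ\alpha$ restricts to $L\Phi\to\Phi$, and initiality gives $c'$ with $\iota c'=c$. Completeness of $\Phi$ extends $c'$ to $\overline{c'}$, and uniqueness of extensions into $X$ gives $\iota\overline{c'}=\overline{c}$, so $\mu f=\overline{c}(\infty)\in\Phi$.

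The sets the paper actually uses still qualify. Take $\MkSet{x\mid\phi\,x\to\psi}$ with $\phi(\bot)=\bot$, and let $y=\alpha(\chi,u)$. Phoa's principle applied to the path $i\mapsto\phi(\alpha(i,\lambda{-}.\,y))$, which starts at $\bot$, shows $\phi(\alpha(\chi,u))=\chi\wedge\phi(y)$, so $\phi(\alpha(\chi,u))$ implies $\chi$. Hence these sets are closed under $\alpha$, and that closure survives the intersections of \cref{cor:intersection}.
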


\subsection{Basic predomains}

Our axioms (\cref{ax:sdt}) imply that the category of predomains contains \emph{computational} analogues to base types such as the booleans and natural numbers. Following Simpson~\cite{simpson:2004} we refer to these objects as computational because in general they need not be preserved by the inclusion $\CCat \to \ECat$ unless the dominance \I{} is \emph{closed} under joins of a corresponding shape; for instance the type 2 of booleans is a predomain whenever \I{} has binary joins that are preserved by $(- = \top) : \I{} \hookrightarrow \Omega$~\cite{reus:1995}. For the purposes of denotational semantics we would like to be able to \emph{observe} when a natural number is a possible outcome of a nondeterministic program; moreover for the purposes of proving computational adequacy we further require that the computational natural numbers have decidable equality. By the preceding remark we may ensure this by closing the dominance under finite joins.

\begin{ax}\label[ax]{ax:finite-join}
    The dominance \I{} has finite joins that are preserved by $(- = \top) : \I{} \hookrightarrow \Omega$.
\end{ax}

\begin{restatable}{prop}{NatSigmaEquality}\label[prop]{prop:nat}
    The predomain \NatP{} of natural numbers has decidable equality. 
\end{restatable}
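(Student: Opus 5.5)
We need to prove that the computational natural numbers $\NatP$, the natural numbers object in $\CCat$, have decidable equality. I would do this by building an equality test valued in $\I$ and then reducing the question to decidability of its values.

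\textbf{Step 1: an $\I$-valued equality test.} First I will define $\mathrm{eq} : \NatP \times \NatP \to \I$. Since $\I$ lies in $\CCat$ by \cref{ax:sdt}, so is $\I^{\NatP}$, because a reflective subuniverse is an exponential ideal for local objects. Hence the universal property of $\NatP$ as the natural numbers object of $\CCat$ lets us define maps into $\I^{\NatP}$ by iteration. I set
\begin{itemize}
  \item $\mathrm{eq}(0) = \mathrm{isZero}$, where $\mathrm{isZero} : \NatP \to \I$ is itself defined by recursion, sending $0 \mapsto \top$ and $s\,n \mapsto \bot$;
  \item $\mathrm{eq}(s\,m) = \lambda n.\ \mathrm{case}\ n\ \mathrm{of}\ 0 \mapsto \bot \mid s\,n' \mapsto \mathrm{eq}(m)(n')$.
\end{itemize}
The case analysis on $n$ is itself recursion into $\I$, using the predecessor, which is available from the natural numbers object. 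This definition needs only that $\I$ contains $\bot$ and $\top$. The constants $\bot$ and $\top$ are where finite joins enter implicitly: they are the nullary meet and join.

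\textbf{Step 2: $\mathrm{eq}$ reflects equality.} Next I will prove $\mathrm{eq}(m,n) = \top \iff m = n$.
\begin{itemize}
  \item For the direction from $m = n$, I show that $\{ m \mid \mathrm{eq}(m,m) = \top \}$ is all of $\NatP$. This subset is a pullback of $\top : 1 \to \I$, hence lies in $\CCat$, and it is closed under $0$ and $s$. By the initiality of $\NatP$ in $\CCat$ (induction for $\CCat$-subobjects), it is everything.
  \item For the converse, I use induction on $m$ into the proposition $\forall n.\ \mathrm{eq}(m,n) = \top \to m = n$. This proposition is a limit of $\CCat$-objects, since equality types of predomains are local, and so it again lies in $\CCat$.
\end{itemize}

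\textbf{Step 3: decidability of $\mathrm{eq}(m,n)$ (main obstacle).} It remains to show each $\mathrm{eq}(m,n)$ is decidable in $\Omega$, that is, to show $\mathrm{eq}(m,n) = \top \vee \mathrm{eq}(m,n) = \bot$. This is where \cref{ax:finite-join} is used.

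Consider the subset $D = \{ \phi : \I \mid \phi = \top \vee \phi = \bot \}$, the image of $2 \to \I$. Binary joins in $\I$ preserved by $(-=\top)$ make $2$ a predomain, as recalled before \cref{ax:finite-join} following Reus. The embedding $2 \hookrightarrow \I$ is then a map in $\CCat$, and I take the predicate "$\mathrm{eq}(m,n)$ factors through $2$".

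The cleaner route is to redo Step 1 with codomain $2^{\NatP}$ instead of $\I^{\NatP}$, which is legitimate since $2 \in \CCat$. This yields $\mathrm{eq}' : \NatP^2 \to 2$ whose composite with $2 \hookrightarrow \I$ agrees with $\mathrm{eq}$; the agreement follows by uniqueness of maps out of the natural numbers object. Since values in $2$ are decidable, so are the values of $\mathrm{eq}$, and Step 2 transfers decidability to $m = n$.

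The delicate point is checking that $2$ really is local, i.e.\ in $\CCat$. I expect this step to need the preservation of joins by $(- = \top)$: it guarantees that the disjunction $\phi = \top \vee \psi = \top$ is an $\I$-proposition. I would then argue via the known repleteness of $2$ under that hypothesis, citing Reus.
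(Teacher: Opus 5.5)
Your argument works, but it takes a different route from the paper. The paper never recurses on $\NatP$. It takes the characteristic map $\chi : \Nat^2 \to 2$ of the ambient diagonal and uses that $2$ is a predomain (by \cref{ax:finite-join}, the same fact you isolate) to extend $\chi$ along $\eta_\synRef : \Nat^2 \to \synRef(\Nat^2) \cong \NatP^2$. It then proves $\overline{\chi}(x,y) = \top \Rightarrow x = y$ by a single $\synRef$-induction, where the induction predicate is a predomain because it is the pullback of $L(\Delta_{\NatP}) \hookrightarrow L(\NatP^2)$ along $(x,y) \mapsto (\overline{\chi}(x,y), (x,y))$. The advantage is that decidability is inherited directly from $\Nat$: the induction is checked on ambient numerals, so elements of $\NatP$ never have to be analyzed. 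Your construction is intrinsic to $\CCat$ instead, using only that $\NatP$ is its natural numbers object and that $2 \in \CCat$. You also prove both directions, including $\mathrm{eq}(m,m) = \top$, which is needed to conclude $m \neq n$ from the value $\bot$ and which the paper leaves implicit. The $\I$-valued version is superfluous; you can recurse into $2^{\NatP}$ from the start.

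The cost is that Step 2 needs more care than you give it, on two points.

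\textbf{Locality of the induction predicate.} The fact that equality types of predomains are local does not by itself make $\{m \mid \forall n.\ \mathrm{eq}(m,n) = \top \to m = n\}$ a predomain. $\CCat$ is only a reflective subuniverse, not closed under $\Sigma$, so fiberwise locality of a predicate says nothing about the subobject it carves out. For instance, for localization at a map with inhabited domain such as $\omega \hookrightarrow \overline{\omega}$, \emph{every} proposition is local. You must present the subset as an honest limit. For example, $\{(m,n) \mid \mathrm{eq}(m,n) = \top \to m = n\}$ is the equalizer of $(m,n) \mapsto (\mathrm{eq}(m,n), \lambda -.\,m)$ and $(m,n) \mapsto (\mathrm{eq}(m,n), \lambda -.\,n)$ into $L\NatP$. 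This uses closure under lifting, exactly as in the paper, and the quantifier $\forall n$ is then a pullback of an exponential.

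\textbf{No case split on elements.} You cannot split an element of $\NatP$ into zero or successor. Lambek's lemma only gives $\NatP \cong \synRef(1 + \NatP)$, not an isomorphism with the ambient coproduct. So the base and successor steps of your induction on $m$ each require an inner induction on $n$. These go through without using their own hypotheses, but each inner predicate must again be shown to be a predomain in the same way.
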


\begin{proof}
    First observe that because the ambient natural numbers is decidable, the characteristic map for the diagonal set $\Delta_\Nat \hookrightarrow \Nat^2$ factors through the booleans 2, which is a predomain by virtue of \cref{ax:finite-join}. Thus we have a unique extension as follows.

    \[\Local[\chi][->]{\Nat^2}{2}{\synRef\Nat^2 \cong \NatP^2}[][\overline{\chi}][3cm]\]

    We aim to show that $\overline{\chi}(x, y) = \top$ implies $x = y$ for all $x, y : \NatP$. We proceed by \synRef{}-induction with the subset $\Phi = \MkSet{(x, y) \mid \overline{\chi}(x,y) \to x = y}$. 

    \begin{enumerate}
        \item First we need to show that $\Phi(\eta_\synRef(m), \synRef(n))$ holds for all $m,n : \Nat$. But this holds since $\chi$ is the characteristic map of $\Delta_\Nat \hookrightarrow \Nat^2$.
        \item Next we need to show that $\Phi$ is a predomain, which follows from the fact it may be expressed as the following pullback diagram. 

        \[
        \DiagramSquare{
            nw = \Phi,
            sw = \NatP^2,
            ne = L(\Delta_{\NatP}), 
            se = L(\NatP^2),
            nw/style = pullback,
        }
        \]

        In the above the right map is given by the lift of the inclusion $\Delta_{\NatP} \hookrightarrow \NatP^2$ and the bottom map is defined by sending $x, y : \NatP$ to $(\overline{\chi}(x,y), x, y)$.
    \end{enumerate}
\end{proof}

\section{Synthetic powerdomains}\label{sec:powerdomains}

The universal characterization of powerdomains as certain \emph{continuous algebras}~\cite{hennessy-plotkin:1979,gunter-etal:1989} serves as a blueprint for constructing powerdomains in synthetic domain theory. Indeed Phoa and Taylor~\cite{phoa-taylor:1990} had already carried out the case for the Plotkin/convex powerdomain; here we recall their construction. By the characterization of the Plotkin powerdomain as the \emph{free} semilattice in a given category of predomains, it suffices to find a left adjoint $\CCat \to \kw{SL}(\CCat)$ for the top functor in the following diagram of forgetful functors, where we write $\kw{SL}(\CCat)$ for the category of semilattices and semilattice homomorphisms in $\CCat$.

\[
\DiagramSquare{
    nw = \kw{SL}(\CCat),
    sw = \kw{SL}(\ECat),
    ne = \CCat,
    se = \ECat,
    north/style = embedding,
    south/style = embedding,
    east/style = embedding,
    west/style = embedding,
}
\]

Observe that all other subcategory inclusions have left adjoints. By \cref{ax:sdt} the full subcategory $\CCat \to \ECat$ is right adjoint to the synthetic predomain reflection $\mathcal{S} : \ECat \to \CCat$, and the latter also restricts to the left adjoint to the inclusion $\kw{SL}(\CCat) \hookrightarrow \kw{SL}(\ECat)$\footnote{This is because the synthetic predomain reflection preserves finite products and thus preserves models of algebraic theories (such as semilattices) and their homomorphisms.}; the left adjoint to the inclusion $\kw{SL}(\ECat) \hookrightarrow \ECat$ is given by the free semilattice construction $K : \ECat \to \kw{SL}(\ECat)$, which is available in any topos. Thus we may define a functor $\CCat \to \kw{SL}(\CCat)$ by the composition $\CCat \hookrightarrow \ECat \xrightarrow{K} \kw{SL}(\ECat) \xrightarrow{\mathcal{S}} \kw{SL}(\CCat)$; it is left adjoint to $\kw{SL}(\CCat) \hookrightarrow \CCat$ because the subcategory inclusion $\CCat \to \ECat$ is full and faithful. In the following we fix a general notion of a powerdomain construction (which includes the construction of \opcit{}). 

\begin{defn}\label[defn]{def:powerdomain}
    A \emph{powerdomain construction} is a functor $\mathcal{P} : \CCat \to \kw{SL}(\CCat)$ that sends domains to domains and such that postcomposing with the forgetful functor $\kw{SL}(\CCat) \to \CCat$ is a monad. 
\end{defn}

Unfolding the above, a powerdomain on a predomain $X$ is a predomain semilattice algebra $(\mathcal{P}X, \vee_\mathcal{P})$ equipped with a singleton operation $\singleton{-}_{\mathcal{P}} : X \to \mathcal{P}X$ and monadic lift operation $\Tilde{-}_{\mathcal{P}} : (X \to \mathcal{P}Y) \to (\mathcal{P}X \to \mathcal{P}Y)$. We may omit the subscripts $\mathcal{P}$ when there is no ambiguity. 

\begin{rem}
    The semilattice operation $\vee$ in a powerdomain construction has no \emph{a priori} relation to the path ordering; it is however by definition always the join for the \emph{subset ordering} $S \subseteq T \defeq (S \vee T = T)$. 
\end{rem}

\subsection{Distributive laws and monad structure}

Despite the name a powerdomain $\mathcal{P}X$ need not be pointed unless $X$ is pointed. In general we need to precompose with lifting to obtain a domain suitable for interpreting recursive functions. For the purposes of denotational semantics we want to show that this composition is a monad.

\begin{prop}\label[prop]{prop:monad}
    Let $\MM = (M, \eta_\MM, \mu_M)$ be a monad on \CCat{} such that $\eta_\MM$ preserves the least element. Then there is a distributive law $\tau : \LL\MM \to \MM\LL$ of the lift monad over $\MM$. 
\end{prop}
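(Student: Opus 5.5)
The plan is to build the components $\tau_X : L M X \to M L X$ from the \sierp{} cone universal property (\cref{ax:sdt}, \cref{prop:sierp}), as in the proof of \cref{prop:is-domain}, and then to check every equation with \cref{prop:epi}. I read the hypothesis as follows: whenever $Y$ has a least element $\bot_Y$, the element $\eta_\MM(\bot_Y)$ is least in $M Y$. Applied to $Y = L X$, \cref{prop:bottom} then makes $\eta_\MM(\bot_{LX})$ the least element of $M L X$, where $\bot_{LX} = (\bot, \kw{abort})$. Consequently, for every $m : M X$ there is a unique path $\eta_\MM(\bot_{LX}) \leadsto M(\eta_\LL)(m)$, unique because the path order on predomains is a partial order.

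The cone data for $\tau_X$ are then:
\begin{enumerate}
    \item the distinguished point $\eta_\MM(\bot_{LX})$;
    \item the map $\I \times M X \to M L X$ given by $(i,m) \mapsto (\eta_\MM(\bot_{LX}) \leadsto M(\eta_\LL)(m)) \mathbin{@} i$.
\end{enumerate}
Its restriction at $0$ is constantly the distinguished point, so the universal property yields $\tau_X$. By construction $\tau_X(\bot_{LMX}) = \eta_\MM(\bot_{LX})$ and $\tau_X \circ \eta_\LL = M(\eta_\LL)$. One subtlety is that the path is chosen uniquely rather than given by a formula, so its definability as a function of $m$ has to be justified. I would do this by noting that paths with a fixed boundary form a subsingleton, so unique choice in the topos produces the map.

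Every remaining equation has a domain of the form $L Z$ with $Z$ a predomain, so by \cref{prop:epi} it is enough to check it after precomposing with $\MkSet{\bot} \hookrightarrow L Z$ and with $\eta_\LL : Z \to L Z$.
\begin{itemize}
    \item \emph{Naturality in $f : X \to Y$.} At $\bot$ both sides give $\eta_\MM(\bot_{LY})$, using naturality of $\eta_\MM$ together with $L f(\bot) = \bot$. At $\eta_\LL$ both sides give $M(\eta_\LL f)$.
    \item \emph{Unit of $\LL$,} $\tau \circ \eta_\LL = M\eta_\LL$. This holds by construction.
    \item \emph{Unit of $\MM$,} $\tau \circ L\eta_\MM = \eta_\MM$. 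At $\bot$ both sides are $\eta_\MM(\bot)$. At $\eta_\LL x$ we get $M\eta_\LL(\eta_\MM x) = \eta_\MM(\eta_\LL x)$ by naturality of $\eta_\MM$.
    \item \emph{Multiplication of $\LL$,} $\tau \circ \mu_\LL = M\mu_\LL \circ \tau_{L} \circ L\tau$. At $\bot$ both sides reduce to $\eta_\MM(\bot)$, using $\mu_\LL(\bot) = \bot$ and naturality of $\eta_\MM$. At $\eta_\LL u$ the equation reduces to $\tau_X(u) = M\mu_\LL(M\eta_\LL(\tau_X u))$, which follows from the monad law $\mu_\LL \eta_\LL = 1$.
    \item \emph{Multiplication of $\MM$,} $\tau \circ L\mu_\MM = \mu_\MM \circ M\tau \circ \tau_{M}$. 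At $\bot$ we compute $\mu_\MM(M\tau(\eta_\MM \bot)) = \mu_\MM(\eta_\MM(\tau \bot)) = \eta_\MM(\bot)$. At $\eta_\LL n$ both sides reduce to $M\eta_\LL(\mu_\MM n)$, by naturality of $\mu_\MM$.
\end{itemize}
These cases are routine computations.

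The main obstacle is the construction step rather than the equational checks. Beyond the definability question above, we need the cone map $\I \times M X \to M L X$ to genuinely land in the category of predomains so that the \sierp{} cone universal property applies. We also need \cref{prop:epi} to be applicable to the iterated objects $L L M X$ and $L M M X$, which is guaranteed because $\CCat$ is closed under lifting (\cref{ax:sdt}). If the path-choice route proves awkward, a fallback is to define the cone map by transporting the canonical cone map $\I \times X \to L X$, namely $(i,x) \mapsto (i, \lambda\_.\,x)$, along $M$. However, its restriction at $0$ is $M(\lambda\_.\bot)$, which need not equal the distinguished point, so that fallback would require an additional argument.
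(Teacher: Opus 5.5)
Your proof is correct, but it takes a more hands-on route than the paper's. The paper argues abstractly in three steps:
\begin{enumerate}
\item By Beck's correspondence, a distributive law $\LL\MM \to \MM\LL$ is the same as a lifting of $\MM$ to $\LL$-algebras.
\item By \cref{prop:is-domain}, $\LL$-algebras on predomains are pointed predomains with strict maps.
\item The hypothesis then makes $MX$ pointed by $\eta_\MM(\bot)$, makes $Mf$ strict for strict $f$, and makes $\eta_\MM$ and $\mu_\MM$ strict.
\end{enumerate}

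You instead construct $\tau_X$ directly from the \sierp{} cone. You then verify naturality and the four Beck axioms by testing on $\bot$ and $\eta_\LL$ via \cref{prop:epi}, and each of your five computations checks out. Your $\tau_X$ equals $\alpha_{MLX}\circ LM\eta_\LL$, where $\alpha_{MLX}$ is the algebra built in the proof of \cref{prop:is-domain} for the pointed predomain $MLX$. That is exactly the law Beck's construction extracts from the paper's lifting, so both proofs produce the same $\tau$.

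Your route buys three things:
\begin{itemize}
\item an explicit characterization of $\tau$, namely the unique map with $\tau(\bot)=\eta_\MM(\bot)$ and $\tau\circ\eta_\LL = M\eta_\LL$;
\item a use of the hypothesis only at objects of the form $LX$;
\item independence from two facts the paper's argument leaves implicit: that an $\LL$-algebra structure on a predomain is unique, and that $\LL$-algebra homomorphisms are exactly the strict maps. Both follow from \cref{prop:epi} by checks just like yours.
\end{itemize}
The paper's route is much shorter and reuses \cref{prop:is-domain} wholesale.

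The concerns in your last paragraph are not real obstacles. Unique choice in the topos handles definability of the path-valued map, which is the same move the proof of \cref{prop:is-domain} makes. Fullness of $\CCat$ makes any function between predomains a morphism. The fallback is unnecessary.
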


\begin{proof}
    A distributive law $\tau : \LL\MM \to \MM\LL$ is equivalent to a lifting of the monad $\MM$ to the category of lift algebras on $\CCat$. By \cref{prop:is-domain} this is equivalent to finding a lifting to the category of pointed predomains and strict (least-element preserving) maps; from the assumption that the unit $\eta : X \to M X$ preserves least elements it is routine to verify that $M$ restricts to a functor on strict maps and the multiplication of $\MM$ is strict. 
\end{proof}

\subsection{The convex powerdomain}

Phoa and Taylor's construction of the convex powerdomain $\CCat \to \kw{SL}(\CCat)$ we outlined above is nearly shown to be a powerdomain construction in the sense of \cref{def:powerdomain}; it just remains to show that the construction preserves domains. First we observe that the free semilattice $K X$ on a type $X$ may be defined in terms of a \emph{quotient inductive type}~\cite{frumin-etal:2018}, which endows the type $K X$ with the following induction principle. 

\begin{prop}\label[prop]{prop:K-induction}
    To prove a proposition $\Phi$ holds for all $S : K X$, it suffices to show that $\Phi$ holds for all singletons $\MkSet{x}$ and is closed under the semilattice operation. 
\end{prop}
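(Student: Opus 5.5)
The plan is to derive this from the eliminator that comes with presenting $K X$ as a quotient inductive type. The QIT $K X$ is generated by two point constructors: singletons $\MkSet{x}$ for $x : X$ and a binary operation $\vee$. These are subject to the semilattice equations: associativity, commutativity and idempotence.

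\emph{Step 1: the induction principle.} For a proposition $\Phi$ over $K X$, the dependent eliminator needs two pieces of data.
\begin{enumerate}
    \item A section for the singleton constructor, namely a proof of $\Phi(\MkSet{x})$ for every $x : X$.
    \item A section for the join constructor, namely a map sending proofs of $\Phi(S)$ and $\Phi(T)$ to a proof of $\Phi(S \vee T)$.
\end{enumerate}
In general the eliminator also requires these sections to respect each equation constructor, as dependent paths over the equation in $K X$. Because $\Phi$ is proposition-valued, every such coherence holds trivially: any two proofs of $\Phi(S)$ are equal, so the transported proofs agree automatically. The eliminator then gives a section $\prod_{S : K X} \Phi(S)$, which is exactly the claim.

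\emph{Step 2: identifying the QIT with the free semilattice.} One still has to check that this QIT really is the free semilattice $K$ used above. Its non-dependent eliminator into any semilattice $A$ in $\ECat$ sends a map $f : X \to A$ to a homomorphism $K X \to A$. Uniqueness of that homomorphism follows from Step 1, applied to the proposition that two homomorphisms agree at $S$. This proposition holds at singletons by assumption and is closed under $\vee$ because both maps are homomorphisms.

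\emph{Alternative without QIT eliminators.} If one prefers not to assume QIT eliminators in the topos, one can argue externally instead. Consider the subset $P = \MkSet{S : K X \mid \Phi(S)}$. By hypothesis $P$ contains all singletons and is closed under $\vee$, so it is a sub-semilattice through which the unit $X \to K X$ factors. The universal property of $K X$ gives a homomorphism $K X \to P$ lying over that factorization. Composing with the inclusion $P \hookrightarrow K X$ yields an endo-homomorphism of $K X$ fixing the unit, which must be the identity by uniqueness. Hence $P$ is all of $K X$.

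\emph{Main obstacle.} The only delicate point is the existence of the QIT, equivalently of the free semilattice, in an arbitrary topos. This is standard: take the Kuratowski-finite nonempty subsets of $X$, or use the cited constructions of QITs for finitary algebraic theories. With that in hand, the proof reduces to the observation that the coherence obligations vanish for propositions.
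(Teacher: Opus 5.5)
Your proposal is correct and matches the paper, which justifies this principle only by noting that $KX$ can be presented as a finitary quotient inductive type; the eliminator of that type, applied to a proposition-valued family, has vacuous coherence obligations, exactly as in your Step 1. Your sub-semilattice alternative is also valid and needs only the universal property of $KX$, but that universal property should be the internal one (stable under slicing), not the ``external'' one, so that $\Phi$ may depend on parameters such as a chosen point $x_0$.
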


To show that $\kw{C} \defeq \CCat \hookrightarrow \ECat \xrightarrow{K} \kw{SL}(\ECat) \xrightarrow{\mathcal{S}} \kw{SL}(\CCat)$ sends domains to domains, we show that each component functor preserves the property of \emph{pointedness} using the respective induction principles. 

\begin{defn}\label[defn]{def:pointed}
    A type $X$ is \emph{pointed} when there is a distinguished element $x_0 : X$ such that for every $x : X$ there is a path $p_x : x_0 \leadsto x$. 
\end{defn}

\begin{prop}
    If $X$ is pointed, then so is $K X$.
\end{prop}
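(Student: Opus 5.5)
We take the distinguished element of $K X$ to be the singleton $\MkSet{x_0}$. We then show, by the induction principle of \cref{prop:K-induction}, that every $S : K X$ admits a path $\MkSet{x_0} \leadsto S$. The predicate $\Phi(S) \defeq \exists p.~p : \MkSet{x_0} \leadsto S$ is a proposition (we truncate), so the induction principle applies. Note that no uniqueness of paths is needed, since $K X$ need not be boundary separated at this stage. Because the induction goal is propositional while the paths in $X$ are given as data $p_x$, we use them directly and never need to choose anything.

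For the base case we postcompose the given path $p_x : x_0 \leadsto x$ with the singleton map. Concretely, $i \mapsto \MkSet{p_x(i)}$ is a map $\I \to K X$ with boundary $(\MkSet{x_0}, \MkSet{x})$.

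For the inductive step, suppose $\alpha : \MkSet{x_0} \leadsto S$ and $\beta : \MkSet{x_0} \leadsto T$. The pointwise join $i \mapsto \alpha(i) \vee \beta(i)$ is a path $\MkSet{x_0} \vee \MkSet{x_0} \leadsto S \vee T$. By idempotence of the semilattice, the source is $\MkSet{x_0} \vee \MkSet{x_0} = \MkSet{x_0}$, so this is a path $\MkSet{x_0} \leadsto S \vee T$, which closes the induction.

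The only delicate point is this use of idempotence. Everything else is functoriality of paths under maps, namely that postcomposition sends a path with boundary $(a,b)$ to one with boundary $(f a, f b)$. There is no transitivity issue, because we build the path to $S \vee T$ directly in a single step rather than composing paths.
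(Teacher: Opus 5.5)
Your proof is correct and takes essentially the paper's approach. The paper states this proposition without proof, but the ``semilattice structure'' case of its proof that $H X$ is pointed is exactly your argument (singletons by postcomposing $p_x$ with $\eta$, joins by the pointwise join $i \mapsto \alpha(i) \vee \beta(i)$ plus idempotence); for $K X$ the $l$ and boundary-separation cases simply do not arise, and your propositional predicate makes the semilattice equations automatic.
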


\begin{restatable}{prop}{PropReflectionPointed}
    If $X$ is pointed, then so is its synthetic reflection $\mathcal{S}X$. 
\end{restatable}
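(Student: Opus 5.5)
The distinguished point of $\mathcal{S}X$ will be $\eta_\synRef(x_0)$. We then need a path $\eta_\synRef(x_0) \leadsto u$ for every $u : \mathcal{S}X$. The natural tool is $\mathcal{S}$-induction (\cref{prop:S-induction}), but the predicate ``there exists a path from $\eta_\synRef(x_0)$ to $u$'' is an existential and need not be $\mathcal{F}$-local. So instead of inducting on that predicate, I would construct the paths uniformly as a single map and then verify its boundary by induction on an equational predicate.

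\textbf{Step 1: the homotopy on $X$.} Pointedness of $X$ gives, for each $x$, a path $p_x : x_0 \leadsto x$. To assemble these into a map $h : \I \times X \to X$, $h(i,x) = p_x(i)$, we need a choice function. I would either read the definition of pointed (\cref{def:pointed}) as providing the family $x \mapsto p_x$, or, when $X$ is itself a predomain, use boundary separation to make the paths unique. The map we then use is
\[ \eta_\synRef \circ h : \I \times X \to \mathcal{S}X. \]

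\textbf{Step 2: extending to $\mathcal{S}X$.} Since $\mathcal{S}$ preserves finite products and $\I$ is a predomain, the unit $\I \times \mathcal{S}$-reflects to $\I \times \mathcal{S}X$. Concretely, $\I \times \eta_\synRef : \I \times X \to \I \times \mathcal{S}X$ is an $\mathcal{S}$-reflection; equivalently, curry and use that $(\mathcal{S}X)^\I$ is a predomain, $\CCat$ being an exponential ideal. Either way we obtain a unique $H : \I \times \mathcal{S}X \to \mathcal{S}X$ extending $\eta_\synRef \circ h$. Put $q_u \defeq H(-,u) : \I \to \mathcal{S}X$.

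\textbf{Step 3: checking the boundary.} It remains to show $H(0,u) = \eta_\synRef(x_0)$ and $H(1,u) = u$ for all $u$. We use $\mathcal{S}$-induction on the two equational predicates
\[ \Phi_0 = \MkSet{u \mid H(0,u) = \eta_\synRef(x_0)}, \qquad \Phi_1 = \MkSet{u \mid H(1,u) = u}. \]
Each is an equalizer of maps between predomains, hence a predomain, because reflective subuniverses are closed under limits; so each is $\mathcal{F}$-local. On generators the equations hold since $h(0,x) = x_0$ and $h(1,x) = x$. Alternatively, both equations follow from uniqueness of extensions, comparing $H(0,-)$ with the constant map and $H(1,-)$ with the identity.

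I expect Step 2 to be the main obstacle, namely justifying that $\I \times \eta_\synRef$ is again a reflection unit, i.e.\ that $\mathcal{S}$ commutes with $\I \times -$. I would discharge it by currying: maps $\I \times \mathcal{S}X \to \mathcal{S}X$ correspond to maps $\mathcal{S}X \to (\mathcal{S}X)^\I$, and $(\mathcal{S}X)^\I$ is local because it is a limit-like exponential of a predomain (\cref{prop:complete} style closure). Extending $x \mapsto \eta_\synRef \circ p_x$ along $\eta_\synRef$ then gives $H$ directly.
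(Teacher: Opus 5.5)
Your argument is correct, but it takes a different route from the paper. The paper runs $\mathcal{S}$-induction (\cref{prop:S-induction}) directly on $\Phi(u) \defeq \exists \alpha : \eta(x_0) \leadsto u$, with base case $\eta \circ p_x$ as in your Step 1. Your worry about that existential does not arise there. Since $\mathcal{S}X$ is a predomain, a path in it is determined by its boundary (\cref{ax:sdt}), so $\Phi$ is equivalent to the $\Sigma$-type of such paths. That $\Sigma$-type is the pullback of $\partial : (\mathcal{S}X)^\I \to (\mathcal{S}X)^2$ along $\langle \eta(x_0), 1 \rangle$, a limit of predomains and hence local.

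You instead build the contracting homotopy once and for all, by extending the curried family $X \to (\mathcal{S}X)^\I$ along $\eta$, and then check the endpoints by uniqueness of extensions. This is sound. The step you flag as the main obstacle is immediate, since it only needs $(\mathcal{S}X)^\I$ to be a predomain, and reflective subuniverses are closed under exponentials by arbitrary types (the paper's pullback uses the same fact). What your route buys is an actual map $H : \I \times \mathcal{S}X \to \mathcal{S}X$. Moreover, if pointedness is read as supplying the family $x \mapsto p_x$, it never uses boundary separation and would work for any reflective subuniverse.

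One caveat concerns the reading of \cref{def:pointed} as mere existence $\forall x.\,\exists p_x$, which is all the paper's proof needs. Under that reading you cannot form $h$ without choice. Your fallback of using boundary separation of $X$ is unavailable, because $X$ is an arbitrary type here (in the application it is $KX$). If $X$ were a predomain, $\mathcal{S}X \cong X$ and the statement would be trivial anyway. The repair is to use uniqueness in $\mathcal{S}X$ instead. For each $x$ there is exactly one path $\eta(x_0) \leadsto \eta(x)$ in $\mathcal{S}X$, so unique choice gives the function $X \to (\mathcal{S}X)^\I$ directly. But that is the same boundary-separation fact the paper relies on, and with it in hand the paper's one-step induction is shorter.
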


\begin{proof}
    Let $x_0 : X$ be the distinguished point. It suffices to show that $\eta(x_0)$ is the distinguished point in $\mathcal{S}X$. By the principle of $\mathcal{S}$-induction (\cref{prop:S-induction}), it suffices to show that there is a path $q_x : \eta(x_0) \leadsto \eta(x)$ for every $\eta(x)$ and that the subset $\Phi = \MkSet{ u : \mathcal{S}X \mid \exists \alpha : \eta(x_0) \leadsto u}$ is a predomain. For the former, we may define $q_x$ by postcomposing the path $p_x : x_0 \leadsto x$ (which exists by assumption) with $\eta$. For the latter, since the path relation on the synthetic predomain reflection $\mathcal{S}X$ is a partial order, $\Phi$ is equivalent to the subset $\MkSet{ u : \mathcal{S}X \mid \eta(x_0) \pathle u}$. The result then follows because predomains are internally complete and we may exhibit $\Phi$ as the following pullback. 
    \[
    \DiagramSquare{
        width = 3cm,
        nw = \Phi,
        sw = \mathcal{S}X,
        ne = {\mathcal{S}X^\I},
        se = {\mathcal{S}X^2}, 
        east = \partial,
        south = {\langle \eta(x_0), 1 \rangle},
        nw/style = pullback,
    }
    \]
\end{proof}

\begin{cor}
    If $X$ is a domain with least element $\bot : X$, then  $\eta_\synRef\MkSet{\bot}$ is the least element in $\kw{C}X$. 
\end{cor}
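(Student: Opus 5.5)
We prove the corollary by combining the two preceding propositions with \cref{prop:is-domain}. The plan is to show first that $\kw{C}X$ is pointed with distinguished point $\eta_\synRef\MkSet{\bot}$, and then to read off leastness in the path order.

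First, since $X$ is a domain, \cref{prop:is-domain} (or directly the proposition that an \LL{}-algebra evinces a least element $\alpha(\bot,\kw{abort})$) gives a least element $\bot : X$ in the path order. In other words, $X$ is pointed in the sense of \cref{def:pointed} with $x_0 = \bot$. By the proposition on free semilattices, $KX$ is pointed, and here I will check that the distinguished point produced is the singleton $\MkSet{\bot}$. If that proof is not available explicitly, I would reprove it by $K$-induction (\cref{prop:K-induction}). For singletons, postcompose the path $p_x : \bot \leadsto x$ with $\MkSet{-}$. For joins, given paths $\alpha : \MkSet{\bot}\leadsto S$ and $\beta : \MkSet{\bot} \leadsto T$, form the path $i \mapsto \alpha(i) \vee \beta(i)$, which goes from $\MkSet{\bot}\vee\MkSet{\bot} = \MkSet{\bot}$ (by idempotence) to $S \vee T$. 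Note that the proposition being proved is mere existence of a path, so no boundary-separation issue arises in $KX$.

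Next, apply the proposition on the synthetic reflection to $KX$ with distinguished point $\MkSet{\bot}$. Its proof explicitly takes $\eta_\synRef(x_0)$ as the distinguished point, so $\eta_\synRef\MkSet{\bot}$ admits a path to every $u : \mathcal{S}KX = \kw{C}X$. Since $\kw{C}X$ is a predomain, the path relation is a partial order $\pathle$ by \cref{ax:sdt}. Hence $\eta_\synRef\MkSet{\bot} \pathle u$ for all $u$, that is, it is the least element, and it is unique by antisymmetry. By \cref{prop:is-domain}, $\kw{C}X$ is then a domain.

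The only delicate point is bookkeeping: the pointedness propositions are stated existentially, so one must confirm that the distinguished points they produce are exactly $\MkSet{\bot}$ and then $\eta_\synRef\MkSet{\bot}$, rather than some other point. I expect this to be immediate from their proofs, using idempotence $\MkSet{\bot}\vee\MkSet{\bot}=\MkSet{\bot}$ in the join case.
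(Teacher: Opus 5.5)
Your proposal is correct and follows essentially the paper's route: chain the pointedness of $KX$ (distinguished point $\{\bot\}$) with the pointedness of its synthetic reflection (distinguished point $\eta_\synRef\{\bot\}$, as in that proposition's proof), then use that the path relation on the predomain $\kw{C}X$ is a partial order. Your $K$-induction sketch for the free-semilattice step, using pointwise joins of paths and idempotence, matches the semilattice case of the paper's argument for $HX$; the paper states the $KX$ step without proof.
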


By virtue of $\kw{C} : \CCat \to \kw{SL}(\CCat)$ being a left adjoint, it is a powerdomain construction in the sense of \cref{def:powerdomain}, and we may define the convex powerdomain on $X$ to be $\kw{C}X$. 

\subsection{The lower and upper powerdomains}

In contrast to the convex powerdomain, the lower and upper powerdomains are characterized by \emph{in}equational theories, respectively extending the theory of the convex powerdomain by the inequalities $S \sqsubseteq S \vee T$ and $S \vee T \sqsubseteq S$. This suggests that we should carry out the construction outlined in the previous section in the setting of \emph{path order-enriched categories}. In this paper we follow a different approach that is perhaps conceptually simpler. The key observation is that although the path relation need not be a partial order for a general type $X$, it \emph{becomes one} on the synthetic reflection $\mathcal{S}X$. Therefore it suffices to axiomatize the notion of a ``pre-lower semilattice'' and show that such an algebra $X$ has a synthetic predomain reflection $\mathcal{S}X$ that is a genuine lower semilattice, \ie{} a semilattice satisfying $S \pathle S \vee T$. 

\begin{defn}\label[defn]{def:prelowersemilattce}
    A \emph{pre-lower semilattice} is a boundary-separated semilattice $(X, \vee)$ equipped with a function $l : X \times X \times \I{} \to X$ such that $l(S, T, 0) = S$ and $l(S, T, 1) = S \vee T$; a homomorphism of pre-lower semilattices is just a homomorphism of the underlying semilattices.
\end{defn}

\begin{defn}\label[defn]{def:lowersemilattice}
    A \emph{lower semilattice} is a predomain semilattice $(X, \vee)$ such that $S \pathle S \vee T$ for all $S, T : X$; a homomorphism of lower semilattices is just a homomorphism of the underlying semilattices.
\end{defn}

In other words a pre-lower semilattice is just a semilattice $(X, \vee)$ in which there is a path $S \leadsto S \vee T$ (necessarily unique by the requirement that $X$ is boundary separated) for all $S, T : X$ and a lower semilattice is a pre-lower semilattice on a predomain. Here we stipulate boundary separation to ensure that we may construct the free pre-lower semilattice by means of freely adding the required path constructors. Since \cref{def:prelowersemilattce} also makes sense in the category of predomains, we have a category $\kw{PLSL}(\CCat)$ which is moreover equal to the category $\kw{LSL}$ of lower semilattices, inducing a diagram of subcategory inclusions analogous to the case of ordinary semilattices. 

    \[
\DiagramSquare{
    width = 2.5cm,
    nw = {\kw{PLSL}(\CCat) = \kw{LSL}},
    sw = \kw{PLSL}(\ECat),
    ne = \CCat,
    se = \ECat,
    north/style = embedding,
    south/style = embedding,
    east/style = embedding,
    west/style = embedding,
}
\]

Thus to carry out a similar construction to the one in the previous section, it suffices to show that 1) there is a left adjoint to $\kw{PLSL}(\ECat) \hookrightarrow \ECat$ and 2) the synthetic predomain reflection $\mathcal{S} : \ECat \to \CCat$ restricts to the left adjoint to the forgetful functor $\kw{PLSL}(\CCat) \hookrightarrow \kw{PLSL}(\ECat)$. 

\begin{figure}
    \centering

    \begin{multicols}{2}
         \iblock{
        \mhang{\textbf{data}~H X~\textbf{where}}{
            \mrow{\eta : X \to H X}
            \mrow{\vee : H X \to H X \to H X}
            \mrow{\kw{idem} : \Pi_{S : H X}.~S \vee S = S}
            \mrow{\kw{comm} : \Pi_{S, T : H X}.~S \vee T = T \vee S}
            \mrow{\kw{assoc} : \Pi_{S, T, U : H X}.~(S \vee T) \vee U = S \vee (T \vee U)}
            \mrow{l : H X \times H X \times \I \to H X}
            \mrow{l_0 : \Pi_{S, T : H X}.~l(S, T, 0) = S}
            \mrow{l_1 : \Pi_{S, T : H X}.~l(S, T, 1) = S \vee T}
            \mrow{s : \Pi_{\alpha, \beta : \I{} \to H X}.~(\partial \alpha = \partial \beta) \to \alpha = \beta}
        }
        \columnbreak
        \mhang{\textbf{data}~G X~\textbf{where}}{
            \mrow{\eta : X \to G X}
            \mrow{\vee : G X \to G X \to G X}
            \mrow{\kw{idem} : \Pi_{S : G X}.~S \vee S = S}
            \mrow{\kw{comm} : \Pi_{S, T : G X}.~S \vee T = T \vee S}
            \mrow{\kw{assoc} : \Pi_{S, T, U : G X}.~(S \vee T) \vee U = S \vee (T \vee U)}
            \mrow{l : G X \times G X \times \I \to G X}
            \mrow{l_0 : \Pi_{S, T : G X}.~l(S, T, 0) = S \vee T}
            \mrow{l_1 : \Pi_{S, T : G X}.~l(S, T, 1) = S}
            \mrow{s : \Pi_{\alpha, \beta : \I{} \to G X}.~(\partial \alpha = \partial \beta) \to \alpha = \beta}
        }
     }  
    \end{multicols}
    \caption{Left: the free pre-lower semilattice $H X$ as a quotient type; right: the free pre-upper semilattice $G X$. The only difference between the two types is the swapped boundary of the constructor $l$ corresponding to the axioms $S \pathle S \vee T$ and $S \vee T \le S$ respectively. }
    \label{fig:lower-and-upper}
\end{figure}

\begin{restatable}{prop}{PropFreeLowerSemilattice}
    The free pre-lower semilattice on a type $X$ may be defined as the quotient inductive type displayed on the left in \cref{fig:lower-and-upper}.
\end{restatable}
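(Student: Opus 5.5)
We need to check two things: that $H X$ (with $\eta$, $\vee$ and $l$) is a pre-lower semilattice, and that it has the universal property of the free one. That is, for every pre-lower semilattice $(Y, \vee_Y, l_Y)$ and map $f : X \to Y$ there should be a unique semilattice homomorphism $\overline{f} : H X \to Y$ with $\overline{f}\circ\eta = f$. This gives the left adjoint to $\kw{PLSL}(\ECat) \hookrightarrow \ECat$. The first part is immediate from the constructors. $\kw{idem}$, $\kw{comm}$ and $\kw{assoc}$ make $(H X, \vee)$ a semilattice. $l$, $l_0$ and $l_1$ supply the required function $l : H X \times H X \times \I \to H X$ with the correct boundary. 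The constructor $s$ is exactly boundary separation, since it identifies any two paths $\alpha, \beta : \I \to H X$ with $\partial\alpha = \partial\beta$.

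For existence, I would define $\overline{f}$ by the recursion principle of the quotient inductive type, interpreting each constructor in $Y$:
\begin{itemize}
\item $\eta \mapsto f$ and $\vee \mapsto \vee_Y$;
\item $\kw{idem}$, $\kw{comm}$, $\kw{assoc}$ by the semilattice equations of $Y$;
\item $l \mapsto l_Y$, with $l_0$ and $l_1$ discharged by the boundary equations of $l_Y$;
\item $s$ by boundary separation of $Y$.
\end{itemize}
Each equational constructor only has to be checked against the recursive images, and these are equations in $Y$ that hold by assumption. The recursion principle then gives $\overline{f}$ with $\overline{f}\circ\eta = f$, and $\overline{f}$ preserves $\vee$ by its computation rule, so it is a semilattice homomorphism.

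The main obstacle is uniqueness. A homomorphism of pre-lower semilattices is only required to preserve $\vee$, not the path structure $l$, so uniqueness cannot simply be read off the initiality of $H X$ as an algebra for all its constructors. My plan is the following. Let $g : H X \to Y$ be any semilattice homomorphism with $g\circ\eta = f$, and fix $S, T$. The two paths $i \mapsto g(l(S,T,i))$ and $i \mapsto l_Y(g S, g T, i)$ both run from $g S$ to $g(S \vee T) = g S \vee_Y g T$. Since $Y$ is boundary separated, they are equal. Hence every such $g$ automatically preserves $l$, and in particular so does $\overline{f}$. Now I would apply the induction principle of $H X$ (the analogue of \cref{prop:K-induction}) to the proposition $g(S) = \overline{f}(S)$:
\begin{itemize}
\item it holds on $\eta(x)$ because $g\circ\eta = f = \overline{f}\circ\eta$;
\item it is closed under $\vee$ because both maps are homomorphisms;
\item it is closed under $l$ because both maps commute with $l$ in the sense just shown.
\end{itemize}
Because the target is a proposition, the equational constructors impose no further obligations.

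One subtlety is the constructor $s$: its argument is a function $\I \to H X$, which makes $H X$ an infinitary quotient inductive type. In the paper's metatheory I would justify its existence by appeal to the construction of QITs in \cite{fiore-pitts-steenkamp:2021}. Alternatively, one can construct it by first taking the free semilattice with a freely added $l$ and then taking a reflection onto boundary-separated types. I expect this existence-and-induction-principle justification, rather than the universal-property check, to be the delicate point.
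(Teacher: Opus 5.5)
Your proposal is correct and follows essentially the same route as the paper. Existence is by recursion, interpreting each constructor in $Y$ and discharging $s$ by boundary separation of $Y$, and uniqueness comes from boundary separation of $Y$ forcing any homomorphism $g$ with $g\circ\eta = f$ to send $l(S,T,-)$ to the same path as $\overline{f}$. You only make explicit the induction that the paper's one-line uniqueness argument leaves implicit, and you raise a caveat about the existence of this infinitary QIT that the paper does not address beyond citing quotient inductive types.
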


\begin{proof}
    By construction we have that $H X$ is a pre-lower semilattice $(H X, \vee, l)$ and boundary separated. Let $(Y, \vee_Y, l_Y)$ be a pre-lower semilattice and $f : X \to Y$ any map. We aim to construct a unique semilattice homomorphism $\overline{f} : H X \to Y$ such that $(X \xrightarrow{f} Y) = (X \xrightarrow{\eta} H X \xrightarrow{\overline{f}} Y)$. For ordinary constructors we send $\eta(x)$ to $f(x)$, $S \vee T$ to $\overline{f}~S \vee_Y \overline{f}~T$, and $l(S, T, i)$ to $l_Y(\overline{f}~S, \overline{f}~T, i)$. The semilattice laws follow because $Y$ is a semilattice, and similarly $i \mapsto l_Y(\overline{f}~S, \overline{f}~T, i)$ is a path $\overline{f}~S \leadsto \overline{f}~S \vee_Y \overline{f}~T$ by assumption. Lastly for the constructor $s$ it suffices to observe that we stipulate pre-lower semilattices to be boundary separated; similarly this extension is unique because any other homomorphism $f' : H X \to Y$ such that $f' \eta = f$ must be equal to $\overline{f}$ on $S \vee T$ and map $l(S, T, -) : S \leadsto S \vee T$ to a path with the same boundary as $\overline{f}(l(S, T, -))$, so it follows these paths are equal by boundary separation. 
\end{proof}

The second property we need may be shown by $\synRef{}$-induction.  

\begin{restatable}{prop}{PropLeftAdjoint}
    The synthetic predomain reflection $\mathcal{S} : \ECat \to \CCat$ restricts to the left adjoint to the forgetful functor $\kw{PLSL}(\CCat) \hookrightarrow \kw{PLSL}(\ECat)$. 
\end{restatable}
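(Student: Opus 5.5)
We need to show that for a pre-lower semilattice $(X,\vee,l)$ in $\ECat$, the reflection $\mathcal{S}X$ carries a pre-lower semilattice structure (equivalently, a lower semilattice structure, since $\kw{PLSL}(\CCat)=\kw{LSL}$). We also need $\eta_\synRef : X \to \mathcal{S}X$ to be a homomorphism, and it must be universal among homomorphisms into pre-lower semilattices on predomains.

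\textbf{Semilattice structure.} By the footnoted observation for ordinary semilattices, $\mathcal{S}$ preserves finite products. Hence $\mathcal{S}X$ inherits a semilattice structure $\vee_\synRef$ with $\eta_\synRef(S \vee T) = \eta_\synRef S \vee_\synRef \eta_\synRef T$. Moreover, for any predomain semilattice $Y$ and semilattice homomorphism $f : X \to Y$, the unique extension $\overline{f} : \mathcal{S}X \to Y$ is a semilattice homomorphism. Boundary separation of $\mathcal{S}X$ is automatic, because by \cref{ax:sdt} the path space of a predomain determines a partial order.

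\textbf{Path structure.} It remains to exhibit, for all $u,v : \mathcal{S}X$, a path $u \leadsto u \vee_\synRef v$, i.e.\ to show $u \pathle u \vee_\synRef v$. I will argue by $\mathcal{S}$-induction (\cref{prop:S-induction}) twice, once in each variable, or once on $\mathcal{S}(X\times X)\cong \mathcal{S}X\times\mathcal{S}X$. Take
\[\Phi = \MkSet{(u,v) : \mathcal{S}X^2 \mid u \pathle u \vee_\synRef v}.\]
For the base case, the path $i \mapsto \eta_\synRef(l(S,T,i))$ has boundary $(\eta_\synRef S, \eta_\synRef(S\vee T)) = (\eta_\synRef S, \eta_\synRef S \vee_\synRef \eta_\synRef T)$, which is exactly what $\Phi$ requires at $(\eta_\synRef S,\eta_\synRef T)$.

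For locality, I exhibit $\Phi$ as a pullback of predomains. Since paths in $\mathcal{S}X$ are unique, $\Phi$ is the pullback of $\partial : \mathcal{S}X^\I \to \mathcal{S}X^2$ along $(u,v)\mapsto (u, u\vee_\synRef v)$, just as in the proof that $\mathcal{S}$ preserves pointedness. Predomains are closed under limits and exponentials by reflectivity, so $\Phi$ is a predomain and hence $\mathcal{F}$-local. I expect this locality step to be the main subtlety: it relies on the path relation on $\mathcal{S}X$ being propositional, which is precisely what allows the existential to be replaced by the subobject $\partial$-pullback.

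\textbf{Universality.} Let $Y$ be a lower semilattice and $f : X \to Y$ a pre-lower semilattice homomorphism, that is, a semilattice homomorphism. Then $\overline{f}$ is a semilattice homomorphism, which is all a pre-lower semilattice homomorphism requires. It is unique because any homomorphism extending $f$ along $\eta_\synRef$ is in particular an extension of $f$, and such extensions are unique by the reflection. Functoriality and the adjunction then follow from fullness of $\CCat\hookrightarrow\ECat$, as in the semilattice case.
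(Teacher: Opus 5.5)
Your proposal is correct and follows essentially the same route as the paper. The semilattice structure comes from finite-product preservation of $\mathcal{S}$, and the path condition is proved by $\mathcal{S}$-induction on $(\mathcal{S}X)^2 \cong \mathcal{S}(X^2)$ with predicate $u \pathle u \vee_\synRef v$; its base case is the path $\eta_\synRef \circ l$, and its locality comes from exhibiting it as the pullback of $\partial$ along $\langle \pi_1, \vee_\synRef \rangle$. You also spell out the universality and boundary-separation steps that the paper compresses into ``it suffices to show that every pre-lower semilattice is sent to a lower semilattice.''
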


\begin{proof}
    It suffices to show that every pre-lower semilattice $(X, \vee, l)$ is sent to a lower semilattice by $\mathcal{S}$. Because $\mathcal{S}$ preserves finite products we have a semilattice $(\mathcal{S}X, \vee') \defeq (\mathcal{S}X, \synRef\vee)$. It remains to show that $S \pathle S \vee' T$ for all $S, T : \mathcal{S}X$. We have a subset $\Phi \subseteq (\synRef{} X)^2 \cong \synRef (X^2)$ defined by $\Phi(S, T)$ if and only if $S \pathle S \vee' T$. Thus we proceed by \synRef{}-induction. First, we observe that every $\eta(x)$ is in $\Phi$, as $\eta(x) \pathle \eta(x \vee y) = \eta(x) \vee' \eta(y)$ by the naturality of $\eta : X \to \synRef$. To see that $\Phi$ is a predomain we observe that it is the vertex of the following pullback diagram.

    \[
    \DiagramSquare{
        width = 3cm,
        nw = \Phi,
        sw = (\mathcal{S}X)^2,
        ne = {\mathcal{S}X^\I},
        se = {\mathcal{S}X^2}, 
        east = \partial,
        south = {\langle \pi_1, \vee' \rangle},
        nw/style = pullback,
    }
    \]
\end{proof}

\begin{cor}
    The composition $\kw{H} = \CCat \hookrightarrow \ECat \xrightarrow{H} \kw{PLSL}(\ECat) \xrightarrow{\synRef{}} \kw{PLSL}(\CCat) = \kw{LSL}$ is left adjoint to the forgetful functor $\kw{LSL} \hookrightarrow \CCat$
\end{cor}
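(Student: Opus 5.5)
The plan is to compose the two adjunctions just established. The two ingredients are the free pre-lower semilattice $H \dashv U : \kw{PLSL}(\ECat) \to \ECat$, given by the quotient inductive type of \cref{fig:lower-and-upper}, and the restriction of $\synRef$ to a left adjoint of $\kw{PLSL}(\CCat) \hookrightarrow \kw{PLSL}(\ECat)$ from the preceding proposition. Composing left adjoints gives a left adjoint to the composite of the right adjoints, $\kw{PLSL}(\CCat) \hookrightarrow \kw{PLSL}(\ECat) \to \ECat$. This composite is the same as $\kw{PLSL}(\CCat) \to \CCat \hookrightarrow \ECat$, because the square of forgetful functors commutes: both legs simply forget the semilattice and path structure.

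Precompose with the inclusion $i : \CCat \hookrightarrow \ECat$, and let $Y$ range over lower semilattices with underlying predomain $UY$. For a predomain $X$ we then have the chain
\[
\kw{LSL}(\synRef H i X, Y) \cong \kw{PLSL}(\ECat)(H i X, Y) \cong \ECat(i X, i U Y) \cong \CCat(X, U Y).
\]
The first isomorphism is the preceding proposition. The second is the freeness of $H$. The last is full faithfulness of $i$, the same step the paper uses in the convex case. Each isomorphism is natural in $X$ and $Y$, so the composite is too, and this establishes $\kw{H} \dashv (\kw{LSL} \hookrightarrow \CCat)$.

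The one point that needs care is the identification $\kw{PLSL}(\CCat) = \kw{LSL}$, used to read the target of $\synRef$ as $\kw{LSL}$. I would argue both directions. On a predomain, the path relation is a partial order by \cref{ax:sdt}, so paths are unique given their boundary. Hence boundary separation holds automatically, and a function $l$ with the required boundary is the same data as the condition $S \pathle S \vee T$. Homomorphisms in both categories are plain semilattice homomorphisms, so the two categories coincide on the nose.

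Beyond this, the argument is bookkeeping. Naturality of the composite bijection follows from the naturality of each step. The unit is $\eta_\synRef \circ \eta_H$, whose components lie in $\CCat$ since $\kw{H}X$ is a predomain.
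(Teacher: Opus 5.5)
Your proposal is correct and is the argument the paper intends: compose the free pre-lower semilattice adjunction with the restriction of $\synRef$ from the preceding proposition, then use full faithfulness of $\CCat \hookrightarrow \ECat$, exactly as in the convex case. Your check that $\kw{PLSL}(\CCat) = \kw{LSL}$ spells out what the paper only asserts in passing: boundary separation is part of the path relation being a partial order, and since those paths are unique, unique choice turns the existence of paths $S \leadsto S \vee T$ into the structure map $l$.
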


\begin{restatable}{prop}{PropLowerPointed}
    If $X$ is pointed, then $H X$ is pointed. 
\end{restatable}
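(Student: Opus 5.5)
We prove the statement by induction on $H X$, taking the distinguished point of $H X$ to be $\eta(x_0)$, where $x_0 : X$ is the distinguished point of $X$. Since $H X$ is a quotient inductive type, it has an induction principle analogous to \cref{prop:K-induction}. Because the goal ``there is a path $\eta(x_0) \leadsto S$'' is a proposition, the path constructors $l$ and the boundary-separation constructor $s$ impose no coherence obligations on the motive $\Phi(S) \defeq \exists \alpha : \eta(x_0) \leadsto S$. The only extra work is that the constructor $l$ produces new elements $l(S,T,i)$, and we must also treat these in the induction. So we must show: $\Phi(\eta(x))$ for all $x$; $\Phi$ is closed under $\vee$; and $\Phi(S), \Phi(T)$ imply $\Phi(l(S,T,i))$ for every $i : \I$.

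The base case is immediate: postcompose the given path $p_x : x_0 \leadsto x$ with $\eta$.

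For the join case, suppose $\alpha : \eta(x_0) \leadsto S$ and $\beta : \eta(x_0) \leadsto T$. Take the pointwise join $i \mapsto \alpha(i) \vee \beta(i)$. Its boundary is $\eta(x_0) \vee \eta(x_0) = \eta(x_0)$ by idempotence, and $S \vee T$ at the other end. Note that this avoids any need for transitivity of paths, which is not available in $H X$.

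The $l$ case is the main obstacle, because we cannot compose paths in $H X$. Fix $S,T,i$ with $\alpha, \beta$ as above. I would try the path
\[
j \mapsto l(\alpha(j), \beta(j), i).
\]
At $j = 1$ it gives $l(S,T,i)$. At $j = 0$ it gives $l(\eta(x_0),\eta(x_0),i)$, and we must show this equals $\eta(x_0)$. The map $i \mapsto l(\eta(x_0),\eta(x_0),i)$ is a path $\eta(x_0) \leadsto \eta(x_0) \vee \eta(x_0) = \eta(x_0)$. The constant path has the same boundary, so the constructor $s$ (boundary separation) forces the two paths to agree. Hence $l(\eta(x_0),\eta(x_0),i) = \eta(x_0)$ for all $i$, which finishes the case.

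If $l$-elements are instead not counted as separate constructors in the induction principle (i.e.\ $l$ is regarded only as a path constructor), this case disappears: the motive is a proposition, so only the point constructors $\eta$ and $\vee$ need to be checked.
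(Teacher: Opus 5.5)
Your proof is correct and is essentially the paper's argument. It uses the same point $\eta(x_0)$, the pointwise join for $\vee$, the path $j \mapsto l(\alpha(j), \beta(j), i)$ for the $l$ constructor, and boundary separation to obtain $l(\eta(x_0), \eta(x_0), i) = \eta(x_0)$; your propositional motive only makes the paper's two-dimensional boundary checks automatic, and since $H X$ is boundary separated the paper's path-valued motive is a subsingleton anyway. You should drop your closing hedge, however: in the paper's quotient inductive type, $l : H X \times H X \times \I \to H X$ is a genuine point constructor whose values $l(S, T, i)$ are new elements, so the $l$ case cannot be skipped.
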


\begin{proof}
    Let $\bot : X$ be the distinguished point and $\bot' = \eta~\bot : H X$ be the candidate element. We show that there is a path $\eta\bot \leadsto S$ for all $S : H X$ by induction. 
    \begin{enumerate}
        \item \emph{Semilattice structure}. We may define a path $\bot' \leadsto \eta~x$ for all $x : X$ by postcomposing $\bot \leadsto x$ by $\eta : X \to H X$. For $S \vee T : H X$ we have by induction paths $\alpha : \bot' \leadsto S$ and $\beta:\bot' \leadsto T$, and we may define $\alpha \vee \beta : \bot' \leadsto S \vee T$ by sending $i : I$ to $\alpha~i \vee \beta~i$ and observing that $\alpha~0 \vee \beta~0 = \bot' \vee \bot' = \bot'$ by idempotence. The semilattice laws hold because they hold at each point in a path $\bot' \leadsto S$.
        \item \emph{Axiom $l : S \leadsto S \vee T$}. Fixing $S, T: H X$, we have by induction paths $\alpha : \bot' \leadsto S$ and $\beta:\bot' \leadsto T$. We need to construct a path $\gamma : \bot' \leadsto l(S, T, i)$ for every $i : \I{}$, \ie{} a two-dimensional path $\gamma : \I{}^2 \to H X$. We define $\gamma$ as follows. 

        \iblock{
            \mrow{\gamma(i, j) = l(\alpha~j, \beta~j, i)}
        }

        We verify the boundary conditions for all $i : \I$. First we have $\gamma(i, 0) = l(\alpha~0, \beta~0, i) = l(\bot', \bot', i) = \bot'$, where the last equality follows from the fact that $H X$ is boundary separated. Next we have $\gamma(i, 1) = l(\alpha~1, \beta~1, i) = l(S, T, i)$ as required. In addition we need to verify the two-dimensional boundary conditions of $\gamma(0, j) =_{\bot' \leadsto S} \alpha$ and $\gamma(1, j) =_{\bot' \leadsto S \vee T} \alpha \vee \beta$, which follow by similar direct computations. 
        \item \emph{Boundary separation}. This again follows since $H X$ is boundary separated. 
    \end{enumerate}
\end{proof}

\begin{cor}
    The composition $\LP = \CCat \xrightarrow{\kw{H}} \kw{LSL} \hookrightarrow \kw{SL}(\CCat)$ is a powerdomain construction. 
\end{cor}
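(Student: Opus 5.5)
The corollary asks that $\LP$ satisfy the two requirements of \cref{def:powerdomain}: its underlying functor is a monad, and it sends domains to domains. I will handle them separately, following the pattern already used for the convex powerdomain $\kw{C}$.

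\emph{Monad structure.} By the preceding corollary, $\kw{H}$ is left adjoint to the forgetful functor $\kw{LSL} \hookrightarrow \CCat$. Composing this adjunction yields a monad on $\CCat$ whose underlying functor is the composite with the forgetful functor $\kw{LSL} \hookrightarrow \kw{SL}(\CCat) \to \CCat$. Its unit is $\eta_\synRef \circ \eta$, giving singletons. Its Kleisli lift $\tilde{f}$ is the unique lower-semilattice homomorphism extending $f : X \to \LP Y$. Since $\kw{LSL} \hookrightarrow \kw{SL}(\CCat)$ is the identity on morphisms, $\LP$ is also a functor into $\kw{SL}(\CCat)$, as the definition requires.

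\emph{Preservation of domains.} Let $X$ be a domain. By \cref{prop:is-domain}, it suffices to show that $\LP X = \mathcal{S}(HX)$ has a least element for the path order. By \cref{prop:is-domain} again, $X$ has a least element $\bot$. Equivalently, $X$ is pointed in the sense of \cref{def:pointed}, with paths $\bot \leadsto x$ given by $\pathle$.
\begin{enumerate}
  \item The preceding proposition shows that $HX$ is pointed, with distinguished point $\eta\,\bot$.
  \item The proposition on synthetic reflections (used for $\kw{C}$) then shows that $\mathcal{S}(HX)$ is pointed, with distinguished point $\eta_\synRef(\eta\,\bot)$.
  \item Since $\mathcal{S}(HX)$ is a predomain, its path relation is a partial order by \cref{ax:sdt}. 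So pointedness means $\eta_\synRef(\eta\,\bot)$ is the least element.
  \item Applying \cref{prop:is-domain}, $\LP X$ is a domain.
\end{enumerate}

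\emph{Main obstacle.} The step needing care is the first one: checking that the pointedness argument for $HX$ treats every constructor of the quotient inductive type. The preceding proposition already covers the semilattice operations, $l$, and $s$, so here we only need that pointedness transfers along $\mathcal{S}$ and becomes a least element. If one also wants the Kleisli lift to be strict (relevant for \cref{prop:monad}), note that the unit $x \mapsto \eta_\synRef(\eta\,x)$ sends $\bot$ to the least element found above. So the unit preserves least elements, which is exactly the hypothesis of \cref{prop:monad}.
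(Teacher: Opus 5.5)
Your proposal is correct and is essentially the paper's own argument. The monad structure comes from $\kw{H}$ being left adjoint to the forgetful functor $\kw{LSL} \hookrightarrow \CCat$, and preservation of domains follows by chaining the pointedness of $HX$, the pointedness of the synthetic reflection $\mathcal{S}(HX)$, and \cref{prop:is-domain}, exactly as was done for the convex powerdomain $\kw{C}$. Your extra remark that the unit preserves the least element, so that \cref{prop:monad} applies, is also correct, though the definition itself does not require it.
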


An entirely analogous construction may be carried out to accommodate the axiom $S \vee T \le S$ by means of the quotient inductive type displayed in \cref{fig:lower-and-upper}, giving us the synthetic upper powerdomain construction. 

\begin{prop}
    Let \kw{USL} be the category of \emph{upper semilattices}, defined analogously to \cref{def:lowersemilattice}. There is a left adjoint functor $\CCat \to \kw{USL}$ and $\UP{} = \CCat \to \kw{USL} \hookrightarrow \kw{SL}(\CCat)$ is a powerdomain construction. 
\end{prop}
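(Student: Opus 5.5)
The plan is to transpose the lower-powerdomain development verbatim, with the only change being the boundary of the path constructor $l$. I will first define a \emph{pre-upper semilattice} as a boundary-separated semilattice $(X,\vee)$ with $l : X\times X\times\I\to X$ such that $l(S,T,0)=S\vee T$ and $l(S,T,1)=S$. An \emph{upper semilattice} is then a predomain semilattice with $S\vee T\pathle S$; homomorphisms are semilattice homomorphisms in both cases. Exactly as before, $\kw{PUSL}(\CCat)=\kw{USL}$, since on a predomain the path relation is a partial order and boundary separation is automatic. The left adjoint will be the composite $\CCat\hookrightarrow\ECat\xrightarrow{G}\kw{PUSL}(\ECat)\xrightarrow{\synRef}\kw{PUSL}(\CCat)=\kw{USL}$. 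It is a left adjoint because each factor is left adjoint to the corresponding inclusion and $\CCat\hookrightarrow\ECat$ is fully faithful.

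The two adjunction steps copy the lower case.
\begin{enumerate}
\item To show $G X$ of \cref{fig:lower-and-upper} is the free pre-upper semilattice, I will define $\overline{f}$ by recursion on constructors, sending $l(S,T,i)$ to $l_Y(\overline f S,\overline f T,i)$. The constructor $s$ is handled, and uniqueness holds, by boundary separation of $Y$.
\item To show $\synRef$ sends pre-upper semilattices to upper semilattices, I will use $\synRef$-induction on $\Phi(S,T)\defeq S\vee' T\pathle S$. The generators are covered by postcomposing $l(x,y,-)$ with $\eta_\synRef$. The predicate $\Phi$ is a predomain because it is the pullback of $\partial:\synRef X^\I\to\synRef X^2$ along $\langle\vee',\pi_1\rangle$.
\end{enumerate}

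For the powerdomain-construction part, the monad structure comes for free from the adjunction, so it remains only to check that domains go to domains. By \cref{prop:is-domain} it suffices to preserve pointedness through both $G$ and $\synRef$. The $\synRef$ step is the same proposition already proved for the convex case. For $G$, with $\bot'=\eta\bot$, I will show by induction on $G X$ that every $S$ admits a path $\bot'\leadsto S$. Singletons and joins are handled as in the lower case, using $\alpha\vee\beta$ and idempotence. Boundary separation again disposes of the constructor $s$.

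The main obstacle is the constructor $l$. Given inductive paths $\alpha:\bot'\leadsto S$ and $\beta:\bot'\leadsto T$, I will take $\gamma(i,j)=l(\alpha\,j,\beta\,j,i)$. Then $\gamma(i,0)=l(\bot',\bot',i)$, which is a path from $\bot'\vee\bot'=\bot'$ to $\bot'$. By boundary separation it equals the constant path, so $\gamma(i,0)=\bot'$. Likewise $\gamma(i,1)=l(S,T,i)$. The face conditions must also be checked: $\gamma(0,-)=\alpha\vee\beta$, matching the path built for $S\vee T$, and $\gamma(1,-)=\alpha$. Both hold by the boundary equations $l_0$ and $l_1$ evaluated pointwise. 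This is the one place where the swapped boundary matters, and I expect it to go through because the argument never uses orientation beyond these endpoint equations.
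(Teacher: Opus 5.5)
Your proposal is correct and takes the route the paper intends. The paper gives no separate proof; it only says the upper case is an entirely analogous construction using the quotient inductive type $G X$, which is what you carry out: freeness of $G X$ by recursion, $\mathcal{S}$-induction to reflect pre-upper semilattices into upper ones, and preservation of pointedness through $G$ and $\mathcal{S}$. Your face checks, $\gamma(0,-)=\alpha\vee\beta$, $\gamma(1,-)=\alpha$ and the degenerate face $\gamma(-,0)=\bot'$ via boundary separation, are the only place the swapped boundary of $l$ enters, and they go through as you describe.
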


\section{Denotational semantics}

In this section we use the synthetic powerdomains to define models of a nondeterministic extension of PCF dubbed \pcfnd. For generality we consider a \emph{call-by-push-value}~\cite{levy:2001} refinement of \pcf{} inspired by the semantic free-forgetful adjunction between the category of predomains \CCat{} and the category of $\TT$-algebras for a suitable monad $\TT$ that is parameterized over the various powerdomain constructions of \cref{sec:powerdomains}. Recall that types in call-by-push-value languages bifurcate into \emph{value} types and \emph{computation} types. 

\iblock{
    \mrow{\textit{Value types}\; A  := \kw{nat} \mid \kw{U}(X)}
    \mrow{\textit{Computation types}\; X := \kw{F}(A) \mid A \to X}
}

In the above \kw{F} and \kw{U} are type operators corresponding to the semantic free and forgetful functors. Programs are similarly stratified into values or computations. We write $(\tpv{}, \tpc{})$ for the (ambient) type of (value, computation) types, and \tm{A} for the type of \pcfnd{} terms of value type $A : \tpv$; computations are similarly classified by the type \tm{U X}. 

\iblock{
  \mrow{\textit{Values}\; a  := \kw{zero} \mid \kw{suc}(a)}
  \mrow{\textit{Computations}\; M := \kw{ret}(v) \mid \kw{bind}(M, x.N) \mid \lambda x.M \mid M~a \mid \kw{ifz}(a, M, x.N) \mid \kw{fix}(x.M) \mid M \vee N}
}

The typing rules for the \pcf{}-fragment are completely standard; nondeterministic branching $\vee : \tm{\U{X}} \to \tm{\U{X}} \to \tm{\U{X}}$ is available for computational types $X$.

\subsection{Denotational semantics}\label{sec:den-sem}

We parameterize the model construction over a powerdomain construction $P : \CCat \to \kw{SL}(\CCat)$ and write $\PP = (P, \eta_\PP, \mu_\PP)$ for the associated monad on $\CCat$. By \cref{prop:monad} there is a distributive law $l : \LL\PP \to \PP\LL$ and we write $\TT : \CCat \to \CCat$ for the induced composite monad $\PP\circ\LL$. We define a model by specifying a predomain for each value type and a $\TT$-algebra for each computation type. Let us write \kw{Predom} for the subuniverse of predomains (of some unspecified universe $\mathcal{U}$) and $\kw{Alg}_\TT(\mathcal{U})$ for the type of $\TT$-algebras $\mathcal{X}$ whose carrier set $|\mathcal{X}|$ is valued in some universe $\mathcal{U}$; we also write $\alpha_\mathcal{X} : \TT|\mathcal{X}| \to |\mathcal{X}|$ for the associated algebra structure map. We aim to define the following four semantic maps $\sem{-}$ (by an abuse of notation). 

\begin{multicols}{2}
    \iblock{
  \mrow{\sem{-} : \tpv \to \kw{Predom}}
  \mrow{\sem{-} : (A : \tpv) \to \tm{A} \to \sem{A}}
}
\columnbreak
\iblock{
  \mrow{\sem{-} : \tpc \to \kw{Alg}_\TT(\kw{Predom})}
  \mrow{\sem{-} : (X : \tpc) \to \tm{\kw{U}X} \to |\sem{X}| }
}
\end{multicols}

To lighten the notation we will elide the inclusion of computations into values and just write \tm{X} for \tm{\kw{U}X}. 

\subsubsection{Values}

Although the ambient natural numbers object \Nat{} of the topos \ECat{} does not necessarily have to be a predomain, its synthetic reflection $\NatP \defeq \synRef\Nat$ is always the natural numbers object in the category of predomains. Setting \sem{\kw{nat}} = \NatP, the terms \kw{zero}, \kw{suc}, \kw{ifz} of \pcfnd{} may then be interpreted in the canonical way.  

\subsection{Computations}

The computation type $\kw{F}(A)$ is sent to the free $\TT$-algebra on $\sem{A}$; $\kw{ret}$ and \kw{bind} are interpreted in the usual way using the monadic structure of $\TT$. For function types, we use the general fact that in a Cartesian closed category with a monad $\MonIdent{M}$, the function space $D^C$ has a unique $\MonIdent{M}$-algebra structure whenever $D$ has one~\cite{levy:2001}, and send $A \to X$ to the $\TT$-algebra whose underlying set is $\sem{A} \to |\sem{X}|$. Then lambda abstraction and application are canonically interpreted using the universal property of exponential objects. For fixed-point computation $\kw{fix}_X : \tm{X} \to  \tm{X}$, we recall that $\TT$ is the composite monad $\PP\LL$ and so the $\TT$-algebra \sem{X} is a predomain carrying a canonical $\LL$-algebra structure, from which it follows we may interpret $\kw{fix}_X M$ as the unique fixed-point of $\sem{M} : |\sem{X}| \to |\sem{X}|$. Lastly, for nondeterministic branching, we need to interpret the term $\vee_X : \tm{X} \to \tm{X} \to \tm{X}$, which follows from the fact that every $\TT$-algebra is also a $\PP$-algebra and thus carries a semilattice structure.

\begin{prop}[{\cite[Theorem 3.6.3]{barr-wells:1985}}]\label[prop]{prop:semilattice-map}
  For every $\TT$-algebra $(X, \alpha)$, the algebra map $T X \to X$ is a $\PP$-algebra homomorphism and thus preserves the semilattice structure. 
\end{prop}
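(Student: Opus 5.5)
The plan is to give the $\PP$-algebra structure on $X$ explicitly and then check that $\alpha$ is a homomorphism, following the standard argument for algebras of a composite monad built from a distributive law. Write $l : \LL\PP \to \PP\LL$ for the distributive law of \cref{prop:monad}. Composing with the unit of the lift monad gives a monad morphism $\iota \defeq \PP\eta_\LL : \PP \to \PP\LL = \TT$. This is a monad morphism by the distributive law axiom $l \circ \eta_\LL\PP = \PP\eta_\LL$ together with the definition $\mu_\TT = \PP\mu_\LL \circ \mu_\PP\mu_\PP\ldots$ of the composite multiplication. Restricting along $\iota$, every $\TT$-algebra $(X,\alpha)$ becomes a $\PP$-algebra with structure map $\beta \defeq \alpha \circ \PP\eta_\LL : PX \to X$. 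Its unit and multiplication laws follow from those of $\alpha$ and the fact that $\iota$ commutes with units and multiplications.

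\textbf{Homomorphism claim.} Next I will show that $\alpha : TX \to X$ is a $\PP$-algebra homomorphism from the $\PP$-algebra $TX = P(LX)$ to $(X, \beta)$. Here $P(LX)$ carries its free structure $\mu_\PP : P P(LX) \to P(LX)$. Concretely, I need
\[ \alpha \circ \mu_\PP{}_{LX} = \beta \circ P\alpha = \alpha \circ \PP\eta_\LL \circ P\alpha . \]
By naturality of $\PP\eta_\LL$, the right-hand side equals $\alpha \circ T\alpha \circ \PP\eta_\LL{}_{TX}$. The $\TT$-algebra associativity law turns this into $\alpha \circ \mu_\TT \circ \PP\eta_\LL{}_{TX}$. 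It then remains to check the identity $\mu_\TT \circ \PP\eta_\LL{}_{TX} = \mu_\PP{}_{LX}$ on $PPLX$. Unfolding $\mu_\TT = P\mu_\LL \circ \mu_\PP \circ P l$ and using $l\circ \eta_\LL = P\eta_\LL$ reduces this to $P\mu_\LL \circ \mu_\PP \circ PP\eta_\LL = \mu_\PP$. That equation holds by naturality of $\mu_\PP$ together with $\mu_\LL\eta_\LL = 1$. This is exactly the content of the cited result of Barr and Wells, so I expect this step to be routine diagram chasing.

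\textbf{Semilattice structure.} The last step is the semilattice claim. Since $\PP$ is the monad of a powerdomain construction, its algebras in $\CCat$ are semilattices. More precisely, $P$ is the composite of a left adjoint into $\kw{SL}(\CCat)$ (or into $\kw{LSL}$ or $\kw{USL}$) with a forgetful functor. So every $\PP$-algebra $(X,\beta)$ carries the operation $x \vee y \defeq \beta(\singleton{x}\vee\singleton{y})$, and every $\PP$-algebra homomorphism preserves this operation. For the free algebra $P(LX)$, this induced join coincides with the native join, because $\mu_\PP$ is a semilattice homomorphism. Hence $\alpha$ preserves $\vee$.

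\textbf{Expected obstacle.} The subtle point is whether $\PP$-algebras are exactly semilattices, i.e.\ whether the adjunction is monadic. I do not need this full result. I only need that $\beta$ induces a semilattice operation and that homomorphisms preserve it, and this can be checked directly from the algebra laws and the fact that $\singleton{-}$ and $\mu_\PP$ interact correctly with $\vee$.
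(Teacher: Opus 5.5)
Your proof is correct and takes essentially the same route as the paper, which gives no argument of its own and only cites Barr--Wells; yours is the standard one behind that citation. The unit axiom of the distributive law makes $\PP\eta_\LL : \PP \to \TT$ a monad morphism, restricting along it gives $\beta = \alpha \circ P\eta_\LL$, and your identity $\mu_{\TT,X} \circ P\eta_{\LL,TX} = \mu_{\PP,LX}$ is exactly what makes $\alpha$ a $\PP$-homomorphism out of the free algebra. Your check that $\PP$-homomorphisms preserve $x \vee y \defeq \beta(\{x\} \vee \{y\})$ also fills in the ``and thus'' the paper leaves implicit; it relies on each concrete $\PP$ coming from an adjunction, and only the garbled first formula for $\mu_\TT$ needs tidying.
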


\section{Operational semantics}\label{sec:op-sem}

Whereas denotational semantics allows us to reason about programs in terms of domains it does not say how to execute programs; the purpose of operational semantics is to specify how to assign meaning to programs in a completely mechanical way. In structural operational semantics~\cite{plotkin:1981} this mechanical procedure is specified by a family of relations ${\mapsto_A} \mathrel{\subseteq} \tm{A} \times \tm{A}$ such that each of $M \mapsto M'$ is a decidable proposition. The fundamental coherence property of denotational and operational semantics for ordinary \pcf{} is called \emph{computational adequacy}, and it states that at \emph{base type}, say at natural numbers, we have that $\sem{M} = \eta_\LL(n)$ implies $M \mapsto^* \ret{\overline{n}}$; the converse is known as \emph{soundness}. While the proof of soundness for both \pcf{} and \pcfnd{} are analogous to proofs in the analytic setting the proof of computational adequacy demands the evaluation relation to be \emph{observable}, \ie{} an \I{}-proposition. Following prior work~\cite{niu-sterling-harper:2024} we will employ an intermediate semantics dubbed the \emph{observational semantics}, which in our setting provides the additional benefit of factoring out structures that are common to all three powerdomains.   

\subsection{May and must evaluation}\label{sec:evaluation}

First note that $M \Downarrow v$ is observable if \I{} is \emph{closed} under countable joins of decidable propositions. Indeed since each $M \mapsto M'$ is decidable, $M \Downarrow v$ is equivalent to the join of the sequence of propositions $(M \mapsto^{(n)} M') \land (M' = \ret{v})$; that \I{} is closed under such joins means that whenever $\sup \phi_i = \top$ there exists $i : \Nat$ such that $\phi_i = \top$. Under the axioms we set out \I{} will always have countable joins, but using them to \emph{define} $M \Downarrow v$ would lead to a strange evaluation relation: $M \Downarrow v$ need not imply there exists some finite sequence of transitions $M \mapsto^{*} \kw{ret}(v)$. We may say that such joins lack the \emph{disjunction} or \emph{existence} property.

\begin{defn}[Axiom CDJ]
    We say \I{} is closed under countable, decidable propositions when it has such joins and they are preserved by $(- = \top) : \I{} \hookrightarrow \Omega$. In other words we have that \I{} has countable, decidable joins $\vee_{i : \Nat} \phi_i$ that are computed as $\exists {i : \Nat}.~\phi_i$.
\end{defn}

\begin{ax}\label[ax]{ax:cdj}
  Axiom CDJ holds.
\end{ax}

We may now proceed to define the evaluation relation for \pcfnd{}. The small-step semantics of \pcfnd{} just extends the standard semantics of \pcf{} with $\choice{M}{N} \mapsto M$ and $\choice{M}{N} \mapsto N$, consequently rendering evaluation nondeterministic and resulting in \emph{may} and \emph{must} evaluation. Because only base types are observable we restrict attention to the type of natural numbers. 

\begin{defn}
    Given $M : \tm{\F{\kw{nat}}}$ and $v : \kw{nat}$, \emph{may evaluation} $M \Downarrow v$ holds when there exists a transition sequence $M \mapsto^* \kw{ret}(v)$.
\end{defn}

To define must evaluation we need to keep track of all possible transitions from a given program; because there is a finite number of such candidates we define a family of relations ${\nextS{A}} \mathrel{\subseteq} \tm{A} \times (1 + K(\tm{A}))$, writing ${\nextS{}}$ when the type is clear.

\begin{defn}
    The relation ${\nextS{}}$ is defined by the following rules.
    \begin{mathpar}
        \inferrule*{
            ~
        }{
            \ret{v} \nextS{} \varnothing
        }

        \inferrule*{
            ~
        }{
            \kw{bind}(\ret{v}, x.N) \nextS{} \MkSet{N[v/x]}
        }

        \inferrule*{
            M \ne \ret{v}\\
            M \nextS{} S
        }{
            \kw{bind}(M, x.N) \nextS{} \MkSet{\kw{bind}(M', x.N) \mid M' \in S}
        }

        \inferrule*{
            ~
        }{
            \lambda x.M \nextS{} \varnothing
        }

        \inferrule*{
            ~
        }{
            (\lambda x.M)~a \nextS{} \MkSet{M[a/x]}
        }

        \inferrule*{
            M \nextS{} S \\
            M \ne \lambda x.N
        }{
            M~a \nextS{} \MkSet{M'~a \mid M' \in S}
        }

        \inferrule*{
            ~
        }{
            \kw{ifz}(\kw{zero}, M, x.N) \nextS{} \MkSet{M}
        }

        \inferrule*{
            ~
        }{
            \kw{ifz}(\kw{suc}(a), M, x.N) \nextS{} \MkSet{N[a/x]}
        }

        \inferrule*{
            ~
        }{
            \kw{fix}(x.M) \nextS{} \MkSet{M[\kw{fix}(x.M)/x]}
        }

        \inferrule*{
            ~
        }{
            M \vee N \nextS{} \MkSet{M, N}
        }
    \end{mathpar}
\end{defn}

The relation ${\nextS{A}}$ tracks precisely the possible one-step transitions.

\begin{prop}\label[prop]{prop:nextS-mapsto} 
    We have that $M \nextS{} S$ and $M \nextS{} S'$ implies that $S = S'$ and that $M \nextS{} \MkSet{M' : \tm{A} \mid M \mapsto M'}$.
\end{prop}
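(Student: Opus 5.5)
Both claims follow by induction on the derivation of $M \nextS{} S$, that is, on the structure of $M$, since every rule is syntax-directed on the head constructor of $M$.

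\emph{Functionality.} Assume $M \nextS{} S$ and $M \nextS{} S'$ and compare the two derivations by the last rule used. For each head constructor at most one rule applies. The two rules for $\kw{bind}(M_0, x.N)$ are separated by the side condition $M_0 \ne \ret{v}$. The two rules for application are separated by the side condition that the function part is not a $\lambda$. The two $\kw{ifz}$ rules are separated by whether the value is $\kw{zero}$ or $\kw{suc}(a)$. Hence both derivations end in the same rule. For the axioms the result sets are literally equal. For the two congruence rules (the $\kw{bind}$ and application rules with a premise) the induction hypothesis gives $S_0 = S_0'$ for the subterm. The image sets $\MkSet{\kw{bind}(M', x.N) \mid M' \in S_0}$ are then equal, because the image of a $K$-set is a function of the set, namely the functorial action of $K$.

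\emph{Completeness.} Again by induction on $M$, show that $M \nextS{} \MkSet{M' \mid M \mapsto M'}$, reading the right-hand side as $\varnothing$ (the left summand of $1 + K(\tm{A})$) when the set is empty. For this we need the standard small-step rules of \pcfnd: the $\beta$-like reductions, $\kw{fix}$ unfolding, the two choice rules $\choice{M}{N} \mapsto M$ and $\choice{M}{N} \mapsto N$, and the evaluation-context congruences for $\kw{bind}$ and application. The cases are as follows.
\begin{enumerate}
\item For terminal forms $\ret{v}$ and $\lambda x.M$, there are no transitions, matching $\varnothing$.
\item For redexes ($\kw{bind}$ of a $\kw{ret}$, $\beta$, $\kw{ifz}$, $\kw{fix}$), the reduct is unique and equals the singleton produced by the rule.
\item For choice, the transitions are exactly $M$ and $N$, giving $\MkSet{M, N}$.
\item For $\kw{bind}(M_0, x.N)$ with $M_0 \ne \ret{v}$, the transitions are exactly $\kw{bind}(M_0', x.N)$ for $M_0 \mapsto M_0'$. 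By the induction hypothesis $M_0 \nextS{} \MkSet{M_0' \mid M_0 \mapsto M_0'}$, so the congruence rule yields the image set, which is the required set. Application is handled the same way.
\end{enumerate}

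\emph{Main obstacle.} The delicate point is the bookkeeping between $\varnothing$ and $K$-sets, since $K$ is the free semilattice without unit. In the congruence rules we must ensure the image is taken in $1 + K$ coherently: if $S_0 = \varnothing$ the image is $\varnothing$. We must also check that $\MkSet{M' \mid M \mapsto M'}$ genuinely denotes an element of $1 + K(\tm{A})$, i.e.\ that it is a finite decidable subset. Both follow from the same induction, since every rule produces an explicitly enumerated finite set. Equality of $K$-elements is proved by $K$-induction (\cref{prop:K-induction}) on the image map, so no appeal to decidable equality of terms is needed beyond the decidability of $\mapsto$.
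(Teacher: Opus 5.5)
Your proposal is correct and is essentially the argument the paper leaves implicit: the paper states this proposition without proof, and it is the routine induction on the syntax-directed rules for $\nextS{}$ that you carry out. Functionality comes from the mutually exclusive side conditions, and the second claim from inverting the small-step rules, with the congruence cases handled by the induction hypothesis. Your concern about an empty image in the congruence rules is harmless either way: for closed well-typed terms the premise set there is never $\varnothing$, since the bound computation (resp.\ the function head) is neither a $\kw{ret}$ nor a $\lambda$.
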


\begin{defn}
    \emph{Must evaluation} ${\must}$ is the least relation on $\tm{\F{\kw{nat}}} \times K(\tm{\kw{nat}})$ closed under the following.

    \begin{enumerate}
        \item We have that $\ret{n} \must \MkSet{n}$. 
        \item If $M \nextS{} \MkSet{M_1, \dots, M_k}$ and $M_i \must{} S_i$ for all $i \in \MkSet{1, \dots, k}$, then $M \must{} \bigcup_i S_i$. 
    \end{enumerate}
\end{defn}
\begin{prop}\label[prop]{prop:must-prop}
    If $M \must{} S$ and $M \must{} S'$ then $S = S'$. 
\end{prop}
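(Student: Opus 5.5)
We argue by induction on the derivation of $M \must{} S$, which is available because $\must{}$ is defined as the least relation closed under the two rules. The statement to prove by induction is: for every derivation of $M \must{} S$ and every $S'$ with $M \must{} S'$, we have $S = S'$. Equivalently, we show that the relation $R \defeq \MkSet{(M, S) \mid \forall S'.~M \must{} S' \to S = S'}$ is closed under both rules. Then minimality gives ${\must{}} \subseteq R$, and $R$ is exactly the claim.

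To analyse an arbitrary $M \must{} S'$ we need an inversion principle. Since $\must{}$ is least, any derivation of $M \must{} S'$ ends in one of the two rules. The key observation is that the two rules never overlap.
\begin{enumerate}
    \item If $M = \ret{n}$, then $\ret{n} \nextS{} \varnothing$, and by \cref{prop:nextS-mapsto} this is the unique $S''$ with $\ret{n} \nextS{} S''$. The second rule applied to $\ret{n}$ would therefore have $k = 0$ premises and produce $\bigcup_{i \in \varnothing} S_i = \varnothing$. This does clash with the first rule's $\MkSet{n}$.
\end{enumerate}
So as literally stated the two rules do overlap at $\ret{n}$. I would handle this by reading the second rule with the implicit side condition that $M$ is not a value, i.e.\ $k \ge 1$ or $M \ne \ret{v}$. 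That is clearly the intent, since terminal $\ret{n}$ is covered by the first rule. I expect this to be the main obstacle: fixing the precise reading of the definition so that inversion is deterministic. If the side condition is unavailable, I would state the restriction explicitly and note that the statement needs it.

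Under that reading the inductive cases are straightforward.
\begin{enumerate}
    \item \emph{First rule.} Suppose $M = \ret{n}$ and $S = \MkSet{n}$, and take any $\ret{n} \must{} S'$. Inversion shows $S'$ must also come from the first rule, because $\ret{n}$ is a value. Injectivity of $\kw{ret}$ on terms then gives $S' = \MkSet{n}$.
    \item \emph{Second rule.} Suppose $M \nextS{} \MkSet{M_1,\dots,M_k}$ and $M_i \must{} S_i$, with the inductive hypothesis that each $S_i$ is the unique output for $M_i$. Take any $M \must{} S'$. Since $M$ is not a value, this derivation ends in the second rule, say $M \nextS{} T$ with $T = \MkSet{N_1,\dots,N_m}$, $N_j \must{} S'_j$, and $S' = \bigcup_j S'_j$. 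By \cref{prop:nextS-mapsto}, $T = \MkSet{M_1,\dots,M_k}$ as elements of $K(\tm{\kw{nat}})$.
\end{enumerate}

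The remaining subtlety in the second case is that the enumerations of $T$ may differ or contain repetitions. I would phrase the rule intrinsically: $M \must{} U$ whenever $M \nextS{} T$ and there is a family assigning to each $N \in T$ some $S_N$ with $N \must{} S_N$, where $U$ is the union of that family. Every $N_j$ equals some $M_i$, so the inductive hypothesis gives $S'_j = S_i$; conversely every $M_i$ is some $N_j$. Hence the two unions $\bigcup_j S'_j$ and $\bigcup_i S_i$ coincide as finite subsets, using idempotence, commutativity and associativity in $K$. Membership in $K(\tm{A})$ is decidable because term equality is decidable, so this element-wise comparison is legitimate constructively. This gives $S' = S$ and closes the induction.
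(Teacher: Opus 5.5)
Your proof is correct. The paper states this proposition without proof, and your argument is the routine induction on the derivation of $M \must{} S$ that it leaves implicit, using determinism of $\nextS{}$ (\cref{prop:nextS-mapsto}) for inversion.

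The overlap you worry about at $\ret{n}$ does not actually occur. $\nextS{}$ is valued in $1 + K(\tm{A})$, with $\varnothing$ the $1$-summand, and $K$ is the free unitless semilattice of nonempty finite sets. Likewise $\must{}$ is valued in $K(\tm{\kw{nat}})$. So the second rule already forces $k \ge 1$, and your ``side condition'' is just the literal reading of the definition.

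One small correction: the identification of $T$ with $\MkSet{M_1,\dots,M_k}$ takes place in $K(\tm{\F{\kw{nat}}})$, not $K(\tm{\kw{nat}})$.
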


\begin{prop}
    Assuming \cref{ax:cdj}, we have that $M \Downarrow v$ and $M \must{} S$ are \I{}-propositions. 
\end{prop}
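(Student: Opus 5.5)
The plan is to exhibit each relation as a countable join of decidable propositions indexed by $\Nat$, and then invoke \cref{ax:cdj}. Axiom CDJ says exactly that such a join lies in \I{} and coincides with the existential $\exists i:\Nat.~\phi_i$. The basic ingredient is that the one-step relation is decidable. Terms of \pcfnd{} form a decidable (countable) set, the rules are syntax-directed, and each premise is a decidable syntactic side condition (such as $M \ne \ret{v}$) or a recursive call on a subterm. Hence, by induction on $M$, we can compute the unique $S$ with $M \nextS{} S$ guaranteed by \cref{prop:nextS-mapsto}. Consequently each $M \mapsto M'$ is decidable, and the finite set of successors of $M$ is computable.

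\textbf{May evaluation.} For $n : \Nat$ let $\phi_n$ be the proposition that there is a transition sequence of length $n$ from $M$ to $\ret{v}$. Since each term has finitely many computable successors, the set of terms reachable in exactly $n$ steps is a computable finite set, defined by recursion on $n$. Membership of $\ret{v}$ in it is therefore decidable, as term equality is decidable. Thus $\phi_n$ is decidable, and by definition $M \Downarrow v \iff \exists n.~\phi_n$. By Axiom CDJ, $M \Downarrow v$ equals $\bigvee_n \phi_n$ and so is an \I{}-proposition.

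\textbf{Must evaluation.} Here I will use a step-indexed approximation $\must_n$. Set $M \must_0 S$ iff $M = \ret{k}$ and $S = \MkSet{k}$. Set $M \must_{n+1} S$ iff either $M \must_0 S$, or $M \nextS{} \MkSet{M_1,\dots,M_k}$ with that set nonempty and there are $S_i$ with $M_i \must_n S_i$ and $S = \bigcup_i S_i$. I will show by induction on $n$ that $\must_n$ is decidable and functional, i.e. for fixed $M$ we can compute whether some $S$ exists and which one it is. The key facts are:
\begin{enumerate}
\item The successor set is finite and computable.
\item By the induction hypothesis there are only finitely many candidate $S_i$ per branch, so the finite conjunction and union are computable.
\item Equality on $K(\tm{\kw{nat}})$ is decidable, because finite sets of naturals with decidable equality have decidable equality.
\end{enumerate}
Next, by induction on derivations, $M \must S \iff \exists n.~M \must_n S$. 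One direction is immediate: the least relation contains every $\must_n$. For the other, a derivation is a finite tree and so has some finite height $n$. Applying CDJ once more, $M \must S = \bigvee_n (M \must_n S)$ is an \I{}-proposition. \cref{prop:must-prop} ensures that the different indices do not produce conflicting $S$.

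The main obstacle is the must case, because the finite branching of the derivation tree has to be handled constructively. We need the finiteness of $\nextS{}$-successors as an honest enumerated list, not merely as a Kuratowski-finite element of $K$. Otherwise, deciding "for all branches" and choosing the $S_i$ is not evidently decidable. I would handle this by working with the concrete list of successors produced by the syntax-directed rules rather than with the abstract $K$-element. I would also use the determinacy of both $\nextS{}$ and $\must_n$ so that no genuine choices are needed.
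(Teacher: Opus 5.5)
Your proposal is correct and is essentially the paper's argument: $M \Downarrow v$ is exhibited as the countable join of the decidable propositions $M \mapsto^{(n)} \ret{v}$ (decidable because $\nextS{}$ is finitely branching), and $M \must{} S$ as $\exists k.~M \Downarrow^{\forall_k} S$ for a decidable stratified relation (\cref{prop:must-stratified}), so Axiom CDJ settles both cases. Your stratification is actually slightly more careful than the paper's $\Downarrow^{\forall_k}$. Keeping the base clause at every index and requiring a nonempty successor set makes your relation monotone in $n$; state this monotonicity explicitly, since your finite-height step uses it. With monotonicity, subderivations of unequal depth, e.g.\ in $\ret{a} \vee \kw{bind}(\ret{b}, x.\ret{x})$, reach a common index. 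The paper's rule forces all branches to the same $k$ and places $\ret{n}$ only at $k = 0$, so it does not literally cover that case.
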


\section{Observational semantics}\label{sec:ob-sem}

While defining the operational semantics seems to necessitate certain properties about the dominance \I{} that need not hold in some models of synthetic domain theory~\cite{van-oosten:2000,sterling-harper:2022}, one may always define an \emph{observational semantics} that is halfway between the denotational and operational semantics as shown in Niu, Sterling and Harper~\cite{niu-sterling-harper:2024}. The idea is that one may define a function $\kw{run} : \tm{\F{\kw{nat}}} \to T\NatP$ that directly ``executes'' programs of base type by recursively unfolding the small-step semantics. In \opcit{} it was argued that the observational semantics (called the \emph{computational semantics} in that work) is more natural in synthetic domain theory and speculated how one may relate it to traditional operational semantics from an \emph{external} point of view without assuming \cref{ax:cdj}. In our setting the observational semantics will also be parameterized by a given powerdomain construction and thus has the additional benefit of factoring out commonalities of the soundness and adequacy proofs for the lower, upper, and convex semantics of \pcfnd{}. As in \cref{sec:den-sem} we assume there is a powerdomain construction $\mathcal{P} : \CCat \to \kw{SL}(\CCat)$ and write $\TT = (T, \eta_\TT, \mu_\TT)$ for the composite monad $\PP\circ\LL$.

\begin{prop}
    There is a function $\kw{next}_A : \tm{A} \to 1 + K(\tm{A})$ such that $M \nextS{A} (\kw{next}_A~M)$ holds.
\end{prop}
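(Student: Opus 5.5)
We define $\kw{next}_A$ by structural recursion on the syntax of terms, following the rules defining ${\nextS{}}$ clause by clause, and then prove $M \nextS{A} (\kw{next}_A~M)$ by induction on $M$. The rules are syntax-directed: the head constructor of $M$, together with a decidable test on an immediate subterm, determines which rule applies. Concretely, we will set:
\begin{itemize}
\item $\kw{next}(\ret{v}) = \varnothing$ and $\kw{next}(\lambda x.M) = \varnothing$;
\item $\kw{next}(\kw{fix}(x.M)) = \MkSet{M[\kw{fix}(x.M)/x]}$ and $\kw{next}(M \vee N) = \MkSet{M, N}$;
\item $\kw{next}(\kw{ifz}(a, M, x.N))$ determined by whether $a$ is $\kw{zero}$ or $\kw{suc}(a')$;
\item $\kw{next}(\kw{bind}(M, x.N)) = \MkSet{N[v/x]}$ if $M = \ret{v}$, and otherwise the image of $\kw{next}(M)$ under $M' \mapsto \kw{bind}(M', x.N)$;
\item application handled analogously, with the $\beta$-redex case when $M = \lambda x.N$ and otherwise the image of $\kw{next}(M)$ under $M' \mapsto M'~a$.
\end{itemize}
Here ``image'' means the functorial action of $1 + K(-)$: we apply $K$ of the relevant map, and send the left summand $1$ to itself.

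The first ingredient is decidability of the case distinctions, namely $M = \ret{v}$ and $M = \lambda x.N$. Syntax is an inductive type in the ambient topos with decidable constructor tags, so these tests are decidable. The definition is therefore a legitimate function into $1 + K(\tm{A})$, even constructively.

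Note that variables never occur in closed terms, so the case of a variable can be excluded by typing, or sent to $\varnothing$. We will interpret $\varnothing$ as the left summand $\kw{inl}(\ast)$ of $1 + K(\tm{A})$. This matters because $K$, the free semilattice, has no empty element.

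The induction is then routine. Each base clause matches an axiom rule of ${\nextS{}}$ verbatim. In the congruence cases for $\kw{bind}$ and application, we use the induction hypothesis $M \nextS{} \kw{next}(M)$ together with the side condition, which holds in exactly the branch we chose. The $\MkSet{\kw{bind}(M', x.N) \mid M' \in S}$ notation in the rule is read as the image under $K$.

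The main obstacle I expect is a mismatch in this congruence case when $\kw{next}(M) = \varnothing$. This happens when $M$ is stuck or a value, for example $M$ a $\lambda$ in $\kw{bind}$ position, which typing rules out. So the cleanest route is to do the induction on typing derivations of closed terms. If some ill-typed case remains, I would check that the rule's set-builder also yields $\varnothing$ on $\varnothing$, so that the functorial image over $1 + K$ agrees with the rule.
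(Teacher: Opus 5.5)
Your proposal is correct. The paper states this proposition without proof and treats it as routine, and your syntax-directed structural recursion is the evident argument it relies on: you recurse only into the head computation of $\kw{bind}$ and application, use the functorial action of $1 + K(-)$ in the congruence cases, read $\varnothing$ as the left summand, and restrict attention to closed terms.
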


\begin{defn}
    The \emph{observational semantics} $\kw{run}$ is defined to be the fixed point of the following functional. 

    \iblock{
        \mrow{F_\kw{run} : (\tm{\F{\kw{nat}}} \to T\NatP) \to (\tm{\F{\kw{nat}}} \to T\NatP)}
        \mrow{F_\kw{run}~f~M = \kw{case}~\kw{next}_{\F{\kw{nat}}}(M) \begin{cases}
            \varnothing &\hookrightarrow \sem{M}\\
            \MkSet{M_1, \dots, M_k} &\hookrightarrow f~M_1 \vee \dots \vee f~M_k
        \end{cases}}
    }
    
\end{defn}

\subsection{Observational soundness}

We have that the denotational semantics is sound with respect to the observational semantics. First observe that the denotational semantics is invariant under set of possible one-step transitions of a term.

\begin{prop}\label[prop]{prop:sound-one-step}
    Given that $M \nextS{} S$ holds for some nonempty set of terms $S = \MkSet{M_1, \dots, M_k}$, we have that $\sem{M} = \sem{M_1} \vee \dots \vee \sem{M_k}$. 
\end{prop}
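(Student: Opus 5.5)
We argue by induction on the derivation of $M \nextS{} S$, considering only the rules whose conclusion is a nonempty set. Two families of cases arise. The base rules are $\beta$-reduction for $\kw{bind}$ and $\lambda$, the two $\kw{ifz}$ rules, $\kw{fix}$, and $\vee$. Each has a singleton output except for $\vee$. For these, the claim reduces to the standard soundness equations of the denotational semantics:
\begin{itemize}
\item $\sem{\kw{bind}(\ret{v},x.N)} = \sem{N[v/x]}$ holds by the monad unit law together with a substitution lemma.
\item $\sem{(\lambda x.M)~a} = \sem{M[a/x]}$ holds by the $\beta$-law for exponentials and substitution.
\item The $\kw{ifz}$ equations hold because $\sem{\kw{ifz}}$ is defined by the recursor of the natural numbers object $\NatP$.
\item $\sem{\kw{fix}(x.M)} = \sem{M[\kw{fix}(x.M)/x]}$ holds because $\kw{fix}$ is interpreted as a fixed point of $\sem{M}$.
\item $\sem{M \vee N} = \sem{M} \vee \sem{N}$ holds by the definition of the interpretation of $\vee$.
\end{itemize}
We first establish the substitution lemma $\sem{M[a/x]} = \sem{M}\circ\langle 1,\sem{a}\rangle$ by a routine induction on terms, and then discharge these cases.

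The inductive cases are the congruence rules for $\kw{bind}$ and application. Suppose $M \nextS{} \MkSet{M_1,\dots,M_k}$ with $k \ge 1$; then the induction hypothesis gives $\sem{M} = \bigvee_i \sem{M_i}$. For application, $\sem{M~a}$ is evaluation of $\sem{M} : \sem{A} \to |\sem{X}|$ at $\sem{a}$. The semilattice structure on the function-space $\TT$-algebra is computed pointwise, since the algebra structure on $D^C$ is pointwise and \cref{prop:semilattice-map} transports the $\PP$-structure. Hence $\sem{M~a} = \bigvee_i \sem{M_i}(\sem{a}) = \bigvee_i \sem{M_i~a}$.

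For $\kw{bind}$, the interpretation is $\sem{\kw{bind}(M,x.N)} = \widetilde{\sem{N}}(\sem{M})$, where $\widetilde{(-)}$ is the Kleisli/algebra extension into the $\TT$-algebra $\sem{X}$. This extension is the composite $T\sem{A} \xrightarrow{T\sem{N}} T|\sem{X}| \xrightarrow{\alpha} |\sem{X}|$. Both maps are $\PP$-algebra homomorphisms: the first because $T = P L$ with $P$ landing in $\kw{SL}(\CCat)$, so $T$ of a map is a semilattice homomorphism; the second by \cref{prop:semilattice-map}. Therefore the composite preserves binary joins, and we get $\sem{\kw{bind}(M,x.N)} = \bigvee_i \sem{\kw{bind}(M_i,x.N)}$. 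Contexts must be carried along, with terms interpreted as maps out of $\sem{\Gamma}$; since joins in hom-sets are pointwise, the argument is unchanged.

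The main obstacle is the $\kw{bind}$ congruence case. It requires that the semilattice structure on a free $\TT$-algebra agrees with the one produced by \cref{prop:semilattice-map}, and that $T$ of a map is a join homomorphism, which uses the distributive-law presentation of $\TT = \PP\LL$ from \cref{prop:monad}. I would first verify that the $\PP$-algebra structure induced on any $\TT$-algebra via $\eta_\LL$ makes $\mu_\TT$ and $T f$ semilattice maps, and then conclude as above.
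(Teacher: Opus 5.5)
Your proposal is correct and follows essentially the same route as the paper: induction on the derivation of $M \nextS{} S$, with the singleton cases settled by direct computation, the $\kw{bind}$ congruence case by the fact that $f^\dagger = \alpha\circ Tf$ preserves joins (via \cref{prop:semilattice-map}), and the application case by the pointwise semilattice structure on function-type algebras. Your extra checks are details the paper leaves implicit: the substitution lemma, the explicit $\vee$ case, and the agreement of the semilattice structure on the free $\TT$-algebra $T\sem{A}$ with that of $P(L\sem{A})$.
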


\begin{proof}
    The proof proceeds by induction on the derivation of $M \nextS{} S$. For the cases in which $S$ is a singleton the result follows by directly computing the denotational semantics. Otherwise we have two nonvacuous cases. 

    \begin{enumerate}
    \item Suppose that $\kw{bind}(M, x.N) \nextS{} \MkSet{\kw{bind}(M', x.N) \mid M' \in S}$ because $M \nextS{} S$. By inspection we must have that $S = \MkSet{M_1, \dots, M_k}$ is a nonempty set. We compute. 

    \iblock{
        \mhang{\sem{\kw{bind}(M, x.N)} = \sem{x.N}^\dagger\sem{M}}{
            \mrow{= \sem{x.N}^\dagger(\sem{M_1} \vee \dots \vee \sem{M_k})}
            \mrow{= \sem{x.N}^\dagger\sem{M_1} \vee \dots \vee \sem{x.N}^\dagger\sem{M_k}}
            \mrow{= \sem{\kw{bind}(M_1, x.N)} \vee \dots \vee \sem{\kw{bind}(M_k, x.N)}}
        }
    }

    In the above for a map $f : A \to X$ where $X$ has the structure of a $\TT$-algebra $(X, \alpha)$ we write $f^\dagger : TA \to X$ for the map $TA \xrightarrow{T f} TX \xrightarrow{\alpha} X$, which preserves semilattice structure by \cref{prop:semilattice-map}. 
    \item Suppose that $M~a \nextS{} \MkSet{M'~a \mid M' \in S}$ because $M \nextS{} S$ for some nonempty $S = \MkSet{M_1, \dots, M_k}$. We compute.

    \iblock{
        \mhang{\sem{M~a} = \sem{M}~\sem{a}}{
            \mrow{= (\sem{M_1} \vee \dots \vee \sem{M_k})~\sem{a}}
            \mrow{= \sem{M_1}~\sem{a} \vee \dots \vee \sem{M_k}~\sem{a}}
            \mrow{= \sem{M_1~a} \vee \dots \vee \sem{M_k~a}}
        }
    }

    In the above we use the fact that the semilattice structure on function types is defined pointwise.
    \end{enumerate}
\end{proof}

\begin{restatable}[Observational soundness]{thm}{ThmObservationalSoundness}\label[thm]{thm:observational-soundness}
    We have that $(\kw{run}~M) \pathle \sem{M}$ for all $M : \tm{\F{\kw{nat}}}$.   
\end{restatable}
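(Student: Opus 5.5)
We prove the statement by fixed-point induction on the definition of $\kw{run}$ as the fixed point $\mu F_\kw{run}$ of the functional $F_\kw{run}$. The domain in question is the function space $D \defeq \tm{\F{\kw{nat}}} \to T\NatP$. Since $T\NatP = P(L\NatP)$ carries a $\TT$-algebra structure, and in particular an $\LL$-algebra structure, it is a domain. Hence $D$ is a domain with pointwise structure, and its least element is the constant map at the least element of $T\NatP$.

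Consider the subset
\[
  \Phi \defeq \MkSet{f : D \mid \forall M : \tm{\F{\kw{nat}}}.~ f~M \pathle \sem{M}}.
\]
We must show that $\Phi$ is admissible and that $F_\kw{run}~\Phi \subseteq \Phi$.

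\emph{Admissibility.} For each fixed $M$, the subset $\MkSet{f \mid f~M \pathle \sem{M}}$ is the pullback of the boundary map $\partial : (T\NatP)^\I \to (T\NatP)^2$ along $f \mapsto (f~M, \sem{M})$. This is the same pattern as the pullbacks in the proofs of the pointedness of $\mathcal{S}X$ and of \cref{prop:admissible}, so by \cref{prop:complete} the subset is complete. Because $\pathle$ is a partial order on predomains, the path is unique, so the subset is a genuine proposition-valued subset rather than carrying extra structure.

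This subset contains the bottom map, since the least element of $T\NatP$ lies below $\sem{M}$. Then $\Phi$ is the intersection of these subsets over all $M$, and it is admissible by \cref{cor:intersection}.

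\emph{Closure under $F_\kw{run}$.} Suppose $f \in \Phi$ and fix $M$. We split on $\kw{next}(M)$.
\begin{enumerate}
  \item If $\kw{next}(M) = \varnothing$, then $F_\kw{run}~f~M = \sem{M}$ and reflexivity of $\pathle$ gives the result.
  \item If $\kw{next}(M) = \MkSet{M_1,\dots,M_k}$ with $k \ge 1$, then the hypothesis gives paths $\alpha_i : f~M_i \leadsto \sem{M_i}$. Combine them pointwise into the path $i \mapsto \alpha_1(i) \vee \dots \vee \alpha_k(i)$, which runs from $f~M_1 \vee \dots \vee f~M_k$ to $\sem{M_1} \vee \dots \vee \sem{M_k}$. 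By \cref{prop:sound-one-step}, applied with $M \nextS{} \kw{next}(M)$, the endpoint equals $\sem{M}$.
\end{enumerate}

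The key observation in the second case is that the semilattice operation is a map $T\NatP^2 \to T\NatP$, so it preserves paths componentwise. The case $k = 0$ with a nonempty set cannot occur.

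Fixed-point induction then gives $\kw{run} = \mu F_\kw{run} \in \Phi$, which is the statement.

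The main obstacle is admissibility: making sure $\Phi$ is a complete subset of $D$ in the synthetic sense, with completeness coming from the pullback presentation. Everything else is direct computation.
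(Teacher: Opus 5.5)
Your proposal is correct and takes the same route as the paper: fixed-point induction on the same predicate $\Phi$, admissibility via \cref{prop:complete} and \cref{cor:intersection}, and closure under $F_\kw{run}$ by the same case split on $\kw{next}(M)$ using \cref{prop:sound-one-step}. You also make explicit two points the paper leaves implicit: that the bottom map lies in $\Phi$, and that the binary join preserves paths componentwise, which is what justifies reducing the goal to the hypotheses $f~M_i \pathle \sem{M_i}$.
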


\begin{proof}
    We proceed by fixed-point induction. Consider the following predicate $\Phi \subseteq (\tm{\F{\kw{nat}}} \to T\NatP)$. 

    \iblock{
        \mrow{\Phi = \MkSet{f \mid \forall M.~f~M \pathle \sem{M}}}
    }
    
    By \cref{prop:complete,cor:intersection} we have that $\Phi$ is admissible. It remains to show that $F_\kw{run} \Phi \subseteq \Phi$. We fix $M : \tm{\F{\kw{nat}}}$ and case on $F_\kw{run}~f~M$. 
    \begin{enumerate}
        \item We have that $\kw{next}(M) = \varnothing$. Then by definition we need to show that $F_\kw{run}~f~M = \sem{M} \pathle \sem{M}$ which holds. 
        \item We have that $\kw{next}(M) = \MkSet{M_1, \dots, M_k}$. Then we need to show that $f~M_1 \vee \dots \vee f~M_k \pathle \sem{M}$. By \cref{prop:sound-one-step} it suffices to show $f~M_i \pathle \sem{M_i}$ for all $M_i$, but this is the assumption $f \in \Phi$. 
    \end{enumerate}

\end{proof}

\subsection{Observational adequacy}

The proof of the converse property of \emph{observational adequacy}, that $\sem{M} \pathle (\kw{run}~M)$ for all $M : \tm{\F{\kw{nat}}}$, follows from a logical relations argument. We follow the approach of Niu, Sterling, and Harper~\cite{niu-sterling-harper:2024} to lift a binary semantic-syntactic relation on $\sem{A} \times \tm{A}$ to a relation on $\sem{T A} \times \tm{\F{A}}$, which we show to be an instance of the semantic $\top\top$-lifting construction of Katsumata~\cite{katsumata:2005}, where the construction is parameterized in a \emph{result type} $G$ and \emph{result predicate}\footnote{In Katsumata~\cite{katsumata:2005} the result type may be any type, but restricting to syntactic types is sufficient for our purposes. } $\Phi_G \subseteq T\sem{G} \times \tm{\F{G}}$. 

\begin{defn}
    Given a result type $G$ and result predicate $\Phi_G$, we define the following liftings of a relation $R \subseteq \sem{A} \times \tm{A}$ for any \pcfnd{} type $A$. 

    \iblock{
        \mrow{R^\top \subseteq (\sem{A} \to T\sem{G}) \times \tm{A \to \F{G}}; R^{\top\top} \subseteq T\sem{A} \times \tm{\F{A}}}
        \mrow{R^\top = \MkSet{(f, f') \mid \forall a, a'.~ R(a,a') \to \Phi_G(f~a, f'~a')}}
        \mrow{R^{\top\top} = \MkSet{(M, M') \mid \forall f, f'. R^\top(f, f') \to \Phi_G(f^\dagger M, \bind{M'}{f'})}}
    }

    Recall that $f^\dagger$ is the monadic lift of a function $f$. The relation $R^{\top\top}$ is called the \emph{$\top\top$-lifting} of $R$.
\end{defn}

Using $\top\top$-lifting we may construct a logical relation that is parameterized over a result type and predicate, the choice of which is motivated by the particular metatheorem to be proved. 

\begin{defn}
    The \emph{formal approximation relations} are the following family of relations ${\lhd_A} \subseteq \sem{A} \times \tm{A}$ defined by recursion on the structure of syntactic types $A$.
    
    \iblock{
        \mrow{n \lhd_\kw{nat} n' \defeq \sem{n'} = n} 
        \mrow{M \lhd_{\U{\F{A}}} M' \defeq M \lhd_A^{\top\top} M'}
        \mrow{f \lhd_{\U{(A \to X)}} f' \defeq \forall a, a'.~ (a \lhd_A a') \to (f~a) \mathrel{\lhd_X} (f'~a')}
    }
    
\end{defn}

The proof of observational adequacy proceeds in two steps: 1) we prove the \emph{fundamental lemma}, which states that $\sem{M} \lhd_A M$ for all $M : \tm{A}$ and 2) pick a result type and predicate such that $\sem{M} \lhd_{\U{\F{\kw{nat}}}} M$ that implies $\sem{M} \pathle (\kw{run}~M)$. The proof of the fundamental lemma proceeds by induction on the structure of \pcfnd{} terms; to prove the case for fixed-points we require that each formal approximation relation restricts to an admissible subset $- \lhd_X M$ for all computation types $X$ and $M : \tm{\U{X}}$.  

\begin{prop}
    If $- \lhd_{\U{X}} M$ is admissible for all $M : \tm{\U{X}}$, then so is $-\lhd_{\U{(A \to X)}} f$ for all $f : \tm{A \to X}$.  
\end{prop}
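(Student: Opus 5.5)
We must show that for every $f : \tm{A \to X}$ the subset $\Psi_f \defeq \MkSet{g : |\sem{A \to X}| \mid g \lhd_{\U{(A \to X)}} f}$ of $\sem{A} \to |\sem{X}|$ is admissible, i.e.\ complete and containing the least element. The plan is to write $\Psi_f$ as an intersection of subsets that are admissible by hypothesis, and then apply \cref{cor:intersection}. By definition,
\[
  \Psi_f = \bigcap_{a : \sem{A}} \bigcap_{a' : \tm{A}} \MkSet{g \mid (a \lhd_A a') \to (g~a) \lhd_X (f~a')}.
\]
So it suffices to show that, for fixed $a$ and $a'$, the subset $\Psi_{a,a'} \defeq \MkSet{g \mid (a \lhd_A a') \to (g~a) \lhd_X (f~a')}$ is admissible in the domain $\sem{A} \to |\sem{X}|$.

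For fixed $a, a'$, write $\mathrm{ev}_a : (\sem{A} \to |\sem{X}|) \to |\sem{X}|$ for evaluation at $a$. The set $\MkSet{g \mid (g~a) \lhd_X (f~a')}$ is the preimage of $(- \lhd_X (f~a'))$ under $\mathrm{ev}_a$, so it is a pullback of a complete type along a map into a complete type. By \cref{prop:complete} it is therefore complete. It also contains the least element: the $\TT$-algebra, and hence the $\LL$-algebra, structure on the function space is pointwise, so the least element of $\sem{A} \to |\sem{X}|$ is $\lambda\_.\bot_X$. Evaluating it at $a$ gives $\bot_X$, which lies in $-\lhd_X (f~a')$ by the admissibility hypothesis on $- \lhd_{\U{X}} (f~a')$. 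Hence this preimage is admissible.

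It remains to handle the implication $(a \lhd_A a') \to -$, and this is the main obstacle, since $a \lhd_A a'$ is an arbitrary proposition rather than an \I{}-proposition, so \cref{prop:admissible} does not apply directly. I would instead rewrite the implication as a further intersection,
\[
  \Psi_{a,a'} = \bigcap_{p : a \lhd_A a'} \MkSet{g \mid (g~a) \lhd_X (f~a')},
\]
indexed over the proposition $a \lhd_A a'$ viewed as a type. This is an internal limit, i.e.\ a dependent product over a subterminal, of the admissible subset from the previous paragraph, so \cref{cor:intersection} applies. When the proposition is false the intersection is the whole space, which is trivially admissible. Equivalently, one checks completeness by noting that locality with respect to $\omega \hookrightarrow \overline{\omega}$ is preserved by exponentiation $(-)^P$ for any type $P$, which is a limit. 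Closure under the least element holds because $\bot$ satisfies each conjunct.

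Combining the two intersections via \cref{cor:intersection} then yields that $\Psi_f$ is admissible. The one point I would double-check is that admissibility is indeed being taken relative to the pointwise $\LL$-algebra on $\sem{A} \to |\sem{X}|$, which is the one used in interpreting $\kw{fix}$.
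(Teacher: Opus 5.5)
Your proof is correct and is essentially the argument the paper relies on (it states this proposition without proof, appealing elsewhere to \cref{prop:complete,cor:intersection}). You write $-\lhd_{\U{(A \to X)}} f$ as the intersection, over $a$, $a'$ and proofs of $a \lhd_A a'$, of the preimages $\mathrm{ev}_a^{-1}(-\lhd_{\U{X}} (f~a'))$, and indexing over the proof of $a \lhd_A a'$ correctly handles the fact that this hypothesis is not an \I{}-proposition. Your remaining worry is harmless: any $\LL$-algebra on a predomain picks out the unique path-least element, which for the pointwise structure on $\sem{A} \to |\sem{X}|$ is $\lambda a.\,\bot_X$, so admissibility does not depend on which algebra structure is meant.
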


\begin{prop}
    If the result predicate $\Phi_G(-, M) \subseteq T\sem{G}$ is admissible for every $M : \tm{\F{G}}$, then so is $R^{\top\top}$ for every $R \subseteq \sem{A} \times \tm{A}$. 
\end{prop}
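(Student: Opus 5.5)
We need to show that if each $\Phi_G(-, M)$ is admissible, then $R^{\top\top}(-, M')$ is admissible for each $M' : \tm{\F{A}}$. Here $R^{\top\top}(-,M')$ is a subset of the domain $T\sem{A}$; it is a domain because $T = \PP\LL$ and free $\TT$-algebras carry lift-algebra structure. The plan is to write this subset as an intersection of simpler subsets and then apply \cref{cor:intersection}. Unfolding the definition gives
\[
R^{\top\top}(-,M') = \bigcap_{(f,f') \in R^\top} \MkSet{M : T\sem{A} \mid \Phi_G(f^\dagger M, \bind{M'}{f'})}.
\]
Each set in this intersection is the preimage of the admissible subset $\Phi_G(-, \bind{M'}{f'}) \subseteq T\sem{G}$ along the map $f^\dagger : T\sem{A} \to T\sem{G}$. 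So it is enough to show that admissible subsets are stable under preimage along $f^\dagger$.

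The preimage has two properties to check.

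\begin{enumerate}
\item \emph{Completeness.} The preimage is the pullback of the complete type $\Phi_G(-, \bind{M'}{f'})$ along $f^\dagger$, with the remaining two corners $T\sem{A}$ and $T\sem{G}$ also complete. By \cref{prop:complete} it is therefore complete.

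\item \emph{Closure under the least element.} We need $\bot_{T\sem{A}}$ to lie in the preimage, that is, $f^\dagger(\bot) = \bot$, so that $\Phi_G(\bot, \ldots)$ follows from admissibility of $\Phi_G$. This is the main obstacle: we have to show that Kleisli extensions $f^\dagger = \alpha \circ Tf$ for $\TT$ are strict.
\end{enumerate}

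For strictness I would argue as follows. The least element of $T\sem{A} = P L\sem{A}$ is the image of the divergent element of $L\sem{A}$ under the singleton map, by the corollary that the singleton of the bottom point is least, together with \cref{prop:bottom}. The proof of \cref{prop:monad} shows that $P$ lifts to pointed predomains with strict maps. Under that lifting, $Tf = P(Lf)$ is strict, because $Lf$ sends the divergent element to the divergent element. The structure map $\alpha$ of the free $\TT$-algebra is $\mu_\TT$, and the distributive-law construction makes $\mu_\TT$ strict.

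A fallback argument for strictness is available if the above is awkward. Every map between predomains is path-monotone, since we can postcompose paths. Any $\TT$-algebra $\alpha$ sends $\eta_\TT(\bot)$ to the least element, because $\alpha \circ \eta_\TT = 1$ and monotonicity preserves leastness when $\bot$ is the least element of $T X$. Composing these facts gives $f^\dagger(\bot) = \alpha(Tf(\eta(\bot))) = \alpha(\eta(f\bot))$ with $f\bot$ least, and hence the result.

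Finally, \cref{cor:intersection} yields that the intersection, namely $R^{\top\top}(-, M')$, is admissible.
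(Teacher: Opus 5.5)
Your main argument is correct and is the standard one the paper leaves implicit; the paper states this proposition without proof. The argument writes $R^{\top\top}(-,M')$ as an intersection of preimages of the admissible sets $\Phi_G(-,\bind{M'}{f'})$ along the strict maps $f^\dagger$, and then \cref{prop:complete,cor:intersection} finish it.

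Only your fallback paragraph is wrong as written. The least element of $T\sem{A}$ is $\singleton{\bot_{L\sem{A}}}$, not $\eta_\TT$ of a least element of $\sem{A}$, which need not exist (e.g.\ for $\NatP$). Also, $f$ need not be strict. The monotonicity trick does work when applied to $\mu_\TT$ instead: $\mu_\TT(\bot_{TT\sem{G}}) \pathle \mu_\TT(\eta_\TT(\bot_{T\sem{G}})) = \bot_{T\sem{G}}$, and the path order is antisymmetric.
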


\begin{cor}
    We have that $- \lhd_{\U{\F{A}}} M$ is admissible for all $M : \tm{\F{A}}$. 
\end{cor}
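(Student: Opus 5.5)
The corollary should follow by combining the preceding proposition on $\top\top$-liftings with a suitable choice of result predicate. By definition, $- \lhd_{\U{\F{A}}} M$ is the section $(\lhd_A)^{\top\top}(-, M)$. So the plan is to apply the proposition ``if $\Phi_G(-,M)$ is admissible for every $M$, then so is $R^{\top\top}$'' with $R = {\lhd_A}$. What remains is to check that the result predicate fixed for the logical relation has admissible sections $\Phi_G(-, M) \subseteq T\sem{G}$.

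I would take the result type $G = \kw{nat}$. For the result predicate I would use the one motivated by observational adequacy, namely $\Phi_G(u, M) \defeq u \pathle \kw{run}~M$. Its section at $M$ is $\MkSet{u : T\NatP \mid u \pathle \kw{run}~M}$. The admissibility check has two parts.

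\begin{enumerate}
    \item \emph{Least element.} The section contains $\bot_{T\NatP}$. Indeed $T\NatP = P L\NatP$ is a domain, since $P$ sends domains to domains, and its least element lies below every element in $\pathle$.
    \item \emph{Completeness.} The section is the pullback of $\partial : (T\NatP)^\I \to (T\NatP)^2$ along $\langle 1, \kw{run}~M \circ ! \rangle$. All three vertices are predomains, hence complete. \Cref{prop:complete} then gives that the pullback is complete, exactly as in the proofs of \cref{prop:admissible} and the pointedness results.
\end{enumerate}

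One gap remains in applying the proposition, because admissibility is phrased in terms of subsets of a domain. I therefore need $T\sem{A}$ to be a domain, and this holds because $\TT = \PP\LL$ and $P$ preserves domains.

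The main obstacle is whether the proposition on $R^{\top\top}$ is stated generally enough. It may be that its proof (intersecting over $f, f'$ the preimages $\MkSet{M \mid \Phi_G(f^\dagger M, \ldots)}$) needs the map $f^\dagger$ to be strict, so that the preimage of an admissible set contains $\bot$. If so, I would verify strictness directly. By construction $f^\dagger = \alpha \circ Tf$. Here $Tf$ is strict because the distributive law of \cref{prop:monad} makes $T$ act by strict maps on lifts. The algebra map $\alpha$ is strict because it is an $\LL$-algebra homomorphism, which I would obtain via the lifting of $\PP$ to pointed predomains used in \cref{prop:monad}. Preimages of complete subsets are complete by \cref{prop:complete}, and intersections of admissible subsets are admissible by \cref{cor:intersection}, which completes the argument.
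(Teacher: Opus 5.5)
Your proposal is correct and takes the paper's route. The corollary is the instance $R = {\lhd_A}$ of the preceding proposition on $\top\top$-liftings, and the admissibility of the result predicate $\Phi_\kw{nat}(-,M') = \MkSet{u \mid u \pathle \kw{run}~M'}$ is what the paper's subsequent proposition supplies; you verify it with the same least-element argument and the same pullback along $\partial$. Your extra check that $f^\dagger$ is strict, so that $\bot$ lies in each preimage, is needed for the proposition on $R^{\top\top}$, which the paper states without proof, and your justification of it is sound.
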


\begin{restatable}{thm}{FLLR}
    For all $M : \tm{A}$, we have $\sem{M} \lhd_{A} M$.  
\end{restatable}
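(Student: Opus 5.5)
We prove the stronger, open-term form of the statement and then specialize to closed terms. For a context $\Gamma = x_1 : A_1, \dots, x_n : A_n$ and a term $\Gamma \vdash M : B$, the claim is: whenever $\gamma_i \lhd_{A_i} a_i$ for every $i$, we have $\sem{M}(\gamma) \lhd_B M[\vec{a}/\vec{x}]$. The proof is by induction on the typing derivation of $M$. Throughout we use the admissibility results just established, namely that $-\lhd_{\U{X}} M$ is admissible for every computation type $X$. This follows by induction on $X$, using the corollary for $\F{A}$ and the preceding proposition for $A \to X$, once the result predicate $\Phi_G$ is assumed admissible in its first argument.

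The value cases are straightforward.
\begin{itemize}
\item[] For \kw{zero} and $\kw{suc}(a)$, the relation at \kw{nat} is $\sem{n'} = n$, so these hold by definition of $\sem{-}$.
\item[] Variables hold by hypothesis.
\end{itemize}

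The computation cases for the type formers are as follows.
\begin{itemize}
\item[] Lambda abstraction holds because $\lhd_{A \to X}$ is defined pointwise and substitution commutes.
\item[] Application holds by the definition of $\lhd_{A \to X}$.
\item[] For $\kw{ret}(v)$, take $R^\top(f,f')$. We must show $\Phi_G(f^\dagger(\eta_\TT \sem{v}), \bind{\ret{v}}{f'})$. Since $f^\dagger \eta_\TT = f$, this reduces to $\Phi_G(f\sem{v}, f'~v)$, which follows from $R^\top$ provided $\Phi_G$ is closed under head expansion $\kw{bind}(\ret{v}, f') \mapsto f'~v$.
\item[] For $\kw{bind}(M, x.N)$ at type $\F{B}$, given $(f,f') \in \lhd_B^\top$, we use associativity of the monadic lift: $f^\dagger(\sem{x.N}^\dagger \sem{M}) = (f^\dagger \circ \sem{x.N})^\dagger \sem{M}$. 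The induction hypothesis for $N$ shows that $(f^\dagger\circ\sem{x.N}, \lambda x.\bind{N}{f'})$ lies in $\lhd_A^\top$, and we then apply the hypothesis for $M$. This again needs $\Phi_G$ to be closed under the syntactic bind-associativity step.
\end{itemize}

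Bind at a general computation type $X$, and $\kw{ifz}$, are handled by first proving a closure lemma: each $-\lhd_X -$ is closed under syntactic head expansion, i.e.\ if $M' \mapsto M''$ deterministically and $d \lhd_X M''$ then $d \lhd_X M'$. This lemma is proved by induction on $X$ and again reduces to a closure property of $\Phi_G$. Bind into a function type is reduced to the $\F{}$ case by exploiting the pointwise $\TT$-algebra structure on $\sem{A}\to|\sem{X}|$. For $\kw{ifz}(a, M, x.N)$, we case on $\sem{a} : \NatP$. Because \cref{prop:nat} gives decidable equality and $a$ is syntactically a numeral, the semantic case split agrees with the syntactic one, and the result follows by head expansion.

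The two genuinely new cases are $\kw{fix}$ and nondeterministic choice.
\begin{itemize}
\item[] For $\kw{fix}(x.M)$ at type $X$, we apply fixed-point induction to the set $\{d \mid d \lhd_X \kw{fix}(x.M)\}$. This set is admissible: completeness comes from the admissibility proposition, and it contains $\bot$ provided $\Phi_G(\bot, -)$ holds. It is closed under $\sem{x.M}$, by the induction hypothesis on $M$ followed by head expansion along $\kw{fix}(x.M) \mapsto M[\kw{fix}(x.M)/x]$.
\item[] For $M \vee N$ we need $\sem{M}\vee\sem{N} \lhd_X M \vee N$. At $\F{A}$, the map $f^\dagger$ preserves $\vee$ by \cref{prop:semilattice-map}, so we need $\Phi_G(u,P)$ and $\Phi_G(v,Q)$ to imply $\Phi_G(u \vee v, \kw{bind}(P\vee Q, f'))$. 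At function types this is pointwise.
\end{itemize}

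I expect the main obstacle to be that the whole argument hinges on closure properties of the result predicate: admissibility, containing $\bot$, closure under deterministic head expansion, and compatibility with $\vee$. None of these are part of the bare $\top\top$ definition. So I would state them as explicit hypotheses on $(G,\Phi_G)$ and prove the theorem uniformly under them. The choice-case compatibility and head expansion should be checked against the later intended instance, $\Phi(u,P) \defeq u \pathle \kw{run}\,P$, where they follow from $\kw{run}$ unfolding by $F_\kw{run}$, from $\kw{next}$ computing $\{P,Q\}$ on $P \vee Q$, and from monotonicity of $\vee$ in the path order. A subtle point to check is that the bind-into-$\F{}$ commuting conversions are handled by $\kw{run}$ stepping structurally through $\kw{next}$.
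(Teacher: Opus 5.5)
Your proposal is correct and is essentially the standard argument the paper appeals to without writing out: induction on typing derivations in open-term form, closure of $\lhd$ under head expansion for the $\beta$-like steps (this, not ``substitution commutes,'' is what the $\lambda$ case actually uses), and fixed-point induction for $\kw{fix}$ via admissibility of $-\lhd_X M$. You are also right that admissibility of $\Phi_G$, which is all the paper states explicitly, is not enough. The remaining obligations are as follows.
\begin{itemize}
\item $\Phi_G$ must be closed under nondeterministic head expansion. This also handles $\vee$ at function types, where $(M\vee N)\,a'$ steps to $M\,a'$ and $N\,a'$ rather than being literally pointwise.
\item $\Phi_G$ must be invariant under the $\kw{bind}$ commuting conversions. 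For $\Phi_{\kw{nat}}$, a fixed-point induction on the equalizer predicate shows that $\kw{run}$ agrees on $\kw{bind}(\kw{bind}(M,x.N),K)$ and $\kw{bind}(M,x.\kw{bind}(N,K))$. The same argument covers $\kw{bind}(M,x.N)\,b$ versus $\kw{bind}(M,x.N\,b)$, which is needed in the bind-into-function-type case.
\end{itemize}
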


The proof of the fundamental lemma is completely standard. We are now in a position to pick a suitable admissible result predicate. 

\begin{prop}
    The result predicate $\Phi_\kw{nat} = \MkSet{(M, M') \mid M \pathle (\kw{run}~M')}$ restricts to an admissible subset $\Phi_\kw{nat}(-, M') \subseteq T\NatP$. 
\end{prop}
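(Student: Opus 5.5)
Fix $M' : \tm{\F{\kw{nat}}}$ and write $r \defeq \kw{run}~M' : T\NatP$. We must show that $\Phi_\kw{nat}(-, M') = \MkSet{M : T\NatP \mid M \pathle r}$ is admissible in the sense used earlier. This means two things: it is complete, and it contains the least element of the domain $T\NatP$. We treat the two conditions separately.

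\emph{Completeness.} We follow the pattern of the proofs of \cref{prop:admissible} and the pointedness of $\mathcal{S}X$. By \cref{ax:sdt}, the path relation on the predomain $T\NatP$ is a partial order, and paths with a given boundary are unique. Hence $M \pathle r$ is equivalent to the existence of a path $M \leadsto r$. So $\Phi_\kw{nat}(-, M')$ is the vertex of the pullback of the boundary map $\partial : (T\NatP)^\I \to (T\NatP)^2$ along $\langle 1, r \circ ! \rangle : T\NatP \to (T\NatP)^2$. All three corners are complete: $T\NatP$ is a predomain, and $(T\NatP)^\I$ and $(T\NatP)^2$ are predomains because $\CCat$ is a reflective subuniverse and hence closed under limits and exponentials. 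Completeness of the vertex then follows from \cref{prop:complete}. Boundary separation matters here: it makes the pullback a genuine subset (a proposition), not a type carrying path data.

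\emph{Least element.} We must show $\bot_{T\NatP} \pathle r$. Since $T\NatP = P(L\NatP)$ is a domain, \cref{prop:is-domain} gives it a least element for the path order. This least element lies below every element, in particular below $r$, which gives the claim.

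One point needs checking: that "the least element" in the definition of admissibility refers to the domain structure on $T\NatP$, which is the $\TT$-algebra (equivalently $\LL$-algebra, via the distributive law) used for fixed points. The least element is determined by the path order alone, by \cref{prop:is-domain} and the fact that the order is partial. Consequently any lift-algebra structure yields the same bottom, so there is no ambiguity. I expect this identification, together with boundary separation in the completeness step, to be the only delicate points. Otherwise the argument mirrors \cref{prop:admissible}, except that membership is defined by an order constraint rather than by an implication of $\I$-propositions.
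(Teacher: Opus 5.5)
Your proof is correct and matches the argument the paper relies on. The paper states this proposition without a written proof, and your argument follows its standard pattern. Completeness comes from exhibiting $\Phi_{\kw{nat}}(-, M')$ as the pullback of $\partial : (T\NatP)^\I \to (T\NatP)^2$ along $\langle 1, (\kw{run}~M') \circ ! \rangle$, just as in \cref{prop:admissible} and the pointedness of the synthetic reflection. The least element is contained because $T\NatP = P(L\NatP)$ is a domain, whose bottom (unique by antisymmetry) lies below $\kw{run}~M'$. Your checks on boundary separation and on which bottom is meant are the right ones.
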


\begin{cor}[Observational adequacy]
    We have that $\sem{M} \pathle (\kw{run}~M)$ for all $M : \tm{\F{\kw{nat}}}$. 
\end{cor}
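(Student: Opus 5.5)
We derive the corollary from the fundamental lemma, instantiated at result type $G = \kw{nat}$ with result predicate $\Phi_\kw{nat}$. The preceding proposition guarantees that this predicate is admissible in its first argument, so the fundamental lemma applies.

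Fix $M : \tm{\F{\kw{nat}}}$. The fundamental lemma gives $\sem{M} \lhd_{\U{\F{\kw{nat}}}} M$, that is, $\sem{M} \lhd_\kw{nat}^{\top\top} M$. Unfolding the $\top\top$-lifting, for every pair $(f, f')$ in $(\lhd_\kw{nat})^\top$ we have $\Phi_\kw{nat}(f^\dagger\sem{M}, \bind{M}{f'})$. This means
\[
  f^\dagger\sem{M} \pathle \kw{run}(\bind{M}{f'}).
\]
We choose the continuation pair so that both sides collapse to the desired terms. Take $f = \eta_\TT : \NatP \to T\NatP$ and $f' = \lambda x.\,\ret{x}$. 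Then $f^\dagger = \mu_\TT \circ T\eta_\TT = 1$, so the left side is $\sem{M}$. The right side is $\kw{run}(\kw{bind}(M, x.\ret{x}))$.

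Two things remain. First, we must check that $(\eta_\TT, \lambda x.\ret{x}) \in (\lhd_\kw{nat})^\top$. Suppose $n \lhd_\kw{nat} a'$, so $\sem{a'} = n$ with $a'$ a closed numeral value. We need $\eta_\TT(n) \pathle \kw{run}(\ret{a'})$. Since $\kw{next}(\ret{a'}) = \varnothing$, the fixed-point equation for $\kw{run}$ gives $\kw{run}(\ret{a'}) = \sem{\ret{a'}} = \eta_\TT\sem{a'} = \eta_\TT(n)$, so the inequality holds with equality. The application $(\lambda x.\ret{x})\,a'$ should be read at the semantic level of $\bind{M'}{f'}$ as substitution, following the convention of the $\top\top$ definition.

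Second, we need the $\eta$-law for $\kw{run}$, namely $\kw{run}(\kw{bind}(M, x.\ret{x})) \pathle \kw{run}(M)$, and we expect this to be the main obstacle. We would prove, by fixed-point induction on $\kw{run}$, the more general statement that $\kw{run}(\kw{bind}(M, x.N)) = (\lambda v.\,\kw{run}(N[v/x]))^\dagger(\kw{run}\,M)$. The difficulty is the side condition $\kw{run} \circ \kw{bind}$: it mentions $\kw{run}$ twice, so it is not directly an admissible predicate on the approximant $f$.

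Our first attempt is to carry it out on the approximants. Define the predicate
\[
  \Psi = \{(g, h) \mid \forall M.\ g(\kw{bind}(M, x.\ret{x})) \pathle h(M)\}
\]
and run a simultaneous induction on the pair $(F_\kw{run}, F_\kw{run})$. This predicate is complete by \cref{prop:complete}, and it contains the least element because both sides of the fixed-point equation start at bottom. The step case proceeds by cases on $\kw{next}(M)$:
\begin{itemize}
\item If $M = \ret{v}$, then $\kw{bind}$ steps to $\{\ret{v}\}$, and the inequality closes using $\kw{run}(\ret{v}) = \sem{\ret{v}}$.
\item If $M$ steps to $\{M_1, \dots, M_k\}$, then $\kw{bind}(M, -)$ steps to $\{\kw{bind}(M_i, -)\}_i$, and the inequality follows from the induction hypothesis combined with monotonicity of $\vee$ in the path order.
\end{itemize}
Monotonicity of $\vee$ holds because $\vee$ is a map of predomains and maps preserve paths.

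If the inequality instead turns out to need equality rather than just $\pathle$ at the base case, we fall back to observational soundness (\cref{thm:observational-soundness}) to supply the needed bound on $\sem{\ret{v}}$. Combining the two inequalities then gives $\sem{M} \pathle \kw{run}(M)$.
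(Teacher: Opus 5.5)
Your route matches the paper's: the corollary comes from the fundamental lemma instantiated at the result predicate $\Phi_{\kw{nat}}$. The paper leaves implicit exactly the step you isolate. The identity continuation $(\eta_\TT, \lambda x.\ret{x})$ yields $\kw{bind}(M, x.\ret{x})$ on the syntactic side, not $M$, so an $\eta$-law $\kw{run}(\kw{bind}(M, x.\ret{x})) \pathle \kw{run}(M)$ is needed. Your check that this pair lies in $(\lhd_{\kw{nat}})^\top$ is fine under either reading of $f'\,a'$, since $\kw{run}((\lambda x.\ret{x})\,a')$ unfolds in one step to $\kw{run}(\ret{a'}) = \sem{\ret{a'}} = \eta_\TT(\sem{a'})$.

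The gap is in your fixed-point induction for the $\eta$-law: $\Psi$ as written is not preserved by $F_{\kw{run}} \times F_{\kw{run}}$. In the case $M = \ret{v}$ you must show $F_{\kw{run}}\,g\,(\kw{bind}(\ret{v}, x.\ret{x})) = g(\ret{v}) \pathle \sem{\ret{v}} = F_{\kw{run}}\,h\,(\ret{v})$ for an arbitrary approximant $g$. The equation $\kw{run}(\ret{v}) = \sem{\ret{v}}$ says nothing about $g$. Moreover, $\Psi$ constrains $g$ only on terms of the form $\kw{bind}(-, x.\ret{x})$, so nothing bounds $g(\ret{v})$. Falling back on \cref{thm:observational-soundness} does not help, because that theorem bounds $\kw{run}$, not the approximants.

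The repair is to strengthen the invariant to $\Psi \cap \{(g,h) \mid \forall N.\, g(N) \pathle \sem{N}\}$. The added conjunct is the admissible predicate from the proof of observational soundness. It contains the bottom pair and is preserved by $F_{\kw{run}}$ by \cref{prop:sound-one-step}. With it the $\ret{v}$ case closes, and your stepping case goes through as you describe, via monotonicity of $\vee$. This also works under the reading $\kw{bind}(M, x.(\lambda y.\ret{y})\,x)$, where the base case becomes $g((\lambda y.\ret{y})\,v) \pathle \sem{(\lambda y.\ret{y})\,v} = \sem{\ret{v}}$.

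You can also drop the simultaneous induction: fix $h = \kw{run}$, induct on $g$ alone, and use $\kw{run} = F_{\kw{run}}\,\kw{run}$ in the stepping case. Admissibility is unaffected, since sets of the form $\{g \mid g(N) \pathle c\}$ are complete and contain $\bot$.
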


Together with \cref{thm:observational-soundness} we see that the observational and denotational semantics coincide. 

\begin{thm}\label{thm:coherence}
    We have that $(\kw{run}~M) = \sem{M}$ for all $M : \tm{\F{\kw{nat}}}$. 
\end{thm}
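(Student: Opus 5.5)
The plan is to obtain the equation by antisymmetry of the path order on the predomain $T\NatP$. By \cref{ax:sdt} the path space of any predomain determines a partial order $\pathle$. For each $M : \tm{\F{\kw{nat}}}$ we then only need the two inequalities $(\kw{run}~M) \pathle \sem{M}$ and $\sem{M} \pathle (\kw{run}~M)$. The first is \cref{thm:observational-soundness}, and the second is the observational adequacy corollary. Antisymmetry gives $(\kw{run}~M) = \sem{M}$. Before applying it, I will check that $T\NatP = P(L\NatP)$ really is a predomain. This holds because $\NatP$ is a predomain, $\CCat$ is closed under lifting, and the powerdomain construction lands in $\kw{SL}(\CCat)$.

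All the substantive work sits in the adequacy direction, which I take as given from the preceding development. For orientation, here is how it goes. Fix the result type $G = \kw{nat}$ and the result predicate $\Phi_\kw{nat}$, which is admissible in its first argument by the preceding proposition. The fundamental lemma then gives $\sem{M} \lhd^{\top\top}_{\kw{nat}} M$. Instantiate this with $f = \eta_\TT$ and $f' = x.\kw{ret}(x)$.

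To use this instance, two facts are needed. First, $(\eta_\TT, \kw{ret})$ lies in ${\lhd_\kw{nat}}^\top$. This means that for $n$ with $\sem{n'} = n$ we have $\eta_\TT(n) \pathle \kw{run}(\kw{ret}(n'))$. It follows from one unfolding of $\kw{run}$, since $\kw{next}(\kw{ret}(n')) = \varnothing$ and so $\kw{run}(\kw{ret}(n')) = \sem{\kw{ret}(n')} = \eta_\TT(n)$. Second, $\eta_\TT^\dagger = \mathrm{id}$, and $\kw{run}(\kw{bind}(M, x.\kw{ret}(x)))$ relates to $\kw{run}(M)$. For this last point I would either use the monad law up to the observational semantics, or choose the result predicate so that this step is direct.

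The main obstacle, if any, is this last bookkeeping step: showing that binding with $\kw{ret}$ does not change $\kw{run}$. I would handle it by fixed-point induction, proving that $\kw{run}(\kw{bind}(M, x.\kw{ret}(x)))$ and $\kw{run}(M)$ agree. Each step of $\kw{next}$ on the bind mirrors a step of $\kw{next}$ on $M$, plus one extra administrative step once $M$ reaches $\kw{ret}(v)$. Since this is already packaged in the stated corollary, the theorem itself is immediate from antisymmetry.
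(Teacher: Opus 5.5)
Your proposal is correct and matches the paper. The theorem follows by antisymmetry of the path order $\pathle$ on the predomain $T\NatP$, combining observational soundness (\cref{thm:observational-soundness}) with the observational adequacy corollary. Your sketch of the adequacy direction is not needed for this statement; if you keep it, note that only the one-sided inequality $\kw{run}(\kw{bind}(M, x.\kw{ret}(x))) \pathle \kw{run}(M)$ is required there, not equality.
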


\section{Computational adequacy}\label{sec:adequacy}

In this section we relate the may and must evaluation relations defined in \cref{sec:evaluation} constituting the operational semantics of \pcfnd{} to the three denotational semantics obtained by instantiating the model construction in \cref{sec:den-sem} at the lower, upper, and convex powerdomains. By virtue of \cref{thm:coherence} we may instead work in terms of the observational semantics, which enables us to obtain the desired results by means of fixed-point induction. In the following we will also make use of constructions that make sense for any predomain $X$ with \I{}-equality, which we may instantiate to \NatP{} by virtue of \cref{prop:nat}. 

\subsection{Adequacy for the lower powerdomain construction}

For the lower powerdomain, we wish to show correspondence with respect to may evaluation, in the sense that every possible denotational result is possible operationally. In general, the membership relation in the various powerdomains $\mathcal{P}X$ can only be valued in \I{}-propositions when $X$ has \I{}-equality, which obstructs the use of fixed-point induction as a proof principle. However in the context of operational semantics we are only interested in querying the membership of \emph{defined} numbers, and the characteristic map of this restricted membership relation does factor through \I{}. 

\begin{defn}
    Let $?{=}  : L X \to \I^X$ be the \emph{is-defined-and-equal} operation defined by sending $(u, x)$ to $u{\downarrow} \mathbin{\angle} (u = x)$. 
\end{defn}

Because \I{} is closed under finite joins (by \cref{ax:finite-join}) we may view $\I{}$ as a lower semilattice as $(i \pathle i \vee j) \iff ((i = \top) \to (i \vee j = \top))$ always holds. This allows us to extend the operation $?{=} : L X \to \I^X$ to the lower powerdomain. 

\begin{defn}\label[defn]{def:def-member}
    The \emph{defined-membership relation} $\in? : X \to \LP{}(L X) \to \I{}$ is given by the unique lower semilattice homomorphism $\LP{}(L X) \to \I^X$ extending $?{=}$, viewing $\I^X$ as a product lower semilattice. 
\end{defn}

\begin{restatable}[Soundness]{thm}{ThmLowerSoundness}\label[thm]{thm:soundness-lower}
    For all $M : \tm{\F{\kw{nat}}}$ and $n : \tm{\kw{nat}}$, we have that $M \Downarrow n$ implies $\sem{n} \mathrel{\in?} \sem{M}$. 
\end{restatable}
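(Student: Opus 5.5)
We argue by induction on the length of the transition sequence $M \mapsto^* \ret{v}$ witnessing $M \Downarrow n$, where $\sem{v} = \sem{n}$. The idea is to track how $\sem{M}$ relates to the denotations of its one-step reducts, using that $\in?$ is a lower semilattice homomorphism, hence preserves joins, and that in $\I$ the join is the disjunction (\cref{ax:finite-join}).

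\emph{Base case.} If $M = \ret{v}$, then $\sem{M} = \eta_\TT(\sem{v}) = \singleton{\eta_\LL\sem{v}}$. The homomorphism $\in?$ extends $?{=}$ along singletons, so $\sem{n} \in? \sem{M}$ equals $(\eta_\LL\sem{v})\,?{=}\,\sem{n}$. This is $\top \mathbin{\angle} (\sem{v} = \sem{n})$, which is $\top$ since $\sem{v} = \sem{n}$.

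\emph{Inductive step.} Suppose $M \mapsto M' \mapsto^* \ret{v}$ with $\sem{n} \in? \sem{M'} = \top$ by induction.
\begin{enumerate}
\item By \cref{prop:nextS-mapsto}, $M \nextS{} S$ for the finite nonempty set $S = \MkSet{M_1, \dots, M_k}$ of one-step reducts, and $M' \in S$.
\item By \cref{prop:sound-one-step}, $\sem{M} = \sem{M_1} \vee \dots \vee \sem{M_k}$.
\item Since $\in?$ is a lower semilattice homomorphism into the pointwise structure on $\I^X$, we get $\sem{n} \in? \sem{M} = \bigvee_i (\sem{n} \in? \sem{M_i})$.
\item By \cref{ax:finite-join}, this finite join in $\I$ is $\top$ as soon as one disjunct is $\top$. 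The disjunct for $M'$ is $\top$ by the induction hypothesis, so $\sem{n} \in? \sem{M} = \top$.
\end{enumerate}
No fixed-point induction is needed, because may evaluation supplies a finite derivation.

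\emph{Main obstacle.} The delicate point is the bookkeeping of $\in?$ on $\sem{M} \in T\NatP = \LP(L\NatP)$. We must confirm that $\eta_\TT$ is the singleton of $\eta_\LL$ under the composite monad from the distributive law, and that $\in?$ preserves $\vee$. The latter holds by definition, since $\in?$ is the unique homomorphism extending $?{=}$.

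Alternatively, one could go through \cref{thm:coherence} and unfold $\kw{run}$ along the transition sequence. The direct route via \cref{prop:sound-one-step} avoids this.
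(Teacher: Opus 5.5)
Your argument is correct. Its skeleton matches the paper's: induction on the transition sequence, the base case by the definition of $\mathrel{\in?}$ on singletons, and the inductive step by splitting over the one-step reducts and using that $\mathrel{\in?}$ preserves $\vee$.

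The difference is in how you justify the splitting. The paper invokes \cref{thm:coherence} to replace $\sem{M}$ by $\kw{run}(M)$. It then reads off $\kw{run}(M) = \kw{run}(M_1) \vee \dots \vee \kw{run}(M_k)$ from the fixed-point equation for $\kw{run}$, implicitly using coherence again to turn the induction hypothesis about $\sem{M'}$ into one about $\kw{run}(M')$. You instead stay on the denotational side and use \cref{prop:sound-one-step} to get $\sem{M} = \sem{M_1} \vee \dots \vee \sem{M_k}$.

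Your route is more economical. It depends only on the compositional computation of the denotational semantics and \cref{prop:semilattice-map}. It does not use the observational semantics or the logical-relations proof of observational adequacy that goes into \cref{thm:coherence}. That theorem is itself proved partly via \cref{prop:sound-one-step}, so the paper's route uses your lemma plus considerably more.

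The paper's route buys uniformity. The adequacy direction has to be a fixed-point induction on $\kw{run}$, so phrasing soundness through $\kw{run}$ as well lets both directions share the same unfolding step.
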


\begin{proof}
    Assume $M \Downarrow n$ for some $n : \tm{\kw{nat}}$. We proceed by induction on the derivation of $M \Downarrow n$. 
    \begin{enumerate}
        \item Suppose that $M = \ret{n}$ for some $n : \tm{\kw{nat}}$. We need to show $\sem{n} \mathrel{\in?} \sem{\ret{n}} = \singleton{\eta_\LL(\sem{n})}$, which follows by the definition of defined membership. 
        \item Otherwise we have $M \mapsto M'$ and $M' \Downarrow n$. Note this implies that $\kw{next}(M) = \MkSet{M_1, \dots, M_k}$ with $M' = M_i$ for some $i \in \MkSet{1, \dots, k}$. By \cref{thm:coherence} it suffices to show $\sem{n} \mathrel{\in?} \kw{run}(M)$. Unfolding the observational semantics, we have $\kw{run}(M) = \kw{run}(M_1) \vee \dots \vee \kw{run}(M_k)$. By the universal property of defined membership we have $\sem{n} \mathrel{\in?} (\kw{run}(M_1) \vee \dots \vee \kw{run}(M_k))$ holds if and only if there exists $M_j$ such that $\sem{n} \mathrel{\in?} \kw{run}(M_j)$, and so the result holds by taking $M_j = M'$ and the inductive hypothesis. 
    \end{enumerate}
\end{proof}
    
We may also prove the expected converse statement. 

\begin{restatable}[Adequacy]{thm}{ThmLowerAdequacy}\label[thm]{thm:adequacy-lower}
    For all $M : \tm{\F{\kw{nat}}}$ and $u : \NatP$, $u \mathrel{\in?} \sem{M}$ implies there exists some $n : \tm{\kw{nat}}$ such that $M \Downarrow n$ and $\sem{n} = u$. 
\end{restatable}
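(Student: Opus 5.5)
By \cref{thm:coherence} we have $\sem{M} = \kw{run}~M$. It therefore suffices to prove the statement with $\kw{run}~M$ in place of $\sem{M}$. Since $\kw{run}$ is defined as the fixed point of $F_\kw{run}$, I would argue by fixed-point induction, using the predicate
\[
  \Phi = \MkSet{f : \tm{\F{\kw{nat}}} \to T\NatP \mid \forall M, u.~ (u \mathrel{\in?} f~M) \to \exists n.~ M \Downarrow n \land \sem{n} = u}.
\]
Viewed as a subset of the domain $\tm{\F{\kw{nat}}} \to T\NatP$, this is an intersection over $M : \tm{\F{\kw{nat}}}$ and $u : \NatP$ of preimages of sets of the form $\MkSet{S \mid (u \mathrel{\in?} S) \to \psi}$.

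The first step is admissibility.
\begin{enumerate}
    \item The conclusion $\psi_{M,u} \defeq \exists n.~ M \Downarrow n \land \sem{n} = u$ must be an \I{}-proposition. By the last proposition of \cref{sec:op-sem}, $M \Downarrow n$ is an \I{}-proposition under \cref{ax:cdj}. By \cref{prop:nat}, $\NatP$ has decidable, hence \I{}-valued, equality, so $\sem{n} = u$ is an \I{}-proposition. The existential over the countable decidable set of numerals $n$ is then an \I{}-proposition by \cref{ax:cdj}, using that \I{} is a dominance to form the conjunctions.
    \item The map $u \mathrel{\in?} (-) : T\NatP \to \I$ must send the least element to $\bot$. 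The least element of $\LP(L\NatP)$ is the singleton $\singleton{(\bot,\kw{abort})}$, and $?{=}$ applied to an undefined element is $\bot$.
\end{enumerate}
With these in place, \cref{prop:admissible} makes each set $\MkSet{S \mid (u \mathrel{\in?} S) \to \psi_{M,u}}$ admissible. Precomposition with evaluation at $M$ preserves admissibility, and \cref{cor:intersection} then shows that $\Phi$ is admissible. I expect checking that $\psi_{M,u}$ really lies in \I{} to be the main technical obstacle. It is the one place where \cref{ax:cdj} and \cref{prop:nat} are genuinely needed, and the existential over numerals has to be rephrased as a countable decidable join, for instance by enumerating pairs of a transition sequence and a numeral.

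The second step is to show $F_\kw{run}\,\Phi \subseteq \Phi$. Fix $f \in \Phi$, a term $M$, and $u$ with $u \mathrel{\in?} F_\kw{run}~f~M$, and case on $\kw{next}(M)$.
\begin{enumerate}
    \item If $\kw{next}(M) = \varnothing$, then $M = \ret{n}$ for some numeral $n$, since $M$ has type $\F{\kw{nat}}$ and is closed. In this case $F_\kw{run}~f~M = \sem{M} = \singleton{\eta_\LL\sem{n}}$. Then $u \mathrel{\in?} \sem{M}$ unfolds to $u = \sem{n}$, and we take this $n$ with the trivial evaluation $M \Downarrow n$.
    \item If $\kw{next}(M) = \MkSet{M_1,\dots,M_k}$, then $\in?$ is a semilattice homomorphism into $\I^X$ whose join is computed pointwise by finite joins, so
    \[
      u \mathrel{\in?} (f~M_1 \vee \dots \vee f~M_k) = \bigvee_i (u \mathrel{\in?} f~M_i).
    \]
    By \cref{ax:finite-join}, if this equals $\top$ then some disjunct $u \mathrel{\in?} f~M_i$ equals $\top$. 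The hypothesis $f \in \Phi$ then gives a numeral $n$ with $M_i \Downarrow n$ and $\sem{n} = u$. By \cref{prop:nextS-mapsto} we have $M \mapsto M_i$, hence $M \Downarrow n$.
\end{enumerate}

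Fixed-point induction now yields $\kw{run} \in \Phi$. Instantiating this at $M$ and transporting along $\sem{M} = \kw{run}~M$ gives the theorem.
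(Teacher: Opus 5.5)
Your proposal is correct and follows essentially the same route as the paper. Both arguments reduce to $\kw{run}$ via \cref{thm:coherence}, run fixed-point induction on the same predicate $\Phi$, get admissibility from \cref{prop:admissible,cor:intersection} using that $u \mathrel{\in?} (-)$ is $\bot$ on the least element, and use the same two-case analysis of $\kw{next}(M)$. The one point to tighten is that \cref{ax:cdj} only gives countable joins of \emph{decidable} propositions, not of \I{}-propositions. So the fact that $\psi_{M,u}$ lies in \I{} should be justified through the enumeration of pairs $(n', k)$ you already suggest, with the decidable body $(M \mapsto^{(k)} \kw{ret}(\overline{n'})) \land (\eta_\synRef(n') = u)$. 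This is exactly the paper's $\phi_{M,u}$, and it should replace the appeal to an existential over numerals of \I{}-propositions.
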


\begin{proof}
    By \cref{thm:coherence} it suffices to prove the statement by replacing $\sem{M}$ with $\kw{run}(M)$. We proceed by fixed-point induction. By \cref{prop:nat} we have that $\Sigma_{n:\tm{\kw{nat}}}.~M \Downarrow n \land \sem{n} = u$ is equivalent to the proposition $\phi_{M,u} \defeq \exists n', k : \Nat. (M \mapsto^{(k)} \kw{ret}(\overline{n'})) \land (\eta_\synRef(n') = u)$, which is an \I{}-proposition by \cref{ax:cdj}. Consider the following subset $\Phi \subseteq (\tm{\F{\kw{nat}}} \to \LP(L \NatP))$. 

    \iblock{
        \mrow{\Phi = \MkSet{f \mid \forall M:\tm{\F{\kw{nat}}}, u : \NatP.~ u \mathrel{\in?} f(M) \to \phi_{M,u}}}
    }
    
    We observe that $\Phi = \bigcap_{M, u} \MkSet{f \mid u \mathrel{\in?} f(M) \to \phi_{M,u}}$ and that the \I{}-subset $\MkSet{f \mid u \mathrel{\in?} f(M)}$ \emph{does not} contain the least element $\lambda -.~\singleton{\bot}$. Thus we have that $\Phi$ is admissible by \cref{cor:intersection,prop:admissible}. It remains to show that $F_\kw{run} \Phi \subseteq \Phi$. We case on $\kw{next}(M)$. 
    \begin{enumerate}
        \item We have that $\kw{next}(M) = \varnothing$ and so $M = \kw{ret}(n)$ for some $n : \tm{\kw{nat}}$. Then $F_\kw{run}~f~M = \sem{M} = \sem{\ret{n}} = \singleton{\eta_\LL(\sem{n})}$. Suppose $u \mathrel{\in?} \singleton{\eta_\LL(\sem{n})}$. Then by definition of defined membership we have that $u = \sem{n}$, and so the result holds since $\ret{n} \Downarrow n$. 
        \item We have that $\kw{next}(M) = \MkSet{M_1, \dots, M_k}$. Then $F_\kw{run}~f~M = (f~M_1) \vee \dots \vee (f~M_k)$. Suppose $u \mathrel{\in?}  (f~M_1) \vee \dots \vee (f~M_k)$. By definition this implies that $u \mathrel{\in?}  (f~M_i)$ for some $M_i$. By the induction hypothesis we have that $M_i \Downarrow n$ for some $n : \tm{\kw{nat}}$ such that $\sem{n} = u$. Thus the result holds since $M \mapsto M_i$ by definition of $\kw{next}(M)$. 
    \end{enumerate}
\end{proof}

\subsection{Adequacy for the upper powerdomain construction}

In contrast to the lower powerdomain, the upper powerdomain does not admit a membership relation factoring through \I{} even if the base domain has \I{}-equality, essentially because $(\I{}, \vee)$ is not an upper semilattice. Instead we aim to show that a computation at base type $M : \UP(L \NatP)$ is either divergent or terminates to an element of the upper semilattice $\sem{M} : \UP(\NatP)$. We aim to show that if $\sem{M}$ is defined then $M$ must-evaluates to the set of numbers denoted by \sem{M}. First we observe that the dominance $\I{}$ has the structure of an upper semilattice with the semilattice operation given by logical conjunction $i \land j \defeq i \mathbin{\angle} (\lambda -.~ j)$. 

\begin{defn}
    The \emph{total termination support} of an element $S : \UP(L X)$ is given by the unique upper semilattice homomorphism $\Supp{} : \UP(L X) \to \I$ extending the ordinary termination support $\supp{} : L X \to \I$, viewing $(\I{}, \land)$ as an upper semilattice. 
\end{defn}

Next we show that the upper powerdomain on $X$ embeds into the upper powerdomain on the lift of $X$ whose image contains only totally defined elements. Given a predicate $\phi : A \to \Omega$, we write $\MkSet{a : A \mid \phi(a)} \defeq \Sigma_{a : A}. \phi(a)$ for the type of elements of $A$ satisfying $\phi$. 

\begin{restatable}{prop}{PropTotalToPartial}
    There is a map $\iota : \UP(X) \to \MkSet{S : \UP(L X) \mid S\Supp{}}$ such that $\pi_1\iota = \UP\eta_\LL$. 
\end{restatable}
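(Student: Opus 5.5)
The plan is to define $\iota$ by corestricting the functorial action $\UP\eta_\LL : \UP(X) \to \UP(L X)$ to the subset of totally defined elements. The equation $\pi_1\iota = \UP\eta_\LL$ then holds by construction. It therefore suffices to show that $(\UP\eta_\LL(S))\Supp{} = \top$ for every $S : \UP(X)$, so that the map lands in $\MkSet{S : \UP(L X) \mid S\Supp{}}$.

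To prove this, I would first identify the composite $\Supp{} \circ \UP\eta_\LL : \UP(X) \to \I$. Since $\UP\eta_\LL$ is an upper semilattice homomorphism (the action of the left adjoint on morphisms), the composite is again an upper semilattice homomorphism into $(\I,\land)$. On singletons it sends $\singleton{x}$ to $\supp{}(\eta_\LL x) = \top$. The constant map $\lambda -.~\top$ is also an upper semilattice homomorphism, because $\top \land \top = \top$, and it agrees with the composite on singletons. By the uniqueness part of the universal property of $\UP(X)$ as the free upper semilattice on $X$, the two homomorphisms coincide. Hence $\Supp{}(\UP\eta_\LL(S)) = \top$ for all $S$.

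The main point to check is that the universal property really applies here. This requires that $(\I,\land)$ is an upper semilattice in the sense of the definition analogous to \cref{def:lowersemilattice}: a predomain (which holds by \cref{ax:sdt}, since $\CCat$ contains \I{}) that is a semilattice under $\land$ with $i \land j \pathle i$. The inequality follows from Phoa's principle, because $(i\land j = \top)\to(i=\top)$. One should also note that the constant-$\top$ map and $\Supp{}\circ\UP\eta_\LL$ are both homomorphisms of the underlying semilattices, which is all that a lower or upper semilattice homomorphism requires.

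If one prefers to avoid invoking uniqueness, an alternative is a direct induction. Since $\UP(X) = \synRef(G X)$, one can use $\synRef$-induction (\cref{prop:S-induction}) on the predicate $\MkSet{S \mid \Supp{}(\UP\eta_\LL S) = \top}$. This predicate is a predomain, being a pullback of $\MkSet{\top}\hookrightarrow\I$. One then reduces to the quotient-inductive type $G X$ and checks the generators: singletons give $\top$, joins give $\top\land\top=\top$, and the path constructor $l$ and the boundary-separation constructor $s$ hold trivially because the predicate is a proposition. I expect the uniqueness argument to suffice, with this induction as a backup.
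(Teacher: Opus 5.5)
Your main argument is correct and is the paper's proof: $\Supp{}\circ\UP\eta_\LL$ is an upper semilattice homomorphism into $(\I{},\land)$ extending $\supp{}\circ\eta_\LL = \top$. By freeness it is therefore the constant $\top$, and $\iota$ is the corestriction $S \mapsto ((\UP\eta_\LL)S, \triv)$.

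One correction to your backup induction: $l$ is a point constructor, not an equation, so its case does not follow merely from the predicate being a proposition. You need that the induced path in $\I{}$ has boundary $(\top,\top)$ by the induction hypotheses, and is therefore constant by Phoa's principle.
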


\begin{proof}
    Consider the image of the total termination support map on $\UP(X)$. 

    \[
    \begin{tikzpicture}[diagram]
        \SpliceDiagramSquare<sq/>{
            width = 3cm, 
            nw = X, 
            sw = \UP(X), 
            ne = L X,
            se = \UP(L X),
            north = \eta_\LL,
            west = {\eta_\UP{}},
            south = \UP\eta_\LL,
            east = \eta_\UP,
        }
        \node[right = of sq/se, xshift = 0.5cm](S) {$\I$};
        \draw[->] (sq/ne) to node[above] {$\supp$} (S);
        \draw[->] (sq/se) to node[below] {$\Supp$} (S);
    \end{tikzpicture}
    \]

    Because the composition $\Supp \circ \UP\eta_\LL$ is an upper semilattice homomorphism extending $\supp \eta_\LL = \top$, we have that it must be the constant homomorphism determined by $\top$. Thus we may define the desired map $\UP(X) \to \MkSet{S : \UP(L X) \mid S\Supp{}}$ by sending $S$ to $((\UP\eta_\LL)S, \triv)$. 
\end{proof}

\begin{restatable}{prop}{PropUpperSemilattice}\label[prop]{prop:upper-semilattice}
Given an upper semilattice $Y$, we have that $\Sigma_{S : \UP(L X)}.~Y^{S\Supp{}}$ is an upper semilattice.
\end{restatable}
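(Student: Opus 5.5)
The plan is to equip $E \defeq \Sigma_{S : \UP(L X)}.~Y^{S\Supp{}}$ with a componentwise semilattice structure, exhibit $E$ as a limit of predomains, and then build the path $(S,f)\vee(T,g) \leadsto (S,f)$ componentwise. For the operation, I would note that $\Supp{}$ is an upper semilattice homomorphism into $(\I{},\land)$, so $(S \vee T)\Supp{} = S\Supp{} \land T\Supp{}$. Hence any witness that $(S\vee T)\Supp{} = \top$ yields witnesses for both $S\Supp{}$ and $T\Supp{}$. I then set
\[
(S,f) \vee_E (T,g) \defeq \bigl(S \vee T,\ \lambda h.~ f(h_1) \vee_Y g(h_2)\bigr),
\]
where $h_1, h_2$ are the projected witnesses.

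The semilattice laws in the first component are those of $\UP(L X)$. In the second component they reduce to the laws of $Y$. Because $S\Supp{}=\top$ is a proposition, equality of the transported dependent functions amounts to pointwise equality in $Y$.

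To show $E$ is a predomain, I would present it as the pullback of $\Supp{} : \UP(L X) \to \I$ along $\pi_1 : L Y = \Sigma_{i:\I}.~Y^{(i=\top)} \to \I$. Since \CCat{} is closed under lifting (\cref{ax:sdt}) and reflective, hence closed under limits, $E$ is a predomain. Paths in a pullback are pairs of paths with equal images in $\I$. Because $\I$ is boundary separated by Phoa's principle, that compatibility condition is automatic once the boundaries agree. So a path in $E$ is just a path in $\UP(L X)$ together with a path in $L Y$ with matching endpoints.

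It then remains to show $(S\vee T, f\vee g) \pathle (S,f)$. The first component is given by the upper semilattice inequality $S \vee T \pathle S$ in $\UP(L X)$. For the second component I need a path in $L Y$ from $(S\Supp{}\land T\Supp{},\, f\vee g)$ to $(S\Supp{}, f)$. I expect this to be the main obstacle, because $S\Supp{}$ being $\top$ at an interior point does not produce the witness for $T\Supp{}$ needed to form $f \vee g$.

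My first attempt is to prove a general lemma that, for a predomain $Y$, $(\phi,a) \pathle (\psi,b)$ in $L Y$ whenever $\phi \to \psi$ and, under $\phi$, $a \pathle b|_\phi$ in $Y$. I would prove it using the \sierp{} cone universal property from \cref{ax:sdt}. Concretely, I would form the element of $L(Y^\I)$ given by $(\phi,\ \lambda h.~ \text{the path } a \leadsto b)$, which is defined only under $\phi$. I would then combine it, via Phoa's principle, with the path $\phi \leadsto \psi$ in $\I$, defining the map $\I \to L Y$ by $\I$-valued dominance sums $\phi \mathbin{\angle} \dots$ glued with $\psi$. The required path $a \leadsto b$ in $Y$ exists since $Y$ is an upper semilattice, giving $f \vee g \pathle f$ under both witnesses.

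If the direct gluing fails, I would instead use \cref{prop:epi} and \synRef{}-induction-style arguments. The aim there is to reduce the lemma to the cases $\phi = \bot$, where the least element of $L Y$ from \cref{prop:bottom} gives the path, and $\phi = \top$, where the path is $\eta_\LL$ applied to $a \leadsto b$.
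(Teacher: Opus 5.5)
Your semilattice operation, the pullback presentation of $E=\Sigma_{S:\UP(L X)}.~Y^{S\Supp{}}$, and your decomposition of paths in $E$ are all correct. Since $\I{}$ is boundary separated by Phoa's principle, a path in $E$ is just a path $\gamma : S\vee T\leadsto S$ in $\UP(LX)$ paired with a path in $LY$ with matching boundary, so the reduction to your lemma on $LY$ is sound. But that lemma carries all the content, and neither route you sketch establishes it.

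\begin{itemize}
\item A path is a map $\I\to LY$. It is governed by the partial-map-classifier property: an $\I$-open support varying with $i$, plus a value on it. The Sierpi\'{n}ski cone property instead describes maps \emph{out of} $LY$ or $L(Y^\I)$. To apply it to $(\phi,\pi)$ you would still have to say where each partial path goes, which is the original problem.
\item A support of the form $\phi\mathbin{\angle}\dots$ always implies $\phi$, so it cannot equal $\psi$ at $i=1$.
\item The fallback also fails. You cannot case on whether $\phi=\bot$ or $\phi=\top$. Moreover, \cref{prop:epi} only lets you check that two already-constructed maps out of $LY$ agree; it does not produce a path.
\end{itemize}

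The missing idea is transitivity of $\pathle$ on predomains (\cref{ax:sdt}), applied in the predomain $LY$. Under $h:\phi$ you have $\pi(h): a(h)\leadsto b|_\phi(h)$.
\begin{itemize}
\item The map $i\mapsto(\phi,\lambda h.~\pi(h)(i))$ is a path $(\phi,a)\leadsto(\phi,b|_\phi)$ with the support held fixed.
\item The map $i\mapsto(\chi(i),b|_{\chi(i)})$ is a path $(\phi,b|_\phi)\leadsto(\psi,b)$. Here $\chi$ is the path $\phi\leadsto\psi$ in $\I$ given by Phoa's principle, and $\chi(i)\to\psi$ holds because maps $\I\to\I$ are monotone for implication.
\end{itemize}
These are exactly the $LY$-components of the paper's two paths. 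The paper's $\alpha$ keeps $S\vee T$ and moves the value along $l_Y(p'(u),q'(u),-)$; its $\beta$ moves along $\gamma$ while restricting $p$. The paper simply composes these in $E$ itself, so it never needs your componentwise decomposition.

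If you prefer a single explicit path, \cref{ax:finite-join} lets you glue. Take $i\mapsto(\phi\vee(i\land\psi),e_i)$, where $e_i$ is $\pi(h)(i)$ on $\phi$ and $b$ on $i\land\psi$. This is well defined because on the overlap $i=\top$ forces $\pi(h)(i)=b$. Its boundary is $(\phi,a)$ and $(\phi\vee\psi,b)=(\psi,b)$. That avoids transitivity at the cost of finite joins, but the join and the overlap check are precisely what your sketch leaves out.
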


\begin{proof}
    First we observe that $\Sigma_{S : \UP(L X)}.~Y^{S\Supp{}}$ is a predomain as it is the vertex of the following pullback diagram of predomains. 
    
    \[
    \DiagramSquare{
        width = 3cm,
        nw = \Sigma_{S : \UP(L X)}.~Y^{S\Supp{}},
        sw = \UP(L X),
        ne = L Y,
        se = \I,
        east = \supp,
        south = \Supp,
        nw/style = pullback,
    }
    \]
    
    For the semilattice operation we set $(S, p) \vee (T, q)$ to be $(S \vee T, p \vee q)$ where we define $p \vee q : (S \vee T) \Supp \to Y$ by restricting $p : S\Supp \to Y$ and $q : T \Supp \to Y$ as follows. 
    
        \iblock{
            \mrow{p' = p \circ (S\Supp \land T\Supp \to S\Supp)}
            \mrow{q' = q \circ (S\Supp \land T\Supp \to T\Supp)}
        }

    Here $e \circ (\psi \to \phi): \psi \to Y$ is the composition of a partial element $e : \phi \to Y$ along an implication $\psi \to \phi$. Since $\Supp$ is an upper semilattice homomorphism we have that $(S \vee T)\Supp = S\Supp \land T\Supp$, we may define $p \vee q$ as $\lambda u:(S \vee T)\Supp.~p'(u) \vee q'(u)$. This operation clearly satisfies the semilattice algebra laws. It remains to show that there is a path 

    \iblock{
        \mrow{(S, p) \vee (T, q) = (S \vee T, p \vee q) \leadsto (S, p)}
    }

    for all $(S, p), (T, q) : \Sigma_{S : \UP(L X)}.~Y^{S\Supp{}}$. We may construct this path by constructing the following paths. 
    
    \iblock{
        \mrow{\alpha : (S \vee T, p \vee q) \leadsto (S \vee T, p \circ (S\Supp \land T\Supp \to S\Supp))}
        \mrow{\beta : (S \vee T, p \circ (S\Supp \land T\Supp \to S\Supp)) \leadsto (S, p)}
    }

    \begin{enumerate}
        \item To define $\alpha : \I \to \Sigma_{S : \UP(L X)}.~Y^{S\Supp{}}$ we send $i : \I$ to $(S \vee T, \lambda - : (S \vee T)\Supp.~?)$, where we are required to construct an element $? : Y$ under the assumption that $(S \vee T)\Supp = S\Supp \land T\Supp$ holds. Thus supposing $u : S\Supp \land T\Supp$, we may define the requisite element to be $l_Y(p'(u), q'(u), i)$, where $l_Y(y, y', -) : y \vee y' \pathle y$ is the (necessarily unique) path asserted by the assumption that $Y$ is an upper semilattice. We first check that the lower boundary is $(S \vee T, p \vee q)$.
        
        \iblock{
            \mrow{(\pi_1\alpha)(0) = (S \vee T) }
            \mrow{}
            \mhang{(\pi_2\alpha)(0) = \lambda u:(S \vee T)\Supp.~l_Y(p'(u), q'(u), 0)}{
                \mrow{= \lambda u:(S \vee T)\Supp.~p'(u) \vee q'(u)}
                \mrow{= p \vee q \qquad \text{(by definition)}}
            }
        }

        Next we check that the upper boundary is $(S \vee T, p')$.

        \iblock{
            \mrow{(\pi_1\alpha)(1) = (S \vee T)}
            \mrow{}
            \mhang{(\pi_2\alpha)(1) = \lambda u:(S \vee T)\Supp.~l_Y(p'(u), q'(u), 1)}{
                \mrow{= \lambda u:(S \vee T)\Supp.~p'(u)}
                \mrow{= p'}
            }
        }

        \item Note that by construction of the free upper semilattice we have a unique path $\gamma : S \vee T \leadsto S$. Because the path order and implication order coincide on $\I$, we have that $(\gamma~i)\Supp \to (\gamma~1)\Supp$ and thus $(\gamma~i)\Supp \to (\gamma~1)\Supp = S\Supp$ for all $i : \I$. We define $\beta$ by sending $i : \I$ to the following element of $\Sigma_{S : \UP(L X)}.~Y^{S\Supp{}}$. 
        
        \iblock{
            \mrow{(\gamma~i, \lambda u: (\gamma~i)\Supp.~(p \circ ((\gamma~i)\Supp \to S\Supp))(u))}
        }

        We compute the lower boundary to be $(S \vee T, p')$. 

        \iblock{
            \mrow{(\pi_1\beta)(0) = \gamma~0 = (S\vee T)}
            \mrow{}
            \mhang{(\pi_2\beta)(0) = \lambda u: (\gamma~0)\Supp.~(p \circ ((\gamma~0)\Supp \to S\Supp))(u)}{
                \mrow{= \lambda u: (S\vee T)\Supp.~(p \circ ((S\vee T)\Supp \to S\Supp))(u)}
                \mrow{= p'}
            }
        }

        Lastly, the upper boundary is $(S, p)$ as required.

        \iblock{
            \mrow{(\pi_1\beta)(1) = \gamma~1 = S}
            \mrow{}
            \mhang{(\pi_2\beta)(1) = \lambda u: (\gamma~1)\Supp.~(p \circ ((\gamma~1)\Supp \to S\Supp))(u)}{
                \mrow{= \lambda u: S\Supp.~(p \circ (S\Supp \to S\Supp))(u)}
                \mrow{= p}
            }
        }
    \end{enumerate}
\end{proof}

\begin{restatable}{prop}{PropHomomorphism}\label[prop]{prop:upper-homomorphism}
The embedding $\iota : \UP(X) \to \MkSet{S : \UP(L X) \mid S\Supp{}}$ is a semilattice homomorphism. 
\end{restatable}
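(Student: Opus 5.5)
The plan is to reduce the claim to the fact that $\UP\eta_\LL$ is a semilattice homomorphism, together with proof irrelevance for the second component of $\iota$.

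First I will fix the semilattice structure on $\MkSet{S : \UP(L X) \mid S\Supp{}}$: it is the one restricted from $\UP(L X)$. To see that the subset is closed under $\vee$, I use that $\Supp$ is by definition an upper semilattice homomorphism into $(\I, \land)$. Hence $(S \vee T)\Supp = S\Supp \land T\Supp$, and so $S\Supp$ and $T\Supp$ together give $(S \vee T)\Supp$. I therefore set $(S, p) \vee (T, q) \defeq (S \vee T, (p, q))$. The semilattice laws hold because they hold in $\UP(L X)$ and the second components are proofs of propositions. Since the subset is cut out of a predomain by an \I{}-proposition, it is also a predomain. It is an upper semilattice, with paths inherited from $\UP(L X)$ via the same argument as in \cref{prop:upper-semilattice}, although only the semilattice structure is needed here.

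Next I will compute both sides of the homomorphism equations. Recall $\iota(S) = ((\UP\eta_\LL)S, \triv)$. Then
\[
\iota(S \vee T) = ((\UP\eta_\LL)(S \vee T), \triv) \quad\text{and}\quad \iota S \vee \iota T = ((\UP\eta_\LL)S \vee (\UP\eta_\LL)T, \triv).
\]
The morphism part of the left adjoint $\CCat \to \kw{USL}$ lands in upper semilattice homomorphisms. Concretely, $\UP\eta_\LL$ is the unique homomorphic extension of $\eta_\UP \circ \eta_\LL$, so it preserves $\vee$. Hence the first components agree. The second components are inhabitants of the proposition $((\UP\eta_\LL)(S\vee T))\Supp$, so they are equal by proof irrelevance. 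Since equality in a $\Sigma$-type over a proposition is determined by the first component, the two sides are equal. If homomorphism is also meant to include preservation of an empty join, I note that $\UP$ is built from the free (nonempty) semilattice, so there is no unit to preserve.

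The only step requiring any care is the first one: checking that the target of $\iota$ genuinely carries the restricted operation, i.e. that it is closed under $\vee$. This rests entirely on $\Supp$ being a homomorphism into $(\I,\land)$ and on conjunction in \I{} computing the conjunction of the corresponding propositions, which holds because $i \land j \defeq i \mathbin{\angle} (\lambda -.~ j)$ is a dependent sum. Everything else is functoriality of $\UP$ as a left adjoint into $\kw{USL}$.
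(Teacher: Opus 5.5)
Your proposal is correct and is essentially the paper's proof. Both equip $\MkSet{S : \UP(L X) \mid S\Supp{}}$ with the join restricted from $\UP(L X)$ and compute $\iota(S \vee T) = ((\UP\eta_\LL)(S \vee T), \triv) = ((\UP\eta_\LL)S \vee (\UP\eta_\LL)T, \triv) = \iota(S) \vee \iota(T)$, using that $\UP\eta_\LL$ is a homomorphism and that second components are proofs of a proposition. The only difference is that you derive closure under $\vee$ directly from $(S \vee T)\Supp = S\Supp \land T\Supp$, whereas the paper cites \cref{prop:upper-semilattice}, whose statement about $\Sigma_{S}.~Y^{S\Supp}$ does not literally cover this subset (though its argument adapts), so your justification is if anything the more precise one.
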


\begin{proof}
    By \cref{prop:upper-semilattice} we have that \MkSet{S : \UP(L X) \mid S\Supp{}} is an upper semilattice with semilattice operation defined as $(S, \triv : S\Supp) \vee (T, \triv : T\Supp) = (S \vee T, \triv : (S\vee T)\Supp)$. Thus the result follows as $\iota(S \vee T) = ((\UP\eta_\LL)(S \vee T), \triv) = ((\UP\eta_\LL)S \vee (\UP\eta_\LL)T, \triv) = \iota(S) \vee \iota(T)$. 
\end{proof}

\begin{restatable}{thm}{ThmUpperCharacterization}
We have an isomorphism $\iota : \UP(X) \to \MkSet{S : \UP(L X) \mid S\Supp{}}$. 
\end{restatable}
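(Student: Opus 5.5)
We construct an explicit inverse $\rho : \MkSet{S : \UP(L X) \mid S\Supp{}} \to \UP(X)$ out of the more general family of \cref{prop:upper-semilattice}. Take $Y \defeq \UP(X)$, an upper semilattice, and define $\overline{\rho} : \UP(L X) \to \Sigma_{S : \UP(L X)}.~\UP(X)^{S\Supp{}}$ as the unique upper semilattice homomorphism (via the universal property of $\UP$ as the left adjoint $\CCat \to \kw{USL}$) extending the map $L X \to \Sigma_{S : \UP(L X)}.~\UP(X)^{S\Supp{}}$ that sends $u$ to $(\eta_\UP(u), \lambda p : u{\downarrow}.~\eta_\UP(u(p)))$. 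This is well typed because $\eta_\UP(u)\Supp = \supp(u) = u{\downarrow}$, since $\Supp$ extends $\supp$.

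\textbf{Step 1.} Show $\pi_1 \circ \overline{\rho} = 1_{\UP(L X)}$. Both sides are upper semilattice homomorphisms; for the left side, the semilattice operation of \cref{prop:upper-semilattice} acts as $\vee$ on first components. They agree on $\eta_\UP$, so they coincide by uniqueness of extensions. Consequently, for $(S, \triv)$ with $S\Supp = \top$, the second component of $\overline{\rho}(S)$ is a map $S\Supp \to \UP(X)$ that we may evaluate at $\triv$. Set $\rho(S, \triv) \defeq (\pi_2 \overline{\rho}(S))(\triv)$.

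\textbf{Step 2.} Show $\rho \circ \iota = 1$. The map $S \mapsto \rho(\iota S)$ is a homomorphism $\UP(X) \to \UP(X)$. It factors as $\UP(X) \xrightarrow{\UP\eta_\LL} \UP(L X) \xrightarrow{\overline{\rho}} \Sigma \cdots$ followed by evaluating the second component at the (total) proof. Evaluation at a proof is a homomorphism on the total part, because there $p \vee q$ is defined pointwise. The composite sends $\eta_\UP(x)$ to $\eta_\UP(x)$, so it is the identity by uniqueness.

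\textbf{Step 3.} Show $\iota \circ \rho = 1$, i.e.\ $(\UP\eta_\LL)(\rho(S,\triv)) = S$ whenever $S\Supp$ holds. This is the step I expect to be the main obstacle, because $\MkSet{S \mid S\Supp{}}$ is not generated as a free algebra, so we cannot argue by a universal property directly. I would first strengthen the claim to a statement about all $S : \UP(L X)$: for every $p : S\Supp$, $(\UP\eta_\LL)((\pi_2\overline{\rho}(S))(p)) = S$. This is a predicate of the form ``$\phi \to$ equation''. Its equalizer part is a predomain, since $\CCat$ is closed under limits, and exponentiating by an $\I$-proposition also stays in $\CCat$. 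So it is a predomain. We then prove it by induction over the construction of $\UP(L X) = \synRef(G(L X))$: $\synRef$-induction (\cref{prop:S-induction}) reduces the claim to elements of $G(L X)$, where we apply the quotient-inductive principle of the type $G$.
\begin{enumerate}
\item \emph{Generators $\eta(u)$.} Given $p : u{\downarrow}$, we have $u = \eta_\LL(u(p))$, so the claim is immediate.
\item \emph{Joins.} Use $(S \vee T)\Supp = S\Supp \land T\Supp$ and that both $\UP\eta_\LL$ and the pointwise join preserve $\vee$.
\item \emph{Path constructor $l$ and boundary separation.} The claim is a proposition, so these constructors impose no further obligation.
\end{enumerate}
Together with Step 2 this yields the isomorphism. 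Alternatively, since $\pi_1\iota = \UP\eta_\LL$ and $\pi_1$ is mono, Step 3 amounts to showing that $\iota$ is surjective onto supported elements.
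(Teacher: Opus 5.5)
Your Steps 1 and 2 are correct. The overall strategy is close to the paper's: both use \cref{prop:upper-semilattice} to turn freeness of $\UP(L X)$ into a section of $\pi_1$. The paper does this for an arbitrary upper semilattice $Y$ and concludes that $\MkSet{S : \UP(L X) \mid S\Supp{}}$ is itself free on $X$; you fix $Y = \UP(X)$ and check the round trips.

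The gap is in item 3 of Step 3. In the quotient inductive type $G$, the constructor $l : G X \times G X \times \I \to G X$ is a \emph{point} constructor, not an equation; the equations are $\kw{idem}$, $\kw{comm}$, $\kw{assoc}$, $l_0$, $l_1$ and $s$. So the induction principle obliges you to prove your predicate at $\eta_\synRef(l(S, T, i))$ for every $i : \I$, given it at $S$ and $T$. Propositionality of the motive discharges only the equational constructors. Since an $i : \I$ need not be $0$ or $1$, knowing the predicate at the endpoints $\eta_\synRef(S \vee T)$ and $\eta_\synRef(S)$ does not by itself give it along the path. Compare the paper's proof that $H X$ is pointed, where the $l$ case needs an explicit two-dimensional path.

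The simplest repair is to drop the induction. Once you strengthen the claim to all $S : \UP(L X)$, a universal property \emph{does} apply, contrary to your worry.
\begin{enumerate}
\item By \cref{prop:upper-semilattice} with $Y = \UP(L X)$, postcomposition with $\UP\eta_\LL$ gives a homomorphism $\kappa : \Sigma_{S : \UP(L X)}.~\UP(X)^{S\Supp} \to \Sigma_{S : \UP(L X)}.~\UP(L X)^{S\Supp}$, because $\UP\eta_\LL$ preserves $\vee$.
\item The map $\delta(S) \defeq (S, \lambda -.~S)$ is also a homomorphism into the same upper semilattice.
\item Since $\eta_\LL(u(p)) = u$ for $p : u{\downarrow}$, both $\kappa \circ \overline{\rho}$ and $\delta$ send $\eta_\UP(u)$ to $(\eta_\UP(u), \lambda -.~\eta_\UP(u))$.
\item By freeness of $\UP(L X)$ they coincide, and comparing second components is exactly your strengthened claim.
\end{enumerate}

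If you prefer to keep the induction, present the predicate as the equalizer of the two maps $S \mapsto (S\Supp, \lambda p.~\UP\eta_\LL(\pi_2\overline{\rho}(S)(p)))$ and $S \mapsto (S\Supp, \lambda -.~S)$ from $\UP(L X)$ to $L(\UP(L X))$. This makes your locality claim precise, using closure of $\CCat$ under lifting and limits. It also settles the $l$ case: composing both maps with the path $\eta_\synRef \circ l(S, T, -)$ gives two paths with equal boundary in the boundary-separated predomain $L(\UP(L X))$. Hence the two paths agree at every $i$.
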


\begin{proof}
    By \cref{prop:upper-homomorphism} it suffices to show that $\MkSet{S : \UP(L X) \mid S\Supp{}}$ is the free upper semilattice on $X$, which is to show that it has the unique extension property along $X \xhookrightarrow{\iota \eta_\UP} \MkSet{S : \UP(L X) \mid S\Supp{}}$ for any upper semilattice $Y$ as follows. 

    \[\Local[f][->]{X \cong \MkSet{u : L X \mid u\supp}}{Y}{\MkSet{S : \UP(L X) \mid S\Supp{}}}[][\overline{f}][3cm]\]

    By exponential transpose it suffices to show that every $f : (u : L X) \to Y^{(\eta_\UP(u))\Supp}$ extends uniquely along $\eta_{\UP} : L X \to \UP(L X)$ to a homomorphism $\overline{f} : (S : \UP(L X)) \to Y^{S\Supp}$. By \cref{prop:upper-semilattice} we have that every $g : L X \to \Sigma_{S : \UP(L X)}.~Y^{S\Supp}$ extends uniquely to a homomorphism $\overline{g} : \UP(L X) \to \Sigma_{S : \UP(L X)}.~Y^{S\Supp}$ as follows. 

    \[
    \Local[g][->]{L X}{\Sigma_{S : \UP(L X)}.~Y^{S\Supp}}{\UP(L X)}[\eta_{\UP}][\overline{g}][3cm]
    \]

    Note that $\overline{g}$ is a section of $\pi_1 : \Sigma_{S : \UP(L X)}.~Y^{S\Supp} \to \UP(L X)$ when $g$ is a local section, and so every $f : (u : L X) \to Y^{(\eta_\UP(u))\Supp}$ has a unique homomorphism extension $\overline{f} : (S : \UP(L X)) \to Y^{S\Supp}$ along $\eta_{\UP}$. 
\end{proof}

Whenever the total termination support holds for some $S : \UP(L X)$ we will write $S : \UP(X)$ for the element determined by the isomorphism $\iota$. 

\begin{restatable}[Soundness]{thm}{ThmSoundnessUpper}\label[thm]{thm:soundness-upper}
    If $M \must{} \MkSet{n_1, \dots, n_k}$ then $\sem{M}\Supp{}$ holds and $\sem{M} = \singleton{\sem{n_1}} \vee \dots \vee \singleton{\sem{n_k}}$.
\end{restatable}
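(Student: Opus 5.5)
We argue by induction on the derivation of $M \must{} \MkSet{n_1, \dots, n_k}$, mirroring the proof of \cref{thm:soundness-lower}, and working throughout in $\UP(L\NatP)$ with the equation $\sem{M} = \singleton{\sem{n_1}} \vee \dots \vee \singleton{\sem{n_k}}$ read via $\iota$. Concretely, we prove the stronger statement $\sem{M} = (\UP\eta_\LL)(\singleton{\sem{n_1}} \vee \dots \vee \singleton{\sem{n_k}})$ in $\UP(L\NatP)$, where the inner join is taken in $\UP(\NatP)$. The claim $\sem{M}\Supp{}$ then follows at once: by \cref{prop:upper-homomorphism} and the construction of $\iota$, the total termination support of anything in the image of $\UP\eta_\LL$ is $\top$, as shown in the proof that $\Supp \circ \UP\eta_\LL$ is constantly $\top$.

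\emph{Base case.} Here $M = \ret{n}$ and the resulting set is $\MkSet{n}$. The denotation is $\sem{\ret{n}} = \singleton{\eta_\LL(\sem{n})}$, and this equals $(\UP\eta_\LL)\singleton{\sem{n}}$ by naturality of the unit $\eta_\UP$. Hence $\Supp$ holds because $\supp(\eta_\LL \sem{n}) = \top$.

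\emph{Inductive case.} We have $M \nextS{} \MkSet{M_1, \dots, M_k}$ with $k \ge 1$, and $M_i \must{} S_i$ for each $i$, with result set $\bigcup_i S_i$. There are two ways to handle the step.
\begin{enumerate}
    \item Use \cref{prop:sound-one-step} directly to get $\sem{M} = \sem{M_1} \vee \dots \vee \sem{M_k}$.
    \item Use \cref{thm:coherence} together with the unfolding $\kw{run}(M) = \kw{run}(M_1) \vee \dots \vee \kw{run}(M_k)$.
\end{enumerate}
Either way, the inductive hypotheses give $\sem{M_i} = (\UP\eta_\LL)(\bigvee_{n \in S_i} \singleton{\sem{n}})$. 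Since $\UP\eta_\LL$ is a semilattice homomorphism, the join of these values equals $(\UP\eta_\LL)(\bigvee_i \bigvee_{n \in S_i} \singleton{\sem{n}})$. Idempotence, commutativity and associativity then collapse this to the join over the elements of $\bigcup_i S_i$. Finally, $\Supp$ is an upper semilattice homomorphism into $(\I,\land)$, so $\sem{M}\Supp = \bigwedge_i \sem{M_i}\Supp = \top$; this gives a second, direct check of totality.

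\emph{Main obstacle.} The delicate point is bookkeeping between the two readings of the equation, in $\UP(\NatP)$ versus $\UP(L\NatP)$. We must make sure that the join used in the statement, after transport along $\iota$, agrees with the join in $\UP(L\NatP)$. This is exactly the content of \cref{prop:upper-homomorphism}, where the semilattice structure on $\MkSet{S \mid S\Supp}$ is inherited from $\UP(L X)$, so we invoke it explicitly. A secondary check concerns the union in the must-evaluation rule: $K(\tm{\kw{nat}})$ is a free semilattice, so $\bigcup_i S_i$ denotes its semilattice join. The map $S \mapsto \bigvee_{n \in S}\singleton{\sem{n}}$ is the semilattice homomorphism $K(\tm{\kw{nat}}) \to \UP(\NatP)$ induced by $n \mapsto \singleton{\sem{n}}$, so it sends unions to joins. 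This is well defined independently of the chosen enumeration by the universal property of $K$ (\cref{prop:K-induction}).
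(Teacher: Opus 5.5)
Your proof is correct and follows the paper's argument: induction on the must-evaluation derivation, with the base case given by naturality of the unit, and the inductive step splitting $\sem{M}$ into a join and pushing it through the homomorphism $\UP\eta_\LL = \pi_1\iota$ (the paper unfolds $\kw{run}$ via \cref{thm:coherence} and cites \cref{prop:upper-homomorphism}). Your first option, invoking \cref{prop:sound-one-step} directly, is slightly more economical because it avoids the logical-relations argument that \cref{thm:coherence} depends on.
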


\begin{proof}
    By induction on the derivation of must evaluation. 
    \begin{enumerate}
        \item Suppose that $\ret{n} \must{} \MkSet{n}$. Then we have that $\sem{\ret{n}} = \singleton{\eta_\LL(\sem{n})} = \iota(\singleton{\sem{n}})$, as required.
        \item Otherwise we have $M \nextS{} \MkSet{M_1, \dots, M_k}$ and $M_i \must{} \MkSet{n^i_1, \dots, n^i_{j_i}}$ for all $i \in \MkSet{1, \dots, k}$. By induction we have that $\sem{M_i}\Supp$ holds and $\sem{M_i} = \iota(\singleton{\sem{n^i_1}} \vee \dots \vee \singleton{\sem{n^i_{j_i}}})$ for all $i$. We want to show that $\sem{M} = \iota(\bigvee_{i} \MkSet{\singleton{\sem{n^i_1}}, \dots, \singleton{\sem{n^i_{j_i}}}})$. By \cref{thm:coherence} it suffices to show that $\kw{run}(M) = \iota(\bigvee_{i} \MkSet{\singleton{\sem{n^i_1}}, \dots, \singleton{\sem{n^i_{j_i}}}})$. But this follows since 
        
        \iblock{
            \mhang{\kw{run}(M) = \kw{run}(M_1) \vee \dots \vee \kw{run}(M_k)}{
                \mrow{= \iota(\singleton{\sem{n^1_1}} \vee \dots \vee \singleton{\sem{n^1_{j_1}}}) \vee \dots \vee \iota(\singleton{\sem{n^k_1}} \vee \dots \vee \singleton{\sem{n^k_{j_k}}})}
                \mrow{= \iota(\singleton{\sem{n^1_1}} \vee \dots \vee \singleton{\sem{n^1_{j_1}}} \vee \dots \vee \singleton{\sem{n^k_1}} \vee \dots \vee \singleton{\sem{n^k_{j_k}}}) \; \text{(By \cref{prop:upper-homomorphism})}}
                \mrow{= \iota(\bigvee_{i} \MkSet{\singleton{\sem{n^i_1}}, \dots, \singleton{\sem{n^i_{j_i}}}}).}
            }
        }
        
    \end{enumerate}
\end{proof}

Similar to the case for may evaluation, to prove adequacy for the upper powerdomain we need a stratified version of must evaluation that is decidable. 

\begin{defn}
    Define the \emph{stratified must evaluation relation} $- \Downarrow^{\forall_-} -$ on $\tm{\F{\kw{nat}}} \times \Nat \times K(\tm{\kw{nat}})$ as the least relation satisfying the following.
    \begin{enumerate}
        \item We have $\ret{n} \Downarrow^{\forall_0} \MkSet{n}$.
        \item If $M \nextS{} \MkSet{M_1, \dots, M_m}$ and $M_i \Downarrow^{\forall_k} S_i$ for all $i \in \MkSet{1, \dots, m}$, then $M \Downarrow^{\forall_{k+1}} \bigcup_i S_i$.
    \end{enumerate}
\end{defn}

\begin{prop}\label[prop]{prop:must-stratified}
    The proposition $M \Downarrow^{\forall_k} S$ is decidable. Moreover, we have $\exists k :\Nat.~M \Downarrow^{\forall_k} S$ if and only if $M \Downarrow^{\forall} S$.
\end{prop}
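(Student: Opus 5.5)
Both claims are proved by induction on the natural number $k$, using two ingredients: the determinism and decidability of the one-step relation ${\nextS{}}$ (\cref{prop:nextS-mapsto}, together with the function $\kw{next}$ computing it), and the decidable equality of finite sets of closed syntactic numerals $K(\tm{\kw{nat}})$.

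\emph{Decidability.} I prove $\forall M, S.~\mathrm{Dec}(M \Downarrow^{\forall_k} S)$ by induction on $k$.

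For $k = 0$, only the first rule applies. So $M \Downarrow^{\forall_0} S$ holds iff $M = \ret{n}$ for some numeral $n$ with $S = \MkSet{n}$. This is decidable: inspecting the head constructor of $M$ is decidable, and so is equality in $K(\tm{\kw{nat}})$.

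For $k+1$, only the second rule applies. Compute $\kw{next}(M)$. If it is $\varnothing$, the relation fails, since by determinism $M$ cannot also step to a nonempty set. Otherwise $\kw{next}(M) = \MkSet{M_1,\dots,M_m}$, and $M \Downarrow^{\forall_{k+1}} S$ holds iff there exist $S_1,\dots,S_m$ with $M_i \Downarrow^{\forall_k} S_i$ and $\bigcup_i S_i = S$.

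The point is that each $S_i$ must satisfy $S_i \subseteq S$. The quantification over $S_i$ therefore ranges over the finitely many subsets of the finite set $S$, and it can be decided by exhaustive search using the induction hypothesis for each $M_i$. This bounded-search step is the main obstacle. It is where one must observe that the existential over $K(\tm{\kw{nat}})$ is effectively finite, which relies on $K$ of a type with decidable equality having decidable equality and finitely enumerable sub-elements.

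\emph{Equivalence with must evaluation.} For the direction $M \Downarrow^{\forall_k} S \Rightarrow M \must{} S$, I induct on the derivation of the stratified relation. Each stratified rule is an instance of the corresponding must-evaluation rule with the index forgotten.

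For the converse, I induct on the derivation of $M \must{} S$ (the relation is inductively defined as a least relation).
\begin{enumerate}
\item The base case $\ret{n} \must{} \MkSet{n}$ gives index $0$.
\item In the step case, the hypotheses give $M_i \Downarrow^{\forall_{k_i}} S_i$ for each $i$. The indices $k_i$ may differ, so I need a weakening fact: if $M \Downarrow^{\forall_k} S$ and $k \le k'$, then $M \Downarrow^{\forall_{k'}} S$.
\end{enumerate}

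This weakening fact is false as stated. At index $0$ only $\ret{n}$ qualifies, and $\ret{n}$ has no derivation at any positive index because its only transition set is $\varnothing$.

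I therefore abandon weakening and instead observe that the index is determined by $M$. Define the claim: if $M \Downarrow^{\forall_k} S$ and $M \Downarrow^{\forall_{k'}} S'$ then $k = k'$ and $S = S'$. This follows by induction using determinism of $\nextS{}$. It does not, however, align the differing $k_i$ of distinct branches $M_i$.

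So I expect the definition to be read with the index meaning ``at most $k$ steps''. This amounts to allowing $\ret{n} \Downarrow^{\forall_k} \MkSet{n}$ for all $k$, which is presumably the intended reading. Under that reading I prove weakening by induction on $k$ and take $k = 1 + \max_i k_i$. With this adjusted base case the decidability argument is unchanged.
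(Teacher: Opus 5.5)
The paper states this proposition without proof, so there is no argument of its own to compare against. Your proof is the natural one: induction on $k$ for decidability, and induction on derivations for the equivalence. It is correct for the repaired relation in which $\kw{ret}(n) \Downarrow^{\forall_k} \MkSet{n}$ holds at every $k$. There, weakening in $k$ and the index $1 + \max_i k_i$ handle the backward direction.

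You can state your diagnosis more firmly than ``presumably''. With the base case only at index $0$, decidability and the forward direction still hold, but the backward direction is false, not merely out of reach of your method. Take $M = \kw{ret}(\kw{zero}) \vee \kw{fix}(x.\kw{ret}(\kw{zero}))$. Then $M \nextS{} \MkSet{\kw{ret}(\kw{zero}), \kw{fix}(x.\kw{ret}(\kw{zero}))}$, and so $M \must{} \MkSet{\kw{zero}}$. By your index-determinacy observation, however, the first branch has stratified derivations only at index $0$ and the second only at index $1$. Since $M$ is not a $\kw{ret}$, no rule yields $M \Downarrow^{\forall_k} S$ for any $k$ and $S$.

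So a repair like yours is genuinely required. Another option is to let the step rule conclude at index $1 + \max_i k_i$ from premises at indices $k_i$. The use of the proposition in \cref{thm:adequacy-upper} only needs the repaired version.

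One small correction to your last sentence: the decidability argument is not literally unchanged under the new base case. At index $k+1$ the base rule is now also available. So when $\kw{next}(M) = \varnothing$ you must check whether $M = \kw{ret}(n)$ with $S = \MkSet{n}$, rather than answering no.
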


\begin{restatable}[Adequacy]{thm}{ThmAdequacyUpper}\label[thm]{thm:adequacy-upper}
    If $\sem{M}\Supp{}$ holds then $M \must{} S$ for a unique finite set $S = \MkSet{n_1, \dots, n_k}$.
\end{restatable}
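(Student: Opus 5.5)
Uniqueness is immediate from \cref{prop:must-prop}, so the work lies in existence. Mirroring the proof of \cref{thm:adequacy-lower}, I will use \cref{thm:coherence} to replace $\sem{M}$ by $\kw{run}(M)$ and then argue by fixed-point induction on $F_\kw{run}$. For each $M$ I form the proposition $\psi_M \defeq \exists k : \Nat.~\exists S.~M \Downarrow^{\forall_k} S$. By \cref{prop:must-stratified} this is equivalent to $\exists S.~M \must S$. Each disjunct is decidable, and the finite sets of numerals form a countable decidable family, so \cref{ax:cdj} makes $\psi_M$ an \I{}-proposition.

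I then take
\[
\Phi = \bigcap_{M} \MkSet{f : \tm{\F{\kw{nat}}} \to \UP(L\NatP) \mid (f~M)\Supp \to \psi_M}.
\]
To see that $\Phi$ is admissible I combine \cref{cor:intersection} with \cref{prop:admissible}, taking $\phi = \lambda f.~(f~M)\Supp$. The one thing to check is that $\phi$ sends the least element $\lambda -.\singleton{\bot}$ to $\bot$. This holds because $\Supp$ extends $\supp$ and $\supp(\bot) = \bot$, so $\Supp(\singleton{\bot}) = \bot$. I also need $\Supp$ to be a strict map of domains, which is automatic here since it is just composition with a map into \I{}.

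Next I show $F_\kw{run}\Phi \subseteq \Phi$ by casing on $\kw{next}(M)$.
\begin{enumerate}
\item If $\kw{next}(M) = \varnothing$, then $M = \ret{n}$ and $\ret{n} \Downarrow^{\forall_0} \MkSet{n}$ holds directly.
\item If $\kw{next}(M) = \MkSet{M_1,\dots,M_m}$, then $F_\kw{run}~f~M = f~M_1 \vee \dots \vee f~M_m$. Because $\Supp$ is an upper semilattice homomorphism into $(\I,\land)$, we get $(F_\kw{run}~f~M)\Supp = \bigwedge_i (f~M_i)\Supp$. Each conjunct yields some $k_i$ and $S_i$ with $M_i \Downarrow^{\forall_{k_i}} S_i$ by the hypothesis $f \in \Phi$.
\end{enumerate}

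In the second case I then take $k = \max_i k_i$ and conclude $M \Downarrow^{\forall_{k+1}} \bigcup_i S_i$. This step, aligning the stratification levels, is the main obstacle, because the stratified relation as defined requires all children to sit at the same level. I would first try to avoid it by not carrying the index $k$ inside $\psi_M$, using the unstratified $M \must S$ for the induction step and invoking \cref{prop:must-stratified} only to establish that $\psi_M$ is an \I{}-proposition. In that form the step is just the second closure rule of $\must$, applied after choosing the finitely many witnesses $S_i$. Since the family is finite, this finite choice needs only finite conjunction of existentials.

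Once the induction step is settled, fixed-point induction gives $\kw{run} \in \Phi$. Hence $\sem{M}\Supp = \kw{run}(M)\Supp$ implies $\psi_M$, and therefore $M \must S$ for some $S$. Writing $S = \MkSet{n_1,\dots,n_k}$ and applying uniqueness from \cref{prop:must-prop} completes the proof. By \cref{thm:soundness-upper}, this $S$ moreover agrees with the element $\sem{M} : \UP(\NatP)$.
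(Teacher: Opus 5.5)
Your final argument is correct and is essentially the paper's proof. It uses coherence, the same admissible intersection $\Phi$ justified by $\Supp{}(\singleton{\bot}) = \bot$, a $\Phi$ stated with the unstratified $\must{}$ where \cref{prop:must-stratified} serves only to make it an \I{}-proposition via \cref{ax:cdj}, and the same case split on $\kw{next}(M)$ with $\Supp{}$ turning $\vee$ into $\land$. Your worry about aligning levels is well founded: as literally defined, $\ret{n}$ stratified-evaluates only at level $0$, so $\ret{\kw{zero}} \vee \kw{bind}(\ret{\kw{zero}}, x.\ret{x})$ must-evaluates to $\MkSet{\kw{zero}}$ but has no stratified derivation; the levels must be made cumulative (e.g.\ $\ret{n} \Downarrow^{\forall_k} \MkSet{n}$ for every $k$) for \cref{prop:must-stratified} to hold, and with that fix your $\max_i k_i$ route also works.
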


\begin{proof}
    By \cref{thm:coherence} it suffices to show that $\kw{run}(M)\Supp{}$ implies there is a finite set of numbers $S$ such that $M \must{} S$. We proceed by fixed-point induction. By \cref{prop:must-prop,prop:must-stratified} we have that $\Sigma_{S : K(\tm{\kw{nat}})}. M\must{} S$ is equivalent to the proposition $\exists S : K(\tm{\kw{nat}}). \exists k :\Nat. M \Downarrow^{\forall_k} S$; observing that $K(\tm{\kw{nat}}) \cong \Nat$ we have that it is also an \I{}-proposition by \cref{ax:cdj}. Consider the following subset $\Phi \subseteq (\tm{\F{\kw{nat}}} \to \UP(L \NatP))$. 

    \iblock{
        \mrow{\Phi = \MkSet{f \mid \forall M:\tm{\F{\kw{nat}}}.~f(M)\Supp \to \Sigma_{S : K(\tm{\kw{nat}})}.~M\must{} S}}
    }

    Observe that $\Phi$ may be expressed as an intersection as follows
    
    \iblock{
        \mrow{\bigcap_{M : \tm{\F{\kw{nat}}}}\MkSet{f \mid f(M)\Supp \to \Sigma_{S : K(\tm{\kw{nat}})}.~M\must{} S}}
     } 
     
     and that $\MkSet{f \mid f(M)\Supp}$ is an \I{}-subset of $\tm{\F{\kw{nat}}} \to \UP(L \NatP)$ not containing the least element. Then by \cref{cor:intersection,prop:admissible} we have that $\Phi$ is admissible. It remains to show that $F_\kw{run} \Phi \subseteq \Phi$. Let $(F_\kw{run}~f~M)\Supp$. We case on $\kw{next}(M)$. 
    \begin{enumerate}
        \item We have that $\kw{next}(M) = \varnothing$ so that $M = \ret{n}$ for some $n : \tm{\kw{nat}}$. Then we have $\ret{n} \must{} \MkSet{n}$.
        \item We have that $\kw{next}(M) = \MkSet{M_1, \dots, M_k}$ is nonempty. Then $(F_\kw{run}~f~M)\Supp = (f~M_1 \vee \dots \vee f~M_k)\Supp = (f~M_1)\Supp \land \dots \land (f~M_k)\Supp$, whence there exists $S_i$ such that $M_i \must{} S_i$ for all $i \in \MkSet{1, \dots, k}$ by induction, and the result holds by taking $S = \bigcup_i S_i$. 
    \end{enumerate}
\end{proof}

\subsection{Adequacy for the convex powerdomain construction}

The termination behavior of an element $S : \CP(L X)$ of the convex powerdomain is a combination of that of the lower and upper powerdomain: $S$ may possibly contain divergent elements or denote a totally defined element of $\CP(X)$. Because $\CP(X)$ is the \emph{free} semilattice on $X$ it supports both a defined-membership relation $\in? : X \to \CP{}(L X) \to \I{}$ like the lower powerdomain and a total termination support $\Supp{} : \CP(L X) \to \I$ like the upper powerdomain. As in the case of the upper powerdomain we have an isomorphism $\CP(X) \to \MkSet{S : \CP(L X) \mid ~S\Supp{}}$ of semilattices that we will implicitly use in the following theorems.

\begin{thm}[Soundness]\label[thm]{thm:soundness-convex}
    For all $M : \tm{\F{\kw{nat}}}$ and $n : \tm{\kw{nat}}$, we have that $M \Downarrow n$ implies $\sem{n} \mathrel{\in?} \sem{M}$ and $M \must{} \MkSet{n_1, \dots, n_k}$ implies $\sem{M}\Supp{}$ and $\sem{M} = \singleton{\sem{n_1}} \vee \dots \vee \singleton{\sem{n_k}}$.
\end{thm}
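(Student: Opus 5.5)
The convex case combines the lower and upper soundness arguments. Because $\CP$ is the free semilattice on predomains, the two tools needed are available on $\CP(L\NatP)$. The first is the defined-membership relation $\in? : X \to \CP(L X) \to \I$, the unique semilattice homomorphism into $(\I^X,\vee)$ extending $?{=}$; this requires no order axiom, since $\CP$ is free on plain semilattices. The second is the total termination support $\Supp : \CP(L X) \to \I$, the unique homomorphism into $(\I,\land)$ extending $\supp$. I would prove the two halves of the statement separately, each by induction on the relevant operational derivation. In each inductive step I would rewrite $\sem{M}$ as $\kw{run}(M)$ using \cref{thm:coherence}, and then unfold $\kw{run}$ one step via $\kw{next}$. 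Alternatively, \cref{prop:sound-one-step} gives the one-step decomposition directly without passing through $\kw{run}$.

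For the may half, I induct on $M \Downarrow n$, following the proof of \cref{thm:soundness-lower}.
\begin{enumerate}
\item In the base case, $\sem{\ret{n}} = \singleton{\eta_\LL(\sem{n})}$, and $\sem{n} \mathrel{\in?} \singleton{\eta_\LL(\sem{n})}$ reduces to $\eta_\LL(\sem{n}){\downarrow} \mathbin{\angle} (\sem{n} = \sem{n})$, which is $\top$.
\item In the step case, $M \mapsto M'$ with $M' \Downarrow n$. By \cref{prop:nextS-mapsto}, $\kw{next}(M) = \MkSet{M_1,\dots,M_k}$ with $M' = M_i$ for some $i$. Then $\kw{run}(M) = \bigvee_j \kw{run}(M_j)$. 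Since $\in?$ is a homomorphism into $\I^X$ with pointwise finite joins, and these joins are computed as disjunctions by \cref{ax:finite-join}, the induction hypothesis at $M_i$ gives $\sem{n} \mathrel{\in?} \kw{run}(M)$.
\end{enumerate}

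For the must half, I induct on $M \must{} \MkSet{n_1,\dots,n_k}$, following the proof of \cref{thm:soundness-upper}. This uses the convex analogue of $\iota : \CP(X) \cong \MkSet{S : \CP(L X) \mid S\Supp}$, taken as given in the text just before the theorem, together with the fact that $\iota$ is a semilattice homomorphism.
\begin{enumerate}
\item In the base case, $\sem{\ret{n}} = \singleton{\eta_\LL\sem{n}} = \iota\singleton{\sem{n}}$, and its support is $\supp(\eta_\LL\sem{n}) = \top$.
\item In the step case, $\kw{run}(M) = \bigvee_i \kw{run}(M_i)$. The support becomes $\bigwedge_i \kw{run}(M_i)\Supp$, which holds by the induction hypotheses. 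The value is computed exactly as in the displayed chain of \cref{thm:soundness-upper}, using that $\iota$ preserves $\vee$.
\end{enumerate}

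The main obstacle is the convex version of \cref{prop:upper-homomorphism}. Concretely, one must verify that $\MkSet{S : \CP(L X)\mid S\Supp}$ carries the restricted semilattice structure (it is closed under $\vee$ because $\Supp$ sends $\vee$ to $\land$) and that $\iota = \CP\eta_\LL$ is a homomorphism. The second point is immediate because $\CP\eta_\LL$ is a homomorphism of free semilattices. This is simpler than in the upper case, because no path condition has to be checked: \cref{prop:upper-semilattice} only needs its predomain and semilattice parts, not the path construction. Everything else is routine bookkeeping identical to the lower and upper proofs.
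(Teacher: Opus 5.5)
Your proposal is correct and follows the paper's argument, which it leaves implicit by saying the convex case combines the earlier proofs. You run the proof of \cref{thm:soundness-lower} for $\Downarrow$ using the convex defined-membership homomorphism, and the proof of \cref{thm:soundness-upper} for $\must$ using the convex total termination support and the homomorphism $\iota = \CP\eta_\LL$, each by induction on the operational derivation, with $\kw{run}$ unfolded via \cref{thm:coherence}. Your side remark is also valid: \cref{prop:sound-one-step} already gives the one-step decomposition of $\sem{M}$, so the soundness halves do not actually need the logical-relations direction of \cref{thm:coherence}.
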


\begin{thm}[Adequacy]\label[thm]{thm:adequacy-convex}
    For all $M : \tm{\F{\kw{nat}}}$ and $u : \NatP$, $u \mathrel{\in?} \sem{M}$ implies there exists some $n : \tm{\kw{nat}}$ such that $M \Downarrow n$ and $\sem{n} = u$; moreover, if $\sem{M}\Supp{}$ holds then $M \must{} S$ for a unique finite set $S$.
\end{thm}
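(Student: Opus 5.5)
The plan is to reuse, essentially verbatim, the two fixed-point induction arguments for \cref{thm:adequacy-lower} and \cref{thm:adequacy-upper}, now instantiated at $\CP$. First, by \cref{thm:coherence} we replace $\sem{M}$ by $\kw{run}(M)$ throughout, where $\kw{run}$ is now the fixed point of $F_\kw{run}$ on $\tm{\F{\kw{nat}}} \to \CP(L\NatP)$. The two clauses of the statement are proved by two independent fixed-point inductions.

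For the may part, I take the admissible predicate
\[\Phi_{\mathrm{may}} = \MkSet{f \mid \forall M, u.~ u \mathrel{\in?} f(M) \to \phi_{M,u}},\]
with $\phi_{M,u}$ the \I{}-proposition from the proof of \cref{thm:adequacy-lower} (using \cref{prop:nat} and \cref{ax:cdj}). Here $\in?$ is the unique semilattice homomorphism $\CP(L\NatP) \to \I^{\NatP}$ extending $?{=}$, using that $(\I,\vee)$ is a semilattice and $\CP$ is free. Admissibility follows from \cref{cor:intersection,prop:admissible}, once we check $u \mathrel{\in?} \singleton{\bot} = \bot$. The closure step $F_\kw{run}\Phi_{\mathrm{may}} \subseteq \Phi_{\mathrm{may}}$ splits on $\kw{next}(M)$ exactly as before. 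It uses only that $\in?$ sends $\vee$ to the \I{}-join, and that this join is a disjunction in $\Omega$ by \cref{ax:finite-join}.

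For the must part, I define $\Supp : \CP(L\NatP) \to \I$ as the unique semilattice homomorphism extending $\supp$ into $(\I,\land)$, and take
\[\Phi_{\mathrm{must}} = \MkSet{f \mid \forall M.~ f(M)\Supp \to \Sigma_{S}.~M \must S}.\]
The right-hand side is an \I{}-proposition by \cref{prop:must-prop,prop:must-stratified} and \cref{ax:cdj}. It is admissible because $\singleton{\bot}\Supp = \bot$. The closure step is the computation from \cref{thm:adequacy-upper}: $(f M_1 \vee \dots \vee f M_k)\Supp = \bigwedge_i (f M_i)\Supp$, so each $M_i$ must-evaluates and we take $S = \bigcup_i S_i$. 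Uniqueness of $S$ is \cref{prop:must-prop}.

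The main obstacle is merely confirming that both homomorphisms exist on $\CP$. The paper asserts this, but the point is that $\CP$ is the free semilattice in $\CCat$, so any predomain semilattice qualifies; no path inequality is required. Since $\I$ with either $\vee$ or $\land$ is a predomain semilattice, both extensions exist and are unique, and nothing further needs checking.
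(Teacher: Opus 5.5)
Your proposal is correct and follows the paper's route. The paper gives no separate argument for the convex case. It only observes that $\CP$, being the free semilattice in $\CCat$, carries both $\in?$ and $\Supp$, so the fixed-point inductions of \cref{thm:adequacy-lower,thm:adequacy-upper} go through unchanged. Your two independent inductions make that argument explicit, including the checks that both predicates vanish at $\singleton{\bot}$ and that the binary join in $\I$ is a genuine disjunction.
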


\section{Conclusion}\label{sec:conclusion}

In this paper we have taken the first steps in developing the theory of powerdomains in synthetic domain theory and their application to denotational semantics of nondeterminism. We expect that the results obtained will provide an integrated foundation for both programming and verifying nondeterministic programs. We conclude with remarks regarding models and suggest directions for future work.

\subsection{Models}

The powerdomain constructions of \cref{sec:powerdomains} and the proof relating the denotational and observational semantics (\cref{thm:coherence}) may be carried out in an arbitrary topos \ECat{} equipped with a complete dominance $\I{}$ closed under $\bot$ and satisfying Phoa's principle; this includes both sheaf topos models~\cite{fiore-plotkin:1996,fiore-rosolini:1997,sterling-harper:2022,van-oosten:2000} and realizability models~\cite{phoa:1991,van-oosten:2000}. As for the operational semantics, \emph{soundness} theorems (\cref{thm:soundness-lower,thm:soundness-upper,thm:soundness-convex}) in general require that \I{} is closed under binary joins, while \emph{adequacy} theorems (\cref{thm:adequacy-lower,thm:adequacy-upper,thm:adequacy-convex}) require in addition countable joins. The dominance is closed under binary joins in some sheaf models~\cite{sterling-harper:2022} and countable ones when the site chosen ensures that the Yoneda embedding preserves countable coproducts~\cite{fiore-rosolini:1997:cpos}; some realizability models~\cite{van-oosten:2000} also validate the stronger property of closing the dominance under joins of countable $\I$-propositions. 

\subsection{Future work}

\subsubsection{Characterization of the path ordering}

As mentioned in \cref{sec:synthetic-powerdomains}, our work has been inspired by prior work~\cite{phoa-taylor:1990} on the characterization of the intrinsic order structure of the synthetic convex powerdomain. While \opcit{} studied the observational order as opposed to the path order considered in this paper, we conjecture that their method may be adapted to prove half of the characterization, given that the interval \I{} is sufficiently well-behaved. This means that the ``extrinsic order'' given by the lower, upper, and Egli-Milner order on the lower, upper, and convex powerdomains are each contained within the respective path orders. For the converse direction, it can be seen that for types $X$ with \I{}-equality the path order on $\LP X$ is contained in the extrinsic lower order, \ie{} containment ordering with respect to the defined-membership relation (\cref{def:def-member}). However, characterizations of the path space of these constructions in general probably require new insights, perhaps in the form of additional axiomatic assumptions on the interval object. 

\subsubsection{Synthetic \vs{} analytic domains}

As for the general relationship between analytic and synthetic domains, there appear to be two flavors of such works. On the one hand there are extensions of models of axiomatic domain theory to (sheaf) models of synthetic domain theory~\cite{fiore-plotkin:1996,sterling-harper:2022}; these works provide a way to carry out classical constructions in synthetic domain theory. In the case of powerdomains we would like to find known categories of analytic domains that our constructions extend, \eg{} Scott domains. On the other hand one may investigate the ``analytic domain theory'' of synthetic domains, such as the internal characterization of \ocpo{}s due to Fiore and Rosolini~\cite{fiore-rosolini:1997:cpos}. Powerdomains aside, this connection appears to be sorely underdeveloped, and it would greatly serve our understanding of synthetic domain theory to investigate the established concepts of analytic domain theory (directedness, algebraicity, \etc{}) within this more general setting. 

\section*{Acknowledgment}

We are grateful to the anonymous reviewers for the comments and corrections.

\bibliographystyle{entics}
\bibliography{refs}

\begin{thebibliography}{10}
\providecommand{\url}[1]{\texttt{#1}}
\providecommand{\urlprefix}{ }
\providecommand{\eprint}[2][]{\url{#2}}

\bibitem{barr-wells:1985}
Barr, M. and C.~Wells, \emph{Toposes, Triples, and Theories}, Grundlehren der mathematischen Wissenschaften, Springer-Verlag (1985), ISBN 9780387961156.
\newline\urlprefix\url{https://books.google.co.jp/books?id=e-iMPwAACAAJ}

\bibitem{sgdt:2011}
Birkedal, L., R.~E. Mogelberg, J.~Schwinghammer and K.~Stovring, \emph{First steps in synthetic guarded domain theory: Step-indexing in the topos of trees}, in: \emph{2011 IEEE 26th Annual Symposium on Logic in Computer Science}, pages 55--64 (2011).
\newline\urlprefix\url{https://doi.org/10.1109/LICS.2011.16}

\bibitem{fiore:1997}
Fiore, M.~P., \emph{An enrichment theorem for an axiomatisation of categories of domains and continuous functions}, Mathematical Structures in Computer Science \textbf{7}, pages 591--618 (1997), ISSN 1469-8072, 0960-1295.
\newline\urlprefix\url{https://doi.org/10.1017/S0960129597002429}

\bibitem{fiore-pitts-steenkamp:2021}
Fiore, M.~P., A.~M. Pitts and S.~C. Steenkamp, \emph{{Quotients, inductive types, and quotient inductive types}}, lmcs \textbf{{Volume 18, Issue 2}} (2022).
\newline\urlprefix\url{https://doi.org/10.46298/lmcs-18(2:15)2022}

\bibitem{fiore-plotkin:1996}
Fiore, M.~P. and G.~D. Plotkin, \emph{An extension of models of axiomatic domain theory to models of synthetic domain theory}, in: D.~van Dalen and M.~Bezem, editors, \emph{Computer Science Logic, 10th International Workshop, {CSL} '96, Annual Conference of the EACSL, Utrecht, The Netherlands, September 21-27, 1996, Selected Papers}, volume 1258 of \emph{Lecture Notes in Computer Science}, pages 129--149, Springer.
\newline\urlprefix\url{https://doi.org/10.1007/3-540-63172-0\_36}

\bibitem{fiore-rosolini:1997:cpos}
Fiore, M.~P. and G.~Rosolini, \emph{The category of cpos from a synthetic viewpoint}, in: S.~D. Brookes and M.~W. Mislove, editors, \emph{Thirteenth Annual Conference on Mathematical Foundations of Progamming Semantics, {MFPS} 1997, Carnegie Mellon University, Pittsburgh, PA, USA, March 23-26, 1997}, volume~6 of \emph{Electronic Notes in Theoretical Computer Science}, pages 133--150, Elsevier.
\newline\urlprefix\url{https://doi.org/10.1016/S1571-0661(05)80165-3}

\bibitem{fiore-rosolini:1997}
Fiore, M.~P. and G.~Rosolini, \emph{Two models of synthetic domain theory} \textbf{116}, pages 151--162, ISSN 0022-4049.
\newline\urlprefix\url{https://doi.org/10.1016/S0022-4049(96)00164-8}

\bibitem{frumin-etal:2018}
Frumin, D., H.~Geuvers, L.~Gondelman and N.~v.~d. Weide, \emph{Finite sets in homotopy type theory}, in: \emph{Proceedings of the 7th ACM SIGPLAN International Conference on Certified Programs and Proofs}, CPP 2018, page 201–214, Association for Computing Machinery, New York, NY, USA (2018), ISBN 9781450355865.
\newline\urlprefix\url{https://doi.org/10.1145/3167085}

\bibitem{gunter-etal:1989}
Gunter, C.~A., P.~D. Mosses and D.~S. Scott, \emph{Semantic {Domains} and {Denotational} {Semantics}} .

\bibitem{hennessy-plotkin:1979}
Hennessy, M. C.~B. and G.~D. Plotkin, \emph{Full abstraction for a simple parallel programming language}, in: J.~Bečvář, editor, \emph{Mathematical {Foundations} of {Computer} {Science} 1979}, pages 108--120, Springer, Berlin, Heidelberg (1979), ISBN 978-3-540-35088-0.
\newline\urlprefix\url{https://doi.org/10.1007/3-540-09526-8_8}

\bibitem{hyland:1991}
Hyland, J. M.~E., \emph{First steps in synthetic domain theory}, in: A.~Carboni, M.~C. Pedicchio and G.~Rosolini, editors, \emph{Category Theory}, pages 131--156, Springer Berlin Heidelberg, ISBN 978-3-540-46435-8.

\bibitem{katsumata:2005}
Katsumata, S.-y., \emph{A {Semantic} {Formulation} of tt-{Lifting} and {Logical} {Predicates} for {Computational} {Metalanguage}}, in: D.~Hutchison, T.~Kanade, J.~Kittler, J.~M. Kleinberg, F.~Mattern, J.~C. Mitchell, M.~Naor, O.~Nierstrasz, C.~Pandu~Rangan, B.~Steffen, M.~Sudan, D.~Terzopoulos, D.~Tygar, M.~Y. Vardi, G.~Weikum and L.~Ong, editors, \emph{Computer {Science} {Logic}}, volume 3634, pages 87--102, Springer Berlin Heidelberg, Berlin, Heidelberg (2005), ISBN 978-3-540-28231-0 978-3-540-31897-2. Series Title: Lecture Notes in Computer Science.
\newline\urlprefix\url{https://doi.org/10.1007/11538363_8}

\bibitem{kock:1991}
Kock, A., \emph{Algebras for the partial map classifier monad}, in: A.~Carboni, M.~C. Pedicchio and G.~Rosolini, editors, \emph{Category Theory}, pages 262--278, Springer Berlin Heidelberg, Berlin, Heidelberg (1991), ISBN 978-3-540-46435-8.

\bibitem{levy:2001}
Levy, P.~B., \emph{Call-By-Push-Value}, Ph.D. thesis (2001).
\newline\urlprefix\url{https://www.cs.bham.ac.uk/~pbl/papers/thesisqmwphd.pdf}

\bibitem{matache-moss-staton:2022}
Matache, C., S.~Moss and S.~Staton, \emph{Concrete categories and higher-order recursion: With applications including probability, differentiability, and full abstraction}, in: \emph{Proceedings of the 37th Annual ACM/IEEE Symposium on Logic in Computer Science}, LICS '22, Association for Computing Machinery, New York, NY, USA (2022), ISBN 9781450393515.
\newline\urlprefix\url{https://doi.org/10.1145/3531130.3533370}

\bibitem{milne-milner:1979}
Milne, G. and R.~Milner, \emph{Concurrent processes and their syntax}, J. ACM \textbf{26}, page 302–321 (1979), ISSN 0004-5411.
\newline\urlprefix\url{https://doi.org/10.1145/322123.322134}

\bibitem{modelberg-vezzosi:2021}
Møgelberg, R. and A.~Vezzosi, \emph{Two guarded recursive powerdomains for applicative simulation}, Electronic Proceedings in Theoretical Computer Science \textbf{351}, pages 200--217 (2021).
\newline\urlprefix\url{https://doi.org/10.4204/EPTCS.351.13}

\bibitem{nakano:2000}
Nakano, H., \emph{A modality for recursion}, in: \emph{Proceedings of the Fifteenth Annual IEEE Symposium on Logic in Computer Science}, pages 255--266, IEEE Computer Society, ISSN 1043-6871.
\newline\urlprefix\url{https://doi.org/10.1109/LICS.2000.855774}

\bibitem{niu-sterling-harper:2024}
Niu, Y., J.~Sterling and R.~Harper, \emph{Cost-sensitive computational adequacy of higher-order recursion in synthetic domain theory} (2024). 40th Conference on Mathematical Foundations of Programming Semantics (MFPS XXXX), \eprint{2404.00212}.
\newline\urlprefix\url{https://arxiv.org/abs/2404.00212}

\bibitem{phoa:1991}
Phoa, W., \emph{Domain Theory in Realizability Toposes}, Ph.D. thesis, University of Edinburgh.

\bibitem{phoa:1994}
Phoa, W., \emph{From {Term} {Models} to {Domains}}, Information and Computation \textbf{109}, pages 211--255 (1994), ISSN 0890-5401.
\newline\urlprefix\url{https://doi.org/https://doi.org/10.1006/inco.1994.1017}

\bibitem{phoa-taylor:1990}
Phoa, W. and P.~Taylor, \emph{The {Synthetic} {Plotkin} {Powerdomain}}  (1990).

\bibitem{plotkin:1976}
Plotkin, G.~D., \emph{A {Powerdomain} {Construction}}, SIAM Journal on Computing \textbf{5}, pages 452--487 (1976). \_eprint: https://doi.org/10.1137/0205035.
\newline\urlprefix\url{https://doi.org/10.1137/0205035}

\bibitem{plotkin:1981}
Plotkin, G.~D., \emph{A structural approach to operational semantics}, Technical Report DAIMI FN-19, University of Aarhus (1981).
\newline\urlprefix\url{http://citeseer.ist.psu.edu/plotkin81structural.html}

\bibitem{pugh-sterling:2025}
Pugh, L. and J.~Sterling, \emph{When is the partial map classifier a sierpiński cone?}, in: \emph{2025 40th Annual ACM/IEEE Symposium on Logic in Computer Science (LICS)}, pages 718--731 (2025).
\newline\urlprefix\url{https://doi.org/10.1109/LICS65433.2025.00060}

\bibitem{reus:1995}
Reus, B., \emph{Program {Verification} in {Synthetic} {Domain} {Theory}}, Ph.D. thesis.

\bibitem{rijke-etal:2020}
Rijke, E., M.~Shulman and B.~Spitters, \emph{Modalities in homotopy type theory} (2020). ArXiv:1706.07526 [cs, math].
\newline\urlprefix\url{https://doi.org/10.23638/LMCS-16(1:2)2020}

\bibitem{rosolini:1986}
Rosolini, G., \emph{Continuity and effectiveness in topoi}  (1986).

\bibitem{simpson:2004}
Simpson, A., \emph{Computational adequacy for recursive types in models of intuitionistic set theory} \textbf{130}, pages 207--275, ISSN 0168-0072. Papers presented at the 2002 IEEE Symposium on Logic in Computer Science (LICS).
\newline\urlprefix\url{https://doi.org/10.1016/j.apal.2003.12.005}

\bibitem{simpson:1999}
Simpson, A.~K., \emph{Computational {{Adequacy}} in an {{Elementary Topos}}}, in: G.~Gottlob, E.~Grandjean and K.~Seyr, editors, \emph{Computer {{Science Logic}}}, Lecture {{Notes}} in {{Computer Science}}, pages 323--342, Springer, Berlin, Heidelberg (1999), ISBN 978-3-540-48855-2.
\newline\urlprefix\url{https://doi.org/10.1007/10703163_22}

\bibitem{smyth:1976}
Smyth, M.~B., \emph{Powerdomains}, in: A.~Mazurkiewicz, editor, \emph{Mathematical {Foundations} of {Computer} {Science} 1976}, pages 537--543, Springer, Berlin, Heidelberg (1976), ISBN 978-3-540-38169-3.
\newline\urlprefix\url{https://doi.org/10.1007/3-540-07854-1_226}

\bibitem{smyth:1983}
Smyth, M.~B., \emph{Power domains and predicate transformers: {A} topological view}, in: J.~Diaz, editor, \emph{Automata, {Languages} and {Programming}}, pages 662--675, Springer, Berlin, Heidelberg (1983), ISBN 978-3-540-40038-7.
\newline\urlprefix\url{https://doi.org/10.1007/BFb0036946}

\bibitem{sterling:2024}
Sterling, J., \emph{Tensorial structure of the lifting doctrine in constructive domain theory} (2024). \eprint{2312.17023}.
\newline\urlprefix\url{https://arxiv.org/abs/2312.17023}

\bibitem{sterling-gratzer-birkedal:2023}
Sterling, J., D.~Gratzer and L.~Birkedal, \emph{Denotational semantics of general store and polymorphism} (2023). ArXiv:2210.02169 [cs].
\newline\urlprefix\url{https://doi.org/10.48550/arXiv.2210.02169}

\bibitem{sterling-harper:2022}
Sterling, J. and R.~Harper, \emph{Sheaf semantics of termination-insensitive noninterference}, pages 5:1--5:19 (2022). \eprint{2204.09421}.
\newline\urlprefix\url{https://doi.org/10.4230/LIPIcs.FSCD.2022.5}

\bibitem{sterling-ye:2025}
Sterling, J. and L.~Ye, \emph{Domains and {Classifying} {Topoi}} (2025). ArXiv:2505.13096 [cs].
\newline\urlprefix\url{https://doi.org/10.48550/arXiv.2505.13096}

\bibitem{van-oosten:2000}
van Oosten, J. and A.~K. Simpson, \emph{Axioms and (counter)examples in synthetic domain theory}, Annals of Pure and Applied Logic \textbf{104}, pages 233--278 (2000), ISSN 01680072.
\newline\urlprefix\url{https://doi.org/10.1016/S0168-0072(00)00014-2}

\end{thebibliography}

\end{document}